\newcommand{\FOL}{\text{FO}}
\newcommand{\timeDef}{\textit{def}}
\newcommand{\FOLOrder}{{\FOL[<]}}
\newcommand{\TFOL}{{\FOL[<,\allSetTraces]}}
\newcommand{\TraceTFOL}{\setsetTraces\text{-}\TFOL}
\newcommand{\TimeTFOL}{\mathord{<}\text{-}\TFOL}
\newcommand{\GFOL}{\Globally\text{-}\TFOL}
\newcommand{\modelsTFOL}{\models_{\allSetTraces}}
\newcommand{\timeVar}{i}
\newcommand{\timeVars}{\mathcal{I}}
\newcommand{\SetM}{\textbf{M}}
\newcommand{\M}{M}
\newcommand{\ind}[3]{\Ind_{#3}{(#1,#2)}}
\newcommand{\Ind}{\textit{ind}}
\newcommand{\sync}{\textit{sync}}
\newcommand{\async}{\textit{async}}
\newcommand{\novisible}{\textit{hidden}}
\newcommand{\pointW}{\textit{point}}
\newcommand{\seg}{\textit{seg}}
\newcommand{\syncS}{\pointW}
\newcommand{\syncO}{\seg}
\newcommand{\Prop}{X}
\newcommand{\trace}{\tau}
\newcommand{\val}{v}
\newcommand{\vals}{\mathbb{V}}
\newcommand{\Until}{\mathbin\mathbf{U}}
\newcommand{\Next}{\mathop\mathbf{X}}
\newcommand{\Globally}{\mathop\mathbf{G}}
\newcommand{\Eventually}{\mathop\mathbf{F}}
\newcommand{\generatedSet}[1]{\llbracket #1 \rrbracket}
\newcommand{\setTraces}{T}
\newcommand{\allSetTraces}{\mathbb{T}}
\newcommand{\system}{S}
\newcommand{\setsetTraces}{\textbf{T}}
\newcommand{\traceVar}{\pi}
\newcommand{\traceAssign}{\Pi}
\newcommand{\emptyAssign}{\traceAssign^{\emptyset}}
\newcommand{\flatT}[1]{\langle  #1 \rangle}
\newcommand{\modelsHyper}{\models_{H}}
\newcommand{\notmodelsHyper}{\not\models_{H}}
\newcommand{\Var}{\mathcal{V}}
\newcommand{\fr}{\text{free}}
\newcommand{\LTLC}{\mathbb{C}}
\newcommand{\LTLX}{\mathbb{X}}
\newcommand{\LTLBox}{\mathbb{G}}
\newcommand{\HyperLTLC}{2^{\LTLC}}
\newcommand{\Equiv}{\approx}
\newcommand{\wit}{f}
\newcommand{\pIn}{\textbf{ in }}
\newcommand{\pTrue}{\textbf{True}}
\newcommand{\pInput}{\textbf{input}}
\newcommand{\pOutput}{\textbf{output}}
\newcommand{\pDefault}{\textit{default}}
\newcommand{\tAnd}{\text{ and }}
\newcommand{\tOr}{\text{ or }}
\newcommand{\tIf}{\text{ if }}
\newcommand{\tIff}{\text{ iff }}
\newcommand{\nat}{\mathbb{N}}
\newcommand{\state}{\emph{state}}
\newcommand{\DDD}{\mathord{\ldots}}
\title{Flavours of Sequential Information Flow} %TODO Please add
\titlerunning{Flavours of Sequential Information Flow} %TODO optional, please use if title is longer than one line
\author{Ezio Bartocci}{Technische Universit\"at Wien, Vienna, Austria \and \url{http://www.eziobartocci.com} }{ezio.bartocci@tuwien.ac.at}{https://orcid.org/0000-0002-8004-6601}{}%TODO mandatory, please use full name; only 1 author per \author macro; first two parameters are mandatory, other parameters can be empty. Please provide at least the name of the affiliation and the country. The full address is optional
\author{Thomas Ferr\`ere}{Imagination Technologies, Kings Langley, UK }{thomas.ferrere@imgtec.com}{https://orcid.org/0000-0001-5199-3143}{}%TODO mandatory, please use full name; only 1 author per \author macro; first two parameters are mandatory, other parameters can be empty. Please provide at least the name of the affiliation and the country. The full address is optional
\author{Thomas A. Henzinger}{IST Austria, Klosterneuburg, Austria \and \url{http://pub.ist.ac.at/~tah/} }{tah@ist.ac.at}{https://orcid.org/0000-0002-2985-7724}{}%TODO mandatory, please use full name; only 1 author per \author macro; first two parameters are mandatory, other parameters can be empty. Please provide at least the name of the affiliation and the country. The full address is optional
\author{Dejan Nickovic}{AIT Austrian Institute of Technology, Vienna, Austria }{dejan.nickovic@ait.ac.at}{https://orcid.org/0000-0001-5468-0396}{}%TODO mandatory, please use full name; only 1 author per \author macro; first two parameters are mandatory, other parameters can be empty. Please provide at least the name of the affiliation and the country. The full address is optional
\author{Ana Oliveira da Costa}{Technische Universit\"at Wien, Vienna, Austria }{ana.costa@tuwien.ac.at}{https://orcid.org/0000-0002-8741-5799}{This work was supported by the Austrian FWF project W1255-N23}%TODO mandatory, please use full name; only 1 author per \author macro; first two parameters are mandatory, other parameters can be empty. Please provide at least the name of the affiliation and the country. The full address is optional
\authorrunning{E. Bartocci et al.} %TODO mandatory. First: Use abbreviated first/middle names. Second (only in severe cases): Use first author plus 'et al.'
\keywords{Hyperproperties, Sequential Information-flow, Expressiveness} %TODO mandatory; please add comma-separated list of keywords
\begin{document}

\maketitle

%TODO mandatory: add short abstract of the document
\begin{abstract}
Information-flow policies prescribe which information is available to a given user or subsystem. 
We study the problem of specifying such properties in reactive systems, which may require dynamic changes in information-flow restrictions between their states.
We formalize several flavours of \emph{sequential information-flow}, which cover different assumptions about the semantic relation between multiple observations of a system.  
Information-flow specification falls into the category of \emph{hyperproperties}. 
We define different variants of sequential information-flow specification using a first-order logic with both trace quantifiers and temporal quantifiers called Hypertrace Logic.
We prove that HyperLTL, equivalent to a subset of Hypertrace Logic with restricted quantifier prefixes, cannot specify the majority of the studied 
two-state independence variants.  
For our results, we introduce a notion of equivalence between sets of traces that cannot be distinguished by certain classes of formulas in Hypertrace Logic. 
This presents a new approach to proving inexpressiveness results for logics such as HyperLTL.
\end{abstract}

% !TeX root = main.tex
\section{Introduction}
\label{sec:intro}

Information-flow policies specify restrictions on what information can be shared within components of a system or its users. Information that must be kept secret may be deduced by combining multiple 
observations of the non-secret behavior of the system. For this reason, properties that characterize information-flow policies are often not properties of a single trace, but rather properties of sets of traces, that is, \emph{hyperproperties} \cite{ClarksonS10}.

A basic concept for specifying information flows can be found in the notion 
of {\em independence} \cite{gradel2013dependence, ClarksonS10}, defined as a binary relation between observable variables of a system.
We say that $y$ is independent of $x$, denoted by \(\ind{x}{y}{}\), 
to specify that no information can flow from $x$ to $y$. Over a given set of traces, 
the independence relation \(\ind{x}{y}{}\) is captured by the formula \(\forall \traceVar \forall \traceVar' \exists \traceVar''\ x_{\traceVar} = x_{\traceVar''} \wedge y_{\traceVar'} = y_{\traceVar''}\), where 
$\pi$ denotes an observation of the system and $x_{\pi}$ the value $x$ observed in $\pi$.
We introduce {\em two-state independence}, a simple, yet fundamental sequential information-flow requirement. 
It can be used for instance to capture {\em declassification}~\cite{sabelfeld2009declassification}, 
a process in which previously secret information is allowed to be released.
Given $x$, $y$, $z$ and $\state$ variables, the two-state independence is stated as follows:
%\begin{tcolorbox}
``The value of \(y\) is independent from the value of \(x\) until 
$\state$ changes,
%a given action $\pChange$ happens, 
and from then on the value of \(z\) is independent from the value of \(x\).''
%\end{tcolorbox}

%\noindent Here $x$, $y$ and $z$ are data variables, and $\state$ is a Boolean variable used to encode the two states ($0$ and $1$).

The program $\mathcal{P}$, shown in 
Algorithm~\ref{alg:pk}, intuitively satisfies a two-state independence property between $x$, $y$ and $z$.
The program starts in the initial 
state ($\state = 0$) and in every subsequent step, the next state is 
non-deterministically 
assigned via the channel $c_1$. Once $\mathcal{P}$ changes from 
$\state=0$ to $\state=1$, it remains in that state forever. 
The value of $x$ is non-deterministically assigned via 
channel $c_0$ regardless of the current state. When in state 
$0$, $\mathcal{P}$ assigns $x$ to $z$ and a default 
value to $y$. When in state 
$1$, $\mathcal{P}$ assigns $y$ to $z$ and a default value 
to $z$. Note that the default value can be $0$, $1$ or a non-deterministic 
boolean value set at 
the start of the program execution.
The program finally exposes $y$ and $z$ via 
channels $c_2$ and $c_3$, respectively. 
Program $\mathcal{P}$ satisfies the two-state independence 
requirement by ensuring that \(\ind{x}{y}{}\) holds in the first state, and that \(\ind{x}{z}{}\) holds in the second state. 
Table~\ref{tab:ex:motivation} 
shows a set of traces observed from the input/output interface of $\mathcal{P}$ 
and that are consistent with the two-state independence requirement. 
The first two traces, \(\trace_1\) and \(\trace_2\), transition to the second state at time 1, while \(\trace_3\) and \(\trace_4\) transition at time 2 and 3, respectively. Then, for the second state of the specification (i.e.\ after \(\state=1\)), we need to compare the observations at time 1 of \(\trace_1\) and \(\trace_2\) with
observations at time 2 and 3 of \(\trace_3\) and \(\trace_4\), respectively.

\begin{table}
\begin{minipage}{0.4\textwidth}
\begin{algorithm}[H]
\small
\SetAlgoLined
\SetKwRepeat{Do}{do}{while}
%\SetKwInOut{Input}{Input}
%\Input{environment}
%\BlankLine
 %\(y = 0\); \(z = 0\)\;
 %\(x\) = action.getX()\;
 \(\state:=0\)\;
 \Do{\(\pTrue\)}{
  \If{\em (\( \state = 0\))}
  %\tcp*{assigns to \(x\) the boolean in channel \(c_{0}\)}
  {\(\pInput(c_{1}, \state \pIn \{0,1\})\);}
  \(\pInput(c_{0}, x \pIn \{0,1\})\);\\
  \eIf{\em (\(\state = 0\))}
  {\(z := x\); \(y = \pDefault\);
  %\tcp*{assigns to \(y\) the \(\pDefault\) value}
  }
  {\(y := x\); \(z = \pDefault\);}
  %\(\pOutput(c_{H},x)\)\tcp*{output in channel \(c_H\)}
  %\(\pOutput(c_{L},\pChange)\); 
  \(\pOutput(c_{2}, y)\)\;
  %\tcp*{output \(y\) in channel \(c_{2}\)}
  \(\pOutput(c_{3}, z)\)\;
 }
 \caption[Program]{Program \(\mathcal{P}\) for two-state independence.}
 \label{alg:pk}
\end{algorithm}
\end{minipage} \quad
\begin{minipage}{0.56\textwidth}
%\begin{table}[H]
\caption{A set of traces $T$ observed from the inputs and outputs of $\mathcal{P}$, with \(\pDefault=0\), where white cells denote that $\state =0$ and gray cells denote that $\state=1$. 
\label{tab:ex:motivation}}
\centering
\small 
\begin{tabular}{c|ccc|ccc|ccc|ccc}
\multicolumn{1}{c}{}& \multicolumn{12}{c}{Time}\\
\multicolumn{1}{c}{}&\multicolumn{3}{c}{0}&\multicolumn{3}{c}{1}&\multicolumn{3}{c}{$2$} & \multicolumn{3}{c}{$3$} \\
&x & y & z & x & y & z & x & y & z & x & y & z\\ 
\hline 
%\(\trace\)&0 & 0 & 0 & 0 & 0 & 0 & \cellcolor{gray}0 & \cellcolor{gray}0 & \cellcolor{gray}0 & action at \(t= 2\) and \(x\) stays 0\\
\(\trace_1\)&0 & 0 & 0 & 
\cellcolor{lightgray}1 & \cellcolor{lightgray}1 & \cellcolor{lightgray}0 & \cellcolor{lightgray}1 & \cellcolor{lightgray}1 & \cellcolor{lightgray}0 &
\cellcolor{lightgray}1 & \cellcolor{lightgray}1 & \cellcolor{lightgray}0 \\
\(\trace_2\)& 1 & 0 & 1 & 
\cellcolor{lightgray}1 & \cellcolor{lightgray}1 & \cellcolor{lightgray}0 &
\cellcolor{lightgray}1 & \cellcolor{lightgray}1 & \cellcolor{lightgray}0 &
\cellcolor{lightgray}1 & \cellcolor{lightgray}1 & \cellcolor{lightgray}0\\
\(\trace_3\)&1 & 0 & 1 & 1 & 0 & 1 & \cellcolor{lightgray}0 & \cellcolor{lightgray}0 & \cellcolor{lightgray}0 & \cellcolor{lightgray}0 & \cellcolor{lightgray}0 & \cellcolor{lightgray}0 \\
\(\trace_4\)&0 & 0 & 0 & 1 & 0 & 1 & 0 & 0 & 0 & \cellcolor{lightgray}1 & \cellcolor{lightgray}1 & \cellcolor{lightgray}0\\
%\multicolumn{10}{c}{}
\end{tabular}
\vspace{2cm}
\end{minipage}
\end{table}

Table~\ref{tab:ex:motivation} illustrates one possible way to observe 
the program $\mathcal{P}$. However, the observation of 
program executions may not be 
uniquely defined. For example, an observer may have the ability to access the 
internal program memory, while another observer may only observe 
its input/output interface. The power of observer has a significant 
impact on the specification of information flow requirements.

In this paper, we study multiple flavours of sequential information 
flow, according to our assumptions about the observer. We focus on 
the two-state independence requirement as the simplest hyperproperty 
that exposes the main features of sequential information flow. 

Logical specification of sequential information flow (and other hyperproperties) 
requires (implicit or explicit) quantification over time and traces. We refer to this family of linear-time specification languages as {\em hyperlogics}.
We introduce Hypertrace Logic, a two-sorted first-order logic that allows us to express and compare 
a rich variety of sequential hyperproperties and specification languages for hyperproperties.

We identify two natural interpretations of two-state independence, 
based on {\em point} and {\em segment} semantics. 
In point semantics, an observation at a given execution point is 
independent from observations at all other execution points. 
Segment semantics relates entire segments of observations, where each segment 
is aligned with a specification state.
We also identify 
three types of state transition actions,  
{\em synchronous}, {\em asynchronous} and {\em hidden}.
For example, the set of traces $T$ shown in Table~\ref{tab:ex:motivation} has an asynchronous action and satisfies a two-state independence under the segments semantics, with \(\ind{x}{y}{}\) and \(\ind{x}{z}{}\) interpreted over segments of traces associated to states 0 and 1 respectively.
Every combination of independence interpretation and action 
type defines a different assumption about program observations.
We provide a mathematical definition using Hypertrace Logic of the 
two-state independence for each such combination.

We then study the expressiveness of all the presented two-state independence flavours with respect to HyperLTL, the de-facto standard for specifying, analysing and 
model-checking hyperproperties. We show that HyperLTL cannot express the 
majority of the studied two-state independence variants. 
Our results emphasize the important role 
that the order of time and trace quantifiers play in 
hyperproperties and in addition highlight the need to 
explore, also noted independently in \cite{lics2021,cav2021}, 
more asynchronous variants of hyperlogics.

\noindent The contributions of this paper can be summarized as follows:
\begin{itemize}
    \item We investigate multiple flavours of sequential information flow through a generic first-order formalism, which relieves us of the burden of specific syntactic choices.
    \item We present a comprehensive expressiveness study of the simplest sequential information-flow property---namely, two-state independence---with 
    respect to first-order fragments and the popular HyperLTL formalism. 
    \item We devise a new 
    systematic technique to prove that logics such as to HyperLTL cannot express a given property. 
    This proof strategy is of independent interest and can be used in other expressiveness proofs.
\end{itemize}

% !TeX root = main.tex

\section{First Order Logic for Trace Sets}
\label{sec:FOL:TraceSets}

We define a two-sorted first-order logic to formalize the hyperproperties we are interested in.
We extend the first-order logic of linear order with equality, \(\FOLOrder\) \cite{Kamp1968}, with a trace sort \(\allSetTraces\). 
%We  denote  the  sort for time points by \(\sortTime\), i.e.\  \((\sortTime,<)\) defines a linear order.
As we are interested in discrete linear-time,  
we interpret \(\FOLOrder\) with the theory of natural numbers.
Under this theory, \(\FOLOrder\) is expressively equivalent to LTL \cite{Kamp1968,gabbay1980temporal}.

Let \(\Prop\) be a finite set of propositional variables. We denote by \(\val(x)\)  a \emph{valuation} (partial mapping) of  variables \(x \in \Prop\) to boolean 
values, \(v: \Prop \rightarrow \{0,1\}\),
and by \(\vals_{\Prop}\) the set of all valuations over \(\Prop\).
The \emph{domain} of a valuation \(\val\) is denoted as \(\Prop(\val)\) and its \emph{size} is defined by the size of its domain, i.e.\
\(|\val| = |\Prop(\val)|\).
Given a sequence of propositional variables \((x_0, \ldots, x_n)\), we write a valuation \(\val\) over it as a boolean string \(\val(x_0)\ldots \val(x_n)\).
We denote by \(\val[x \mapsto b]\) the update of valuation \(\val\) with \(x\) being assigned the boolean \(b\).
The \emph{composition of two valuations} \(\val\) and \(\val'\) is defined as 
\({\val \otimes \val'} = \val[x_1 \mapsto \val'(x_1)]\ldots[x_n \mapsto \val'(x_n)]\),  where  \(\{x_1, \ldots, x_n\} = \Prop(\val')\).

A trace \(\trace\) over \(\Prop\)  is a sequence of valuations in \(\vals_{\Prop}\). 
We refer to \(\Prop\) as the \emph{alphabet of} \(\trace\).
The set of all \emph{infinite traces (over \(\Prop\))} is denoted by \(\vals_{\Prop}^\omega\) and the set of all \emph{finite traces} is denoted by \(\vals_{\Prop}^*\). For a finite trace \(\tau = v_0 v_1 \dots v_n\), its \emph{length} is defined as \(|\tau| = n + 1\) and \(|\tau| = \omega\) for an infinite trace.
 The \emph{composition} of traces \(\trace = \val_0 \val_1 \ldots \) and \(\trace' = \val'_0 \val'_1 \ldots \) is defined as 
\(\trace \otimes \trace' = (v_0\otimes v'_0) (v_1\otimes v'_1) \ldots \).
Given a trace \(\trace = v_0 v_1 \dots\) and an index \(i < |\trace|\), we use the following indexing notation: \(\trace[i] = v_i\), \(\trace[i\ldots] = v_i v_{i+1}\dots\), and \(\trace[\ldots i] = v_0 v_1\dots v_{i-1}\). For \(j \geq |\trace|\) we adopt the following convention: \(\trace[j\ldots]\) is the empty trace  and \(\trace[\ldots j] = \trace\).

A \emph{trace property} \(\setTraces\) over a set of propositional variables \(\Prop\) is a set of infinite traces over \(\Prop\), that is, \(\setTraces \subseteq \vals^\omega_{\Prop}\). 
The set \(\allSetTraces = 2^{\vals^\omega_{\Prop}}\) defines the \emph{set of all trace properties}.
A system \(\system\) is characterized by the observable behavior for each of its executions, which are represented as traces. Hence a system is defined by a set of traces.
A \emph{hyperproperty} characterizes a set of systems, 
and defines a set of sets of traces \(\setsetTraces \subseteq 2^{\vals^\omega_{\Prop}} = \allSetTraces\).

%\subsection{Linear Temporal Logic (LTL)}
%\label{sub:LTL}

LTL is a propositional linear-time temporal logic \cite{Pnueli77}.
Its formulas, \(\varphi\), are defined by the grammar: \(\varphi ::= \ a \, |\, \neg \varphi \, |\, \varphi \vee \varphi \, | \, \Next \varphi \, | \, \varphi \Until \varphi\), where \(a\in \Prop\) is a propositional variable and  \emph{next}, \(\Next\), and \emph{until}, \(\Until\), are temporal modalities.
LTL formulas are interpreted over infinite traces. The satisfaction relation, for a given trace \(\trace \in \vals^{\omega}_{\Prop}\), is defined inductively over LTL formulas as follows:
\[
\begin{split}
&\trace \models a \tIff \tau[0](a) = 1; 
\trace \models \neg  \psi \tIff \trace \not\models  \psi; 
  \trace \models  \psi_1 \vee  \psi_2 \tIff 
\trace \models  \psi_1 \tOr \trace \models  \psi_2;
\trace \models \Next  \psi \tIff  \trace[1\ldots] \models  \psi;\\
&\trace \models  \psi_1 \Until  \psi_2 \tIff 
\text{there exists } 0 \leq j:\ \trace[j\ldots] \models  \psi_2 \tAnd
\text{for all } 0 \leq j' < j: \ \trace[j' \ldots] \models  \psi_1.
\end{split}
\]
The temporal operators \emph{globally}, \(\Globally\), and \emph{eventually}, \(\Eventually\) are defined as customary, with  \({\Globally \psi  \equiv \psi \Until \text{false}}\) and \({\Eventually \psi  \equiv \text{true} \Until \psi}\).

\subsection{Hypertrace Logic}

\emph{Hypertrace Logic}, denoted \(\TFOL\), is a two-sorted first-order logic with equality and the signature \({\{<\}} \cup {\{P_a \ |\ a \in \Prop \}} \cup \{\timeDef\}\), where 
\(\Prop\) is a set of propositional variables. It includes the trace sort \(\allSetTraces\) and time sort \(\nat\). % as the sort for time given that we interpret time with the theory of natural numbers.
All the predicates are binary and they have the following signatures: \(\mathord<: \nat \times \nat\) and \(P_a, \timeDef: \allSetTraces \times \nat\), for all \(a\in \Prop\). The predicate \(<\) is interpreted over the theory of natural numbers, while the other predicates are uninterpreted.

The first-order logic of linear order, \(\FOLOrder\), allows only monadic predicates, aside from the interpreted binary predicate \(<\). 
We extend it to specify hyperproperties by allowing binary predicates \(P_a\) for each propositional variable \(a\) and a binary predicate \(\timeDef\).
Given a set of traces, we interpret \(P_a\) with all pairs of traces and time positions where \(a\) holds. The predicate \(\timeDef\) holds for all positions that are within the length of a given trace. This enables us to reason about both finite and infinite traces.

Given a set of traces \(\setTraces\), we define the structure 
\(\overline{\setTraces}\) with domain \(\setTraces \cup \nat\) by letting,  for all \(a \in \Prop\), \(P_a = \{(\trace, k) \ | \ \trace \in \setTraces, k\in \nat \tAnd \trace[k](a) = 1 \}\) and 
\(\timeDef = \{(\trace, k) \ |\ \trace \in \setTraces, k \in \nat \tAnd 0 \leq k < |\trace| \}\).
A set of traces \(\setTraces\) is a model of a formula \(\varphi \in \TFOL\), denoted \(\setTraces \modelsTFOL \varphi\), when \(\overline{\setTraces}\) models \(\varphi\) under the classical first-order semantics.
From now on, we refer to \(P_a\) as \(a\) and omit the subscript \(\allSetTraces\) in \(\modelsTFOL\) whenever it is clear from the context.
The \emph{set of sets of traces generated by a hypertrace formula \(\varphi\)} is 
\(\generatedSet{\varphi} = \{\setTraces \ |\ \setTraces \modelsTFOL \varphi\}\).
We also equip Hypertrace Logic with a \emph{point interpretation} defined as \({\generatedSet{\varphi}_{\pointW} = \{\setTraces[0]\setTraces[1]\ldots \, |\, \setTraces \modelsTFOL \varphi\}}\),  where \({\setTraces[i] = \{\trace[i] \ |\ \trace \in \setTraces\}}\). 

\begin{example}
Consider the set of traces \({\setTraces = \{00\, (11)^\omega, 10\, (00)^\omega\}}\) with valuations over \((x,y)\).
Its point interpretation is \(\{00,10\}\,(\{11,00\})^\omega\).
\end{example}

\subsection{Trace-prefixed Hypertrace Logic}
\label{sub:tfol:hyperLTL}

\emph{Trace-prefixed Hypertrace Logic}, \(\TraceTFOL\), is a fragment of Hypertrace Logic in which all trace quantifiers are at the beginning of the formula. 
Its formulas, \(\varphi \in \TraceTFOL\), are defined by the following grammar: \(\varphi := \forall \traceVar\ \varphi \ |\ \neg \varphi\ |\ \psi\) with 
\(\psi := \forall \timeVar\ \psi \ |\ \psi \vee \psi \ |\ \neg \psi \ |\  \timeVar < \timeVar\ |\  \timeVar = \timeVar\ |\  a(\traceVar,\timeVar)\),
where 
\(\traceVar \in \Var\) is a trace variable, \(\timeVar\) is a time variable and \(a \in \Prop\) a propositional variable.

\(\TraceTFOL\) is expressively equivalent to HyperLTL~\cite{ClarksonFKMRS14} interpreted over sets of infinite traces. HyperLTL extends LTL by adding quantifiers over traces. Its syntax is defined by the following grammar, where \(\Var\) is a set of trace variables,
\(a \in \Prop\) and \(\pi  \in \Var\):
\(
\psi ::= \ \exists \pi\ \psi \ | \ \forall \pi \ \psi \ | \ \varphi\) with \(
\varphi ::= \ a_{\pi} \ |\ \neg \varphi \ |\ \varphi \vee \varphi \ | \ \Next \varphi \ | \ \varphi \Until \varphi.
\)
A trace assignment, \(\traceAssign_{\setTraces}: \mathcal{V} \rightarrow \setTraces\), is a partial function that assigns traces 
from \(\setTraces\) to trace variables in \(\Var\).
We denote by \(\traceAssign_{\setTraces}[\traceVar \mapsto \trace]\) the trace assignment in which \(\traceVar\) is mapped to \(\trace\) and otherwise identical to \(\traceAssign_{\setTraces}\).
The satisfaction relation for HyperLTL formulas is defined inductively as follows:
\begin{align*}
&(\traceAssign_{\setTraces}, i) \modelsHyper \exists \pi \ \psi
\tIff 
\text{there exists } \trace \in \setTraces: (\traceAssign_{\setTraces}[\pi \mapsto \trace], i) \modelsHyper \psi;\\
&(\traceAssign_{\setTraces}, i) \modelsHyper \forall \pi \ \psi 
\tIff 
\text{for all } \trace \in \setTraces: (\traceAssign_{\setTraces}[\pi \mapsto \trace], i) \modelsHyper \psi;\\
&(\traceAssign_{\setTraces}, i) \modelsHyper a_{\pi} 
\tIff  \  
\traceAssign_{\setTraces}(\pi)[i](a) = 1;\\
&(\traceAssign_{\setTraces}, i) \modelsHyper \neg  \psi 
\tIff \ 
(\traceAssign_{\setTraces}, i) \notmodelsHyper  \psi;\\
&(\traceAssign_{\setTraces}, i) \modelsHyper  \psi_1 \vee  \psi_2 \tIff \ 
(\traceAssign_{\setTraces}, i) \modelsHyper  \psi_1 \tOr (\traceAssign_{\setTraces}, i) \modelsHyper  \psi_2;\\
&(\traceAssign_{\setTraces}, i) \modelsHyper \Next  \psi \tIff \  (\traceAssign_{\setTraces}, i+1) \modelsHyper  \psi;\\
&(\traceAssign_{\setTraces}, i) \modelsHyper  \psi_1 \Until  \psi_2 \tIff 
\text{there exists } i \leq j:\! (\traceAssign_{\setTraces}, j) \modelsHyper  \psi_2  \text{ and  for all } i \leq j' < j:\! (\traceAssign_{\setTraces}, j') \modelsHyper  \psi_1.
\end{align*}
A set of traces \(\setTraces\) is a \emph{model} of a HyperLTL formula \(\varphi\), denoted by \(\setTraces \modelsHyper \varphi\), iff there exists a
mapping \(\traceAssign_{\setTraces}\)  s.t.\ \((\traceAssign_{\setTraces},0) \modelsHyper \varphi\).
A formula is closed when all occurrences of  trace variables are in the scope of a quantifier.
For all closed formulas (sentences)~\(\varphi\), \(\setTraces \modelsHyper \varphi\) iff \((\emptyAssign_{\setTraces},0) \modelsHyper \varphi\), 
where \(\emptyAssign_{\setTraces}\) is the empty assignment.
We may omit the subscript \(H\) in \(\models_H\) whenever it is clear from the context.

\begin{definition}
Let
\(\setTraces\) be a set of traces and 
\({\traceAssign_{\setTraces}: \mathcal{V} \rightarrow \setTraces}\) be a partial function assigning traces in 
\(\setTraces\) to variables in \(\mathcal{V}\).
We introduce the following notions:
\begin{itemize}
\item The set of \emph{trace variables assigned} in \(\traceAssign_{\setTraces}\) is \(\mathcal{V}(\traceAssign_{\setTraces}) = \{ \pi \ | \  \traceAssign_{\setTraces}(\pi) \text{ is defined} \}\);

\item This \emph{size} of \(\traceAssign_{\setTraces}\) is \(|\traceAssign_{\setTraces}| = |\mathcal{V}(\traceAssign_{\setTraces})|\);

\item The \emph{flattening} of \(\traceAssign_{\setTraces}\) is
\(\flatT{\traceAssign_{\setTraces}}[i](a_{\pi}) = \traceAssign_{\setTraces}(\pi)[i](a)\).
\end{itemize}
\end{definition}

Note that a quantifier-free HyperLTL formula $\varphi$ with trace variables in $\mathcal{V}$ and alphabet $X$ is also an LTL formula over the alphabet $\{ a_\pi \mid a \in X, \pi \in \mathcal{V}\}$.

\begin{example}
Let \(\setTraces = \{0^\omega, 1^\omega\}\) be a set of traces over \(\{a\}\). Consider, the assignment \(\traceAssign_{\setTraces}\) s.t.\ \(\traceAssign_{\setTraces}(\traceVar) = 0^\omega\) and \({\traceAssign_{\setTraces}(\traceVar') = 1^\omega}\).
Then, \(\flatT{\traceAssign_{\setTraces}}\) defines the trace \((01)^\omega\) over \((a_{\traceVar}, a_{\traceVar'})\).
\end{example}

\begin{proposition}\label{prop:zipping}
Let \(\varphi\) be a quantifier-free HyperLTL formula.
For all  \(i\in\nat\), all set of traces \(\setTraces\) and all of its trace assignments \(\traceAssign_{\setTraces}\),
\((\traceAssign_{\setTraces},i) \modelsHyper \varphi \tIff \flatT{\traceAssign_{\setTraces}}[i \ldots ] \models \varphi.\)
\end{proposition}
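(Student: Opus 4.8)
The plan is to prove Proposition~\ref{prop:zipping} by structural induction on the quantifier-free HyperLTL formula $\varphi$, showing simultaneously for all $i \in \nat$, all sets of traces $\setTraces$, and all trace assignments $\traceAssign_{\setTraces}$ that $(\traceAssign_{\setTraces}, i) \modelsHyper \varphi$ iff $\flatT{\traceAssign_{\setTraces}}[i\ldots] \models \varphi$. The key observation making the induction go through is that the LTL satisfaction relation on the right uses the suffix $\flatT{\traceAssign_{\setTraces}}[i\ldots]$ read from position $0$, whereas the HyperLTL relation on the left keeps the flattened trace fixed and advances the time index $i$; so I need the bookkeeping fact that $\flatT{\traceAssign_{\setTraces}}[i\ldots][j\ldots] = \flatT{\traceAssign_{\setTraces}}[(i+j)\ldots]$, which is immediate from the definition of suffixing, and this is exactly what lets the $\Next$ and $\Until$ cases align.

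For the base case, $\varphi = a_\pi$ for some $a \in \Prop$ and $\pi \in \mathcal{V}$: by definition $(\traceAssign_{\setTraces}, i) \modelsHyper a_\pi$ iff $\traceAssign_{\setTraces}(\pi)[i](a) = 1$, and by the definition of flattening, $\flatT{\traceAssign_{\setTraces}}[i](a_\pi) = \traceAssign_{\setTraces}(\pi)[i](a)$, which is $1$ iff $\flatT{\traceAssign_{\setTraces}}[i\ldots] \models a_\pi$ under the LTL semantics (reading the propositional variable $a_\pi$ of the flattened alphabet at position $0$ of the suffix). The Boolean cases $\neg\psi$ and $\psi_1 \vee \psi_2$ are routine: both semantics treat negation and disjunction identically at the fixed index/position, so they follow directly from the induction hypothesis applied at the same $i$. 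For $\Next\psi$: $(\traceAssign_{\setTraces}, i) \modelsHyper \Next\psi$ iff $(\traceAssign_{\setTraces}, i+1) \modelsHyper \psi$, which by the induction hypothesis is $\flatT{\traceAssign_{\setTraces}}[(i+1)\ldots] \models \psi$, and since $\flatT{\traceAssign_{\setTraces}}[(i+1)\ldots] = \flatT{\traceAssign_{\setTraces}}[i\ldots][1\ldots]$, this is exactly $\flatT{\traceAssign_{\setTraces}}[i\ldots] \models \Next\psi$. The $\Until$ case is similar, quantifying over $j \geq i$ and $i \leq j' < j$ on the left versus $j' \geq 0$ shifted appropriately on the right, using the same suffix-composition identity to translate between the two index conventions.

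I do not expect a genuine obstacle here; the statement is essentially a sanity-check lemma asserting that HyperLTL's fixed-trace/moving-index semantics and LTL's moving-suffix semantics on the zipped (flattened) trace coincide. The only point requiring minor care is keeping the two indexing conventions straight in the $\Until$ case and invoking the identity $\flatT{\traceAssign_{\setTraces}}[i\ldots][k\ldots] = \flatT{\traceAssign_{\setTraces}}[(i+k)\ldots]$ (which I would state and verify once at the start, as it follows directly from the definition $\trace[i\ldots] = v_i v_{i+1}\dots$). One also wants to note that $\flatT{\traceAssign_{\setTraces}}$ is a well-defined trace over the alphabet $\{a_\pi \mid a \in \Prop, \pi \in \mathcal{V}(\traceAssign_{\setTraces})\}$ precisely when every trace variable occurring in $\varphi$ is assigned in $\traceAssign_{\setTraces}$, so the statement is implicitly restricted to such assignments — which matches how the flattening was defined in the preceding definition.
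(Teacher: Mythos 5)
Your proof is correct; the paper in fact states Proposition~\ref{prop:zipping} without any proof, treating it as immediate, and your structural induction (base case via the flattening definition, Boolean cases by the induction hypothesis at the same index, and the $\Next$/$\Until$ cases via the suffix identity $\flatT{\traceAssign_{\setTraces}}[i\ldots][k\ldots]=\flatT{\traceAssign_{\setTraces}}[(i+k)\ldots]$) is exactly the routine argument the authors are implicitly relying on. Your closing remark that the claim only makes sense for assignments covering the trace variables of $\varphi$ is a correct and worthwhile clarification of an implicit assumption in the paper's statement.
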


\begin{proposition}
For all HyperLTL sentences \(\varphi_H\) there exists a trace-prefixed hypertrace sentence \(\varphi\) s.t.\ for all sets of infinite traces \(\setTraces \subseteq \vals^{\omega}_{\Prop}\), \(\setTraces \modelsHyper \varphi_H\) iff
\(\setTraces \modelsTFOL \varphi\).
For all trace-prefixed hypertrace sentences \(\varphi\) there exists a HyperLTL sentence \(\varphi_H\) s.t.\ for all sets of infinite traces \(\setTraces \subseteq \vals^{\omega}_{\Prop}\), \(\setTraces \modelsHyper \varphi_H\) iff
\(\setTraces \modelsTFOL \varphi\).
\end{proposition}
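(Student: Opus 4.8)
The plan is to put both kinds of sentence into a common prenex shape and reduce the statement to a translation between their quantifier-free matrices, which then follows from the stated equivalence of \(\FOLOrder\) and \(\LTL\) over the naturals (Kamp's theorem) together with Proposition~\ref{prop:zipping}. First I would note that every \(\TraceTFOL\) sentence is equivalent, by pushing negations inward past trace quantifiers (\(\neg\forall\traceVar \equiv \exists\traceVar\,\neg\)), to one of the form \(Q_1\traceVar_1\cdots Q_n\traceVar_n\,\psi\) with \(Q_j\in\{\forall,\exists\}\) and \(\psi\) built only from \(\{\forall\timeVar,\vee,\neg,{<},{=}\}\) and atoms \(a(\traceVar,\timeVar)\) (the \(\TraceTFOL\)-matrix fragment is closed under negation, and \(\timeDef\) does not occur in it); closedness forces \(\fr(\psi)\subseteq\{\traceVar_1,\dots,\traceVar_n\}\). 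HyperLTL sentences are already of the form \(Q_1\traceVar_1\cdots Q_n\traceVar_n\,\varphi\) with \(\varphi\) quantifier-free, hence, by the remark preceding Proposition~\ref{prop:zipping}, an \(\LTL\) formula over the alphabet \(\{a_\traceVar : a\in\Prop,\ \traceVar\in\{\traceVar_1,\dots,\traceVar_n\}\}\).

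The core is the following equivalence between matrices. Fix \(V=\{\traceVar_1,\dots,\traceVar_n\}\). Given a quantifier-free HyperLTL matrix \(\varphi\) over \(\{a_\traceVar : a\in\Prop,\ \traceVar\in V\}\), the stated equivalence of \(\LTL\) and \(\FOLOrder\) over \(\omega\)-words yields an \(\FOLOrder\) sentence over the same alphabet defining the same \(\omega\)-language; renaming each monadic predicate \(P_{a_\traceVar}\) to the binary atom \(a(\traceVar,\cdot)\) turns it into a trace-quantifier-free hypertrace matrix \(\psi\) with \(\fr(\psi)\subseteq V\), and conversely. I claim \(\varphi\) and \(\psi\) agree in the following sense: for every set of infinite traces \(\setTraces\) and every assignment \(\traceAssign_\setTraces\) with \(\mathcal{V}(\traceAssign_\setTraces)=V\),
\[
(\traceAssign_\setTraces,0)\modelsHyper\varphi
\quad\text{iff}\quad
\overline{\setTraces}\models\psi[\traceVar_1\mapsto\traceAssign_\setTraces(\traceVar_1),\dots,\traceVar_n\mapsto\traceAssign_\setTraces(\traceVar_n)].
\]
Indeed, by Proposition~\ref{prop:zipping} with \(i=0\), the left side holds iff \(\flatT{\traceAssign_\setTraces}\models\varphi\) (as an \(\LTL\) formula over \(\{a_\traceVar\}\)), iff the word model of \(\flatT{\traceAssign_\setTraces}\) satisfies the corresponding \(\FOLOrder\) sentence (Kamp). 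That word model has universe \(\nat\) and \(P_{a_\traceVar}=\{k\mid \flatT{\traceAssign_\setTraces}[k](a_\traceVar)=1\}=\{k\mid \traceAssign_\setTraces(\traceVar)[k](a)=1\}\); since all traces in \(\setTraces\) are infinite, this is precisely the restriction of the interpretation of the binary predicate \(a\) in \(\overline{\setTraces}\) to the trace \(\traceAssign_\setTraces(\traceVar)\), and the time quantifiers of \(\psi\) range over \(\nat\) in both structures. Hence evaluating \(\psi\) in \(\overline{\setTraces}\) under that assignment coincides with evaluating the \(\FOLOrder\) sentence in the word model, which gives the claimed equivalence; the two directions of Kamp's theorem give the two directions of the translation.

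It remains to lift this from matrices to sentences. By (downward) induction on the number of remaining trace quantifiers, using that in both semantics the trace quantifiers range exactly over the elements of \(\setTraces\), one gets \(\setTraces\modelsTFOL Q_1\traceVar_1\cdots Q_n\traceVar_n\,\psi\) iff \(\setTraces\modelsHyper Q_1\traceVar_1\cdots Q_n\traceVar_n\,\varphi\): at each step a quantifier \(Q_j\traceVar_j\) is discharged identically on both sides, and the base case is the matrix equivalence above applied with \(\traceAssign_\setTraces=\emptyAssign_\setTraces[\traceVar_1\mapsto\tau_1]\cdots[\traceVar_n\mapsto\tau_n]\), recalling that closed HyperLTL formulas are evaluated at position \(0\) under the empty assignment. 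Reading the construction forwards gives the first statement of the proposition and backwards the second. I expect no real difficulty beyond bookkeeping; the one point that needs care is the structure identification in the middle paragraph---matching the two-sorted structure \(\overline{\setTraces}\) under a trace assignment with the single-trace word model of \(\flatT{\traceAssign_\setTraces}\)---where the restriction to infinite traces (so that \(\timeDef\) is vacuous, and indeed absent from the \(\TraceTFOL\) matrix grammar) is exactly what makes the identification exact.
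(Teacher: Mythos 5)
Your proposal is correct and follows essentially the same route as the paper's proof: keep the trace-quantifier prefix intact, translate the quantifier-free matrix between LTL over the alphabet $\{a_\traceVar\}$ and $\FOLOrder$ via Kamp/Gabbay with the renaming $P_{a_{\traceVar}}(\timeVar) \leftrightarrow a(\traceVar,\timeVar)$, anchor the base case with Proposition~\ref{prop:zipping}, and lift through the prefix by induction. You are merely more explicit than the paper about identifying the word model of $\flatT{\traceAssign_{\setTraces}}$ with $\overline{\setTraces}$ under the assignment, which is a welcome clarification rather than a deviation.
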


\begin{proof}
The translation from HyperLTL formulas to an equivalent trace-prefixed hypertrace formula works as follows. 
We keep the trace quantifiers as they are and we use the translation from LTL to \(\FOLOrder\) introduced in~\cite{gabbay1980temporal} to translate the quantifier-free part. 
Then, we apply the following change in the quantifier-free part:
\(P_a(\traceVar, \timeVar) = P_{a_{\traceVar}}(\timeVar)\). 
Let us call this translation \(\text{tr}_{H}\). 
It follows from structural induction on HyperLTL formulas that
for all sets of traces and their assignments they satisfy an HyperLTL formula iff
they satisfy its translation to trace-prefixed hypertrace formula.
This follows from the result by Gabbay et al. in ~\cite{gabbay1980temporal} and
Proposition~\ref{prop:zipping} for the base case of this induction.
Hence
for all HyperLTL formulas \(\varphi_H\) there exists the trace-prefixed hypertrace formula
 \(\text{tr}_{H}(\varphi_H)\) s.t.\ \(\setTraces \modelsHyper \varphi_H\) iff \(\setTraces \modelsTFOL\text{tr}_{H}(\varphi_H)\). 

The translation from trace-prefixed hypertrace formulas to HyperLTL is similar. We use instead the translation from \(\FOLOrder\) to LTL from \cite{gabbay1980temporal}.
\end{proof}

\subsection{Time-prefixed Hypertrace Logic}

\emph{Time-prefixed Hypertrace Logic}, \(\TimeTFOL\), restricts the syntax of Hypertrace Logic to have all time constraints defined before trace quantifiers. 
Its formulas, \(\varphi \in \TimeTFOL\), are defined by the following grammar:
%\begin{align*}
\(\varphi := \forall \timeVar\ \varphi \, |\, \neg \varphi\, |\, \timeVar < \timeVar\, |\,  \timeVar = \timeVar\, |\, \varphi \vee \varphi\ |\  \psi\) with \(
\psi := \forall \traceVar\ \psi \ |\ \psi \vee \psi \ |\ \neg \psi \ |\    a(\traceVar,\timeVar)\).
%\end{align*}
where 
\(\traceVar \in \Var\) is a trace variable, \(\timeVar\) is a time variable and \(a \in \Prop\) a propositional variable.

%\subsection{Globally Hypertrace Logic}
%\label{sub:hierarchy:box}

\emph{Globally Hypertrace logic}, \(\GFOL\), is a syntactic fragment of Time-prefixed Hypertrace logic
in which all formulas start with a universal time quantifier followed by a formula that can only have trace quantifiers. Then, 
\(\varphi \in \GFOL\) iff \(\varphi = \forall i\ \psi_i\) where \(\psi_i\) is defined by the following grammar:
%\begin{align*}
\(\psi_{\timeVar} := \forall \traceVar\ \psi_{\timeVar} \ |\ \psi_{\timeVar} \vee \psi_{\timeVar} \ |\ \neg \psi_{\timeVar} \ |\   a(\traceVar,\timeVar) \ |\ \timeDef(\traceVar, \timeVar)\).
%\end{align*}

For a formula \(\psi(i)\) without time quantifiers and whose only free time variable is \(i\), we also define as a convenience its satisfaction w.r.t. sets of valuations \(\M = \{\val_0, \val_1, \ldots \}\) as follows:
\[\{\val_0, \val_1, \ldots \} \modelsTFOL \psi(i) \text{ iff 
for } \setTraces = \{\val_0^\omega, \val_1^\omega, \ldots\},\  \overline{\setTraces} \models \forall i\ \psi(i).\]

Globally Hypertrace Logic can be used to specify relations between traces of a system that must be satisfied in each of their time points independently. We use it later to specify the \emph{point semantics} of independence.

We prove below that if an hyperproperty can be expressed with globally hypertrace logic then it can be characterized by a set of sets of valuations \(\SetM\). 
We denote the set with all sequences of elements of \(\SetM\) by \(\SetM^{\omega}\).
In this formal language context, we interpret sets of valuations as letters.
Consider for instance the set \(\SetM = \{\{00, 01\}, \{11\}\}\) with valuations over \((x,y)\).
Then, \(\{00,11\}\,\{11\}\,\{00,11\}^\omega \in \SetM^{\omega}\) while \(\{00\}^{\omega} \notin \SetM^{\omega}\).

\begin{theorem}\label{thm:inv:global_FOL}
Let \(\Prop\) be a finite set of propositional variables and \(\setsetTraces \subseteq 2^{\vals^\omega_{\Prop}}\) be a hyperproperty.
If there exists a globally hypertrace formula \({\varphi \in \GFOL}\) that generates the same set of sets of traces as the hyperproperty,
\(\setsetTraces = \generatedSet{\varphi}\), then
there exists a set of sets of valuations \(\textbf{M} \subseteq 2^{\vals_{\Prop}}\) that generates the point-wise interpretation of the hyperproperty, 
\(\{ \setTraces[0] \setTraces[1] \ldots \ | \ \setTraces \in \setsetTraces  \} = \textbf{M}^{\omega}\), where \(\setTraces[i] = \{ \trace[i]\ |\ \trace \in \setTraces \}\).
\end{theorem}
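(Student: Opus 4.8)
The plan is to show that $\mathbf{M}^\omega$, with $\mathbf{M}$ chosen as the set of ``slices'' realizable by models of $\varphi$, exactly captures $\{\setTraces[0]\setTraces[1]\ldots \mid \setTraces \in \setsetTraces\}$. Concretely, write $\varphi = \forall i\ \psi_i$ where $\psi_i$ has no time quantifiers and $i$ as its only free time variable. First I would observe the key locality property of $\GFOL$: since $\psi_i$ refers only to position $i$ (through atoms $a(\traceVar,i)$ and $\timeDef(\traceVar,i)$), whether $\setTraces \modelsTFOL \forall i\ \psi_i$ holds depends only on the sequence of slices $\setTraces[0],\setTraces[1],\ldots$, and moreover $\overline{\setTraces} \models \psi_i[k/i]$ depends only on the single slice $\setTraces[k]$ — equivalently, by the convenience definition in the excerpt, only on whether $\setTraces[k] \modelsTFOL \psi(i)$ viewed as a set of valuations. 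This is the step I expect to require the most care: one must prove by structural induction on $\psi_i$ that $\overline{\setTraces} \models \psi_i[k/i]$ iff $\overline{\{\val^\omega \mid \val \in \setTraces[k]\}} \models \forall i\ \psi_i$, handling the interaction of trace quantifiers with the fixed time point, and being careful that $\timeDef(\traceVar,i)$ is always true here since all traces are infinite (so $\setTraces \subseteq \vals^\omega_\Prop$), which is what lets the reduction to sets of valuations go through cleanly.

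Given that locality lemma, I would define $\mathbf{M} = \{\, U \subseteq \vals_\Prop \mid U \modelsTFOL \psi(i)\,\}$, i.e. the set of all ``letters'' (sets of valuations) that locally satisfy the quantifier-free body. Then the argument splits into the two inclusions. For $\subseteq$: if $\setTraces \in \setsetTraces = \generatedSet{\varphi}$, then $\setTraces \modelsTFOL \forall i\ \psi_i$, so for every $k$ the slice $\setTraces[k]$ satisfies $\psi_i$ locally, hence $\setTraces[k] \in \mathbf{M}$, and therefore $\setTraces[0]\setTraces[1]\ldots \in \mathbf{M}^\omega$. For $\supseteq$: given any sequence $U_0 U_1 \ldots \in \mathbf{M}^\omega$, I would exhibit a model $\setTraces$ of $\varphi$ whose slice sequence is exactly $U_0 U_1 \ldots$. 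The natural candidate is $\setTraces = \{\, \val_0 \val_1 \val_2 \cdots \mid \val_j \in U_j \text{ for all } j\,\}$, the full ``product'' trace set; one checks $\setTraces[k] = U_k$ (using that every $U_j$ is nonempty — note $\emptyset \in \mathbf{M}$ would force the corresponding slice to be empty, which can only happen if $\setTraces = \emptyset$, a boundary case to dispatch separately by inspecting whether $\varphi$ is satisfied by the empty set) and then invokes the locality lemma position-by-position to conclude $\setTraces \modelsTFOL \forall i\ \psi_i$, i.e. $\setTraces \in \setsetTraces$.

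The only subtlety beyond the induction is making the slice-reconstruction precise: I must confirm that for the product set $\setTraces$, the local satisfaction of $\psi_i$ at position $k$ is determined purely by $\setTraces[k] = U_k$ and not by any correlation between positions — which is exactly the content of the locality lemma, since $\psi_i$ cannot compare distinct time points. Thus the product construction loses all cross-position information, but that information is invisible to $\GFOL$ anyway, so $\setTraces$ remains a model. Assembling these pieces gives $\{\setTraces[0]\setTraces[1]\ldots \mid \setTraces \in \setsetTraces\} = \mathbf{M}^\omega$, as required. I would close by remarking that this is precisely the structural rigidity — definability by a star-language over sets of valuations — that will later be violated by the two-state independence flavours, yielding the inexpressiveness results.
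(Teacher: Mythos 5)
Your proposal is correct and follows essentially the same route as the paper: both arguments rest on the locality of $\GFOL$ (satisfaction of $\psi(i)$ at position $k$ depends only on the slice $\setTraces[k]$), take $\SetM$ to be the letters locally satisfying $\psi(i)$ (equivalently, the realized slices of models, as the paper does — the two coincide by considering constant-trace extensions $\{\val^\omega \mid \val \in U\}$), and establish the nontrivial inclusion by reconstructing a model with a prescribed slice sequence. The only difference is cosmetic: the paper's witness is a finite set of traces built by modular indexing into each $M_j$, whereas you take the full product $\{\val_0\val_1\cdots \mid \val_j \in U_j\}$; both work, and your explicit flagging of the empty-slice boundary case is if anything more careful than the paper.
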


%\input{preliminaries}

% !TeX root = main.tex
\section[Two-state Local Independence]{Two-state Local Independence}
\label{sec:prop}

We are interested in specifying the following property:
\begin{center}
\emph{The value of \(y\) is independent from the value of \(x\) until 
$\state$ changes,
and from then on the value of \(z\) is independent from the value of \(x\).}
\end{center}

% \subsection{Specification}
% \label{sub:two_state_spec}

Independence requirements relate observable values from multiple system executions by requiring that for any pair of traces there exists a third that interleaves the first two. However, there is some freedom in how to combine and compare a pair of traces. 
In this work, we assume observations to be synchronous concerning the states of the specification. We can then compare observations either point-wise, with \emph{point semantics}, or as a whole, with \emph{segment semantics}. As independence requirements may be evaluated over sets with traces of different length, we compare a pair of traces with different size by matching their values up to the common length.
This enables us to capture dependencies between variables in systems where executions may stop at different points.
We could choose to compare only traces of the same size, this would not affect our results.

\begin{definition}
\label{def:independence}
Two variables, \(x\) and \(y\), are \emph{point independent}, denoted by
\(\ind{x}{y}{\syncS}\),~iff:
%\begin{align*}
\[\forall i \forall \traceVar  \forall \traceVar'  \exists \traceVar_{\exists}\ 
\big(\timeDef(\traceVar, i)  \wedge \timeDef(\traceVar', i) \big) \rightarrow
\big(\timeDef(\traceVar_{\exists}, i) \wedge (x(\traceVar,i) \leftrightarrow x(\traceVar_{\exists},i)) \wedge 
(y(\traceVar',i) \leftrightarrow y(\traceVar_{\exists},i))\big).\]
%\end{align*}
\noindent Two variables, \(x\) and \(y\), are \emph{segment independent}, denoted by
\(\ind{x}{y}{\syncO}\), iff:
%\begin{align*}
\[\forall \traceVar  \forall \traceVar'  \exists \traceVar_{\exists} \forall i \ 
\big(\timeDef(\traceVar, i)  \wedge \timeDef(\traceVar', i) \big) \rightarrow
\big(\timeDef(\traceVar_{\exists}, i) \wedge (x(\traceVar,i) \leftrightarrow x(\traceVar_{\exists},i)) \wedge 
(y(\traceVar',i) \leftrightarrow y(\traceVar_{\exists},i))\big).\]
%\end{align*}
%\end{description}
\end{definition}

We need to distinguish between observations from the first and the second logical state of the observed system. For this, we define a slicing operator over sets of traces that returns all its elements prefixes (or suffixes) before (after) a given propositional variable holds for the first time.

\begin{definition}%[Slicing sets of traces]
\label{def:slicing}
Let \(a\) be a propositional variable.
The abbreviation, 
\(\text{min}(\traceVar, a, i)\), stands for
\(a(\traceVar, i) \wedge \forall j\ a(\traceVar, j) \rightarrow i \leq j.\)
Given a set of traces \(\setTraces\), we define its slicing w.r.t.\ \(a\), as follows:
%\begin{align*}
\[
\setTraces[a\DDD]\! =\! \{ \trace[k\DDD] \, |\, \trace \in \setTraces, \ k \in \nat,  \text{min}(\trace, a, k)\} \tAnd 
\setTraces[\DDD a]\! =\! \{ \trace[\DDD k] \, |\, \trace \in \setTraces, \ k \in \nat,  \text{min}(\trace, a, k)\}.\]
%\end{align*}
\end{definition}

Remark that we only keep traces in which \(a\) holds at least once. The property that \(a\) holds at least once in every trace can be verified separately.

\begin{example}
Consider the set of traces \(\setTraces = \{00^\omega, 01 (10)^\omega\}\) in which the valuations are over \((a,x)\).
Then, \(\setTraces[\DDD a] = \{(00)^\omega, 01\}\) and \(\setTraces[a\DDD] = \{(10)^\omega\}\).
\end{example}

The action that triggers the change of state may occur at the same time point for all observations, be \emph{synchronous}, or at any time, be \emph{asynchronous}.

\begin{definition}%[Two-state Local Independence]
\label{def:prop:two_state_indep}
Let \(a\) be a boolean variable that is true when the state changes.
\emph{Two-state independence} is defined according to 
the possible action type, \(\sync\), \(\async\) or \(\novisible\), and w.r.t.\ an independence interpretation
\(\Ind \in \{ \Ind_{\syncS}, \Ind_{\syncO}\}\).
\begin{description}
\item [Asynchronous Action:]
%\begin{align*}
\(\setsetTraces^{\async}_{\Ind} =  \{ \setTraces \, | \, \setTraces[\DDD a]\! \models \!\ind{x}{y}{} \tAnd  \setTraces[a \DDD]\! \models\! \ind{x}{z}{}\}.\)
%\end{align*}

\item [Synchronous Action:]
%\begin{align*}
\(\setsetTraces^{\sync}_{\Ind} =  \{\setTraces \ | \  \setTraces\in \setsetTraces^{\async}_{\Ind} \tAnd
 \setTraces \models \exists i\, \forall \traceVar\ \text{min}(\traceVar,a,i)\}.\)
%\end{align*}

\item [Hidden Action:]
%\begin{align*}
\(\setsetTraces^{\novisible}_{\Ind}\! \!=\! \{ \setTraces|_a \, | \, \exists a\, \setTraces[\DDD a]\! \models \!\ind{x}{y}{} \tAnd  \setTraces[a \DDD]\! \models\! \ind{x}{z}{}\},\)
%\end{align*}
where  
\(\setTraces|_{a}\) is the same set of traces as \(\setTraces\) except for the assignments of \(a\) that are removed. 
\end{description}
\end{definition}

We note that in the case that we cannot observe the action, we do not make 
any assumption on whether the actual underlying action is synchronous and do not impose any further restriction on it.

% !TeX root = main.tex
\section[Expressiveness]{Expressiveness}
\label{sec:exp}

In this section, we explore which variations of two-state local independence can be specified using Trace-prefixed Hypertrace Logic, which is expressible equivalent to HyperLTL. 
We summarize our results in Table~\ref{tab:exp:hyperLTL}.
\begin{table}
\caption{Trace-prefixed Hypertrace Logic expressiveness for different variants of the two-state local independence property. For the two-state independence under point semantics and synchronous action, we prove in Theorem \ref{thm:step} that it cannot be expressed by HyperLTL formulas with only a Globally operator.\label{tab:exp:hyperLTL}}
%\parbox{.47\linewidth}{
\centering
\begin{tabular}{c|c|c|c}
\textbf{Independence} & \multicolumn{3}{c}{\textbf{Action Timing}}\\
\cline{2-4}
\textbf{Semantics}& \emph{Sync}  & \emph{Async} & \emph{Hidden} \\ 
\hline
\emph{Point}  & No? [Thm. \ref{thm:step}] & No [Thm. \ref{thm:async:sync}] & No [Thm. \ref{thm:async:sync}]\\ 
\emph{Segment} & Yes [Thm. \ref{thm:sync:sync:overall}]  & No [Thm. \ref{thm:async:sync}] & No [Thm. \ref{thm:async:sync}]
\end{tabular} 
\end{table}

%From now on, given a trace \(\trace =  \val_0 \val_1 \ldots\) over \(\{x, y\}\) (over \(\{a,x,y,z\}\)) we represent 
%\(\val_i\) with the two letter word \(\val_i(x)\val_i(y)\) (four letter word 
%\(\val_i(a)\val_i(x)\val_i(y)\val_i(z)\)).

% !TeX root = main.tex
\subsection{Indistinguishable Trace Sets}
\label{sec:indist_trace_sets}

We introduce notions of indistinguishability between sets of traces for both the time-prefixed and the trace-prefixed fragments of Hypertrace Logic. 
We start by defining an equivalence between sets of traces for HyperLTL, which is expressively equivalent to the trace-prefixed fragment for sets with infinite traces. The number of trace quantifiers in a HyperLTL sentence defines how many traces can be compared in the requirement defined by the quantifier-free part. Recall that for quantifier-free formulas, HyperLTL satisfaction is reduced to LTL satisfaction, with assignments flattened to traces. We propose an equivalence notion for HyperLTL models that lifts equivalence between traces relative to a given class of LTL formulas to sets of traces. An example of such LTL equivalence is the stuttering equivalence between traces for the class of LTL formulas defined only with until modalities.

\begin{definition}%[Equivalence on traces for LTL class \(\LTLC\)]
\label{def:equiv:ltl}
Let \(\LTLC\) be a class of LTL formulas.
We say that \(\Equiv_{\LTLC}\) is an \emph{equivalence on traces for formulas in \(\LTLC\)}
when \(\Equiv_{\LTLC}\) is an equivalence relation and for all LTL formulas \(\varphi \in \LTLC\) and
traces \(\trace \Equiv_{\LTLC} \trace'\), 
\(\trace \models \varphi \tIff  \trace' \models \varphi\).
\end{definition}

We extend classes of LTL formulas to classes of HyperLTL formulas based on their syntax, enabling us to characterize certain temporal aspects of HyperLTL.

\begin{definition}\label{def:hyperltl:classes}
Let \(\LTLC\) be a class of LTL formulas \(\LTLC\)
and let \(\varphi = Q_0 \traceVar_0 \ldots Q_n \traceVar_n \psi\) be a HyperLTL formula with \(\psi\) being quantifier-free and $Q_i \in \{\forall, \exists\}$, with \(0 \leq i \leq n\).
Then,  \(\varphi\) \emph{is in the HyperLTL extension of} \(\LTLC\), denoted \(\varphi \in \HyperLTLC\), iff \(\psi \in \LTLC\).
\end{definition}
Given an equivalence on traces for LTL formulas in a class \(\LTLC\), we extend it to HyperLTL formulas with \(k\) quantifiers followed by a temporal formula in \(\LTLC\)  by requiring a bijective translation between sets of traces that preserves \(\Equiv_{\LTLC}\), for all assignments of size \(k\).

\begin{definition}
\label{def:hyperLTL:global:equiv} 
Let \(k\in \nat\) and \(\LTLC\) be a class of LTL formulas, with \(\Equiv_{\LTLC}\) an equivalence on traces for formulas in \(\LTLC\).
Two sets of traces \(\setTraces\) and \(\setTraces'\) are \((k,\LTLC)\)-equivalent, denoted by
\(\setTraces \Equiv_{(k, \LTLC)} \setTraces'\), iff there exists a bijective and total function
\(\wit: \setTraces \rightarrow \setTraces'\), such that for all assignments over \(\setTraces\) and \(\setTraces'\)  of size \(k\), \(\traceAssign_{\setTraces}\) and \(\traceAssign_{\setTraces'}\), we have:
\({\flatT{\traceAssign_{\setTraces}} \Equiv_{\LTLC} \flatT{\wit(\traceAssign_{\setTraces})}}\) and
\(\flatT{\traceAssign_{\setTraces'}} \Equiv_{\LTLC} \flatT{\wit^{-1}(\traceAssign_{\setTraces'})}\).
We let \(\wit(\traceAssign)(\pi) = \wit(\traceAssign(\pi))\), if \(\pi \in \Var(\traceAssign)\) and, otherwise, undefined.
\end{definition}

\begin{theorem}\label{thm:exp:global}
Let \(\LTLC\) be a class of LTL formulas and \(\Equiv_{\LTLC}\) an equivalence on traces for formulas in \(\LTLC\).
For all HyperLTL sentences with quantifer-free part in the class \(\LTLC\), \(\varphi \in \HyperLTLC\), and for all two set of traces that are \((k, \LTLC)\)-equivalent, 
\(\setTraces \Equiv_{(|\Var(\varphi)|, \LTLC)} \setTraces'\), then, 
\(\setTraces \models \varphi\) iff \(\setTraces' \models \varphi\).
\end{theorem}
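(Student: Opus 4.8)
The plan is to prove the statement by induction on the quantifier prefix, after strengthening it to a statement about \emph{partial} trace assignments so the induction carries through. Write \(\varphi = Q_0\traceVar_0\cdots Q_n\traceVar_n\,\psi\) with \(\psi\in\LTLC\) quantifier-free and \(Q_i\in\{\forall,\exists\}\); by renaming we may assume the \(\traceVar_i\) are pairwise distinct, so \(k := |\Var(\varphi)| = n+1\). Fix a total bijection \(\wit\colon\setTraces\to\setTraces'\) witnessing \(\setTraces\Equiv_{(k,\LTLC)}\setTraces'\) as in Definition~\ref{def:hyperLTL:global:equiv}, and recall that \(\wit\) acts on assignments by \(\wit(\traceAssign)(\pi)=\wit(\traceAssign(\pi))\). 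Two easy facts I would record first: \(\wit\) commutes with one-point extension, \(\wit(\traceAssign[\traceVar_j\mapsto\trace]) = \wit(\traceAssign)[\traceVar_j\mapsto\wit(\trace)]\); and, since \(\wit\) is a bijection, it restricts to a bijection between the assignments over \(\setTraces\) with domain \(\{\traceVar_0,\ldots,\traceVar_{j-1}\}\) and those over \(\setTraces'\) with the same domain, mapping the empty assignment to the empty assignment.

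The claim I would then prove, by downward induction on \(j\) (equivalently, induction on the number \(n+1-j\) of quantifiers in the suffix \(\varphi_j := Q_j\traceVar_j\cdots Q_n\traceVar_n\,\psi\)), is: for every assignment \(\traceAssign\) over \(\setTraces\) with domain exactly \(\{\traceVar_0,\ldots,\traceVar_{j-1}\}\), \((\traceAssign,0)\modelsHyper\varphi_j\) iff \((\wit(\traceAssign),0)\modelsHyper\varphi_j\). In the base case \(j=n+1\), \(\varphi_j=\psi\) and \(\traceAssign\) has size \(k\); Proposition~\ref{prop:zipping} turns both sides into LTL satisfactions, \(\flatT{\traceAssign}\models\psi\) and \(\flatT{\wit(\traceAssign)}\models\psi\), and Definition~\ref{def:hyperLTL:global:equiv} gives \(\flatT{\traceAssign}\Equiv_{\LTLC}\flatT{\wit(\traceAssign)}\) (this is where, and only where, the size-\(k\) hypothesis is used), so the two coincide because \(\psi\in\LTLC\) and \(\Equiv_{\LTLC}\) is an equivalence on traces for \(\LTLC\). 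For the step, if \(Q_j=\exists\) then \((\traceAssign,0)\modelsHyper\exists\traceVar_j\,\varphi_{j+1}\) iff some \(\trace\in\setTraces\) has \((\traceAssign[\traceVar_j\mapsto\trace],0)\modelsHyper\varphi_{j+1}\); by the induction hypothesis this holds iff \((\wit(\traceAssign)[\traceVar_j\mapsto\wit(\trace)],0)\modelsHyper\varphi_{j+1}\), and since \(\wit\) is onto, as \(\trace\) ranges over \(\setTraces\) the trace \(\wit(\trace)\) ranges over \(\setTraces'\), which is exactly \((\wit(\traceAssign),0)\modelsHyper\exists\traceVar_j\,\varphi_{j+1}\); the case \(Q_j=\forall\) is identical. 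Instantiating the claim at \(j=0\) with the empty assignment, and using that \(\setTraces\modelsHyper\varphi\) iff \((\emptyAssign_{\setTraces},0)\modelsHyper\varphi\) for sentences, yields \(\setTraces\modelsHyper\varphi\) iff \(\setTraces'\modelsHyper\varphi\).

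There is no genuine obstacle here; the argument is pure bookkeeping, and the interesting content sits in Definition~\ref{def:hyperLTL:global:equiv} rather than in this proof. The two points that actually need attention are: (i) the \((k,\LTLC)\)-equivalence condition is invoked only at the base case, where the assignment is guaranteed to have size exactly \(k=|\Var(\varphi)|\) --- at intermediate stages the assignments are smaller and no equivalence claim about them is needed; and (ii) it is the bijectivity (in particular surjectivity) of \(\wit\), not merely the trace-equivalence condition, that makes both quantifier cases go through, which is precisely why Definition~\ref{def:hyperLTL:global:equiv} insists on a total bijection rather than an arbitrary relation compatible with \(\Equiv_{\LTLC}\).
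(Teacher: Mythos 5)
Your proof is correct and follows essentially the same route as the paper's: the paper proves the theorem via Lemma~\ref{lemma:exp:open}, a strengthened statement about partial assignments over the free variables, established by structural induction on the quantifier prefix, with the base case handled by Proposition~\ref{prop:zipping} together with the \(\Equiv_{\LTLC}\) condition and the quantifier steps handled via the bijectivity of the witness \(\wit\). Your phrasing of the invariant as a single biconditional (which subsumes the paper's separate \(\wit\) and \(\wit^{-1}\) clauses, since \(\wit\) is a bijection) and your two closing observations about where the size-\(k\) hypothesis and the surjectivity are actually used are accurate.
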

\begin{proof} Follows from the application of Lemma~\ref{lemma:exp:open} below.
\end{proof}

\begin{lemma}\label{lemma:exp:open}
Let \(\LTLC\) be a class of LTL formulas. For all HyperLTL formulas in the HyperLTL extension of  \(\LTLC\), \(\varphi \in \HyperLTLC\), and all sets of traces \(\setTraces\) and \(\setTraces\) that are \((|\Var(\varphi)|, \LTLC)\)-equivalent, \({\setTraces \Equiv_{(|\Var(\varphi)|, \LTLC)} \setTraces'}\), then for all functions \(\wit: \setTraces \rightarrow \setTraces'\) witnessing the equivalence and all assignments \(\traceAssign_{\setTraces}\) and 
\(\traceAssign_{\setTraces'}\) over \(\setTraces\) and \(\setTraces'\), respectively, only with assignments to the set of free variables in \(\varphi\),
\(\Var(\traceAssign_{\setTraces}) = \Var(\traceAssign_{\setTraces'}) = \fr(\varphi)\):
\[(\traceAssign_{\setTraces}, 0) \models \varphi \tIff (\wit(\traceAssign_{\setTraces}), 0) \models \varphi; \tAnd
(\traceAssign_{\setTraces'}, 0) \models \varphi \tIff (\wit^{-1}(\traceAssign_{\setTraces'}), 0) \models \varphi.\]
\end{lemma}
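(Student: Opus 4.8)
The plan is to prove Lemma~\ref{lemma:exp:open} by structural induction on the HyperLTL formula \(\varphi \in \HyperLTLC\), proving both stated biconditionals simultaneously (the symmetry of the statement under swapping \(\setTraces\) with \(\setTraces'\) and \(\wit\) with \(\wit^{-1}\) means only one direction needs genuine work; the second follows by symmetry). The key observation driving the whole argument is the split between the temporal part and the quantifier part: in the base case \(\varphi\) is quantifier-free, hence an LTL formula in \(\LTLC\) over the alphabet \(\{a_\pi\}\); in the inductive case \(\varphi\) starts with a trace quantifier.

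First I would handle the base case. Here \(\varphi\) is quantifier-free, so \(\fr(\varphi) = \Var(\varphi)\), the assignments \(\traceAssign_{\setTraces}\) and \(\traceAssign_{\setTraces'}\) are over exactly the free variables, and \(|\Var(\varphi)| = |\traceAssign_{\setTraces}| = k\) say. By Proposition~\ref{prop:zipping}, \((\traceAssign_{\setTraces}, 0) \modelsHyper \varphi\) iff \(\flatT{\traceAssign_{\setTraces}} \models \varphi\) as an LTL formula. Since \(\setTraces \Equiv_{(k,\LTLC)} \setTraces'\) with witness \(\wit\), Definition~\ref{def:hyperLTL:global:equiv} gives \(\flatT{\traceAssign_{\setTraces}} \Equiv_{\LTLC} \flatT{\wit(\traceAssign_{\setTraces})}\), and since \(\varphi \in \LTLC\), Definition~\ref{def:equiv:ltl} yields \(\flatT{\traceAssign_{\setTraces}} \models \varphi\) iff \(\flatT{\wit(\traceAssign_{\setTraces})} \models \varphi\). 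Applying Proposition~\ref{prop:zipping} again in the other direction closes the base case. Note this is exactly where the equivalence-of-traces machinery is used, and it is used only once.

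For the inductive step, write \(\varphi = Q\pi\ \varphi'\) with \(Q \in \{\forall, \exists\}\) and \(\varphi' \in \HyperLTLC\) (restricting the quantifier prefix removes one quantifier but keeps the temporal part in \(\LTLC\), so the induction hypothesis applies to \(\varphi'\), whose free variables are \(\fr(\varphi) \cup \{\pi\}\), possibly with \(\pi\) already free in which case nothing changes). Take the existential case; the universal case is dual. Suppose \((\traceAssign_{\setTraces}, 0) \modelsHyper \exists \pi\ \varphi'\), so there is \(\trace \in \setTraces\) with \((\traceAssign_{\setTraces}[\pi \mapsto \trace], 0) \modelsHyper \varphi'\). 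The assignment \(\traceAssign_{\setTraces}[\pi \mapsto \trace]\) is over \(\fr(\varphi')\), so the induction hypothesis (applied with the same witness \(\wit\)) gives \((\wit(\traceAssign_{\setTraces}[\pi \mapsto \trace]), 0) \modelsHyper \varphi'\). Since \(\wit(\traceAssign_{\setTraces}[\pi \mapsto \trace]) = \wit(\traceAssign_{\setTraces})[\pi \mapsto \wit(\trace)]\) by the convention on \(\wit(\traceAssign)\), and \(\wit(\trace) \in \setTraces'\) because \(\wit\) is total into \(\setTraces'\), we conclude \((\wit(\traceAssign_{\setTraces}), 0) \modelsHyper \exists \pi\ \varphi'\). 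The converse uses \(\wit^{-1}\) together with totality and bijectivity of \(\wit\), which guarantee that every witness trace in \(\setTraces'\) has a preimage; this is precisely why the equivalence is required to be a bijection rather than just a pair of maps.

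The main obstacle I anticipate is bookkeeping around the hypotheses of the induction: one must carefully track that (i) restricting a quantifier prefix keeps the formula in \(\HyperLTLC\) so the IH applies, (ii) the assignments at each stage are over exactly the free variables of the current subformula, so that the IH's hypothesis ``\(\Var(\traceAssign) = \fr(\varphi)\)'' is met, and (iii) the single witness \(\wit\) from \(\setTraces \Equiv_{(|\Var(\varphi)|,\LTLC)} \setTraces'\) works uniformly at every level — here it matters that the equivalence is stated with the fixed number \(k = |\Var(\varphi)|\) (the total number of variables, quantified or free), since the base-case application of Definition~\ref{def:hyperLTL:global:equiv} needs assignments of exactly that size. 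Everything else is routine propagation through the boolean connectives \(\neg, \vee\) (immediate, since the IH is a biconditional) and the temporal operators \(\Next, \Until\), which never appear as the outermost connective once we are below all quantifiers — they are absorbed into the base case. Theorem~\ref{thm:exp:global} then follows by specializing the lemma to sentences, where \(\fr(\varphi) = \emptyset\) and the empty assignment is the unique assignment of size \(0\).
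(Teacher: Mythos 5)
Your proposal is correct and follows essentially the same route as the paper: structural induction on the quantifier prefix, with the base case discharged via Proposition~\ref{prop:zipping} and the \(\Equiv_{\LTLC}\) hypothesis, and the quantifier cases handled through totality, surjectivity, and the identity \(\wit(\traceAssign[\pi\mapsto\trace])=\wit(\traceAssign)[\pi\mapsto\wit(\trace)]\), together with the observation that \(|\Var(Q\pi\,\varphi')|=|\Var(\varphi')|\) so the same witness serves at every level. Your symmetry remark correctly collapses the second biconditional onto the first, which the paper instead spells out separately.
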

\begin{proof}
We proceed by structural induction on HyperLTL formulas in the extension of the LTL class \(\LTLC\).
The base case follows from \(\Equiv_{\LTLC}\) being an equivalence on traces for formulas in \(\LTLC\) and Proposition \ref{prop:zipping}.
We only treat the induction case for \(\forall \pi\ \varphi\) and  \(\exists \pi\ \varphi\), the full proof is in appendix.

Assume by induction hypothesis (IH) that the statement holds for arbitrary \(\varphi \in \HyperLTLC\). 
Assume that \emph{(i)}
\(\setTraces \Equiv_{(|\Var(\forall \pi\ \varphi)|, \LTLC)} \setTraces'\).
Note that, wlog we can assume that quantifiers bind a variable already occurring in \(\varphi\), i.e.\
 \(|\Var(\forall \pi\ \varphi)| = |\Var(\varphi)|\). Then, \emph{(i')} \({\setTraces \Equiv_{(|\Var(\varphi)|, \LTLC)} \setTraces'}\), and it has the same witnesses as assumption~\emph{(i)}. 
Let \(\wit: \setTraces \rightarrow \setTraces'\) be a function that witnesses \emph{(i)}.
Now, consider arbitrary \(\traceAssign_{\setTraces}\) and \(\traceAssign_{\setTraces'}\), over \(\setTraces\) and \(\setTraces'\), s.t.\ \(\Var(\traceAssign_{\setTraces}) = \Var(\traceAssign_{\setTraces'}) =\fr(\forall \pi\ \varphi) = \fr(\varphi) \setminus\{\pi\}\). 
We prove next that, if  \((\traceAssign_{\setTraces}, 0) \models \forall \pi \ \varphi\) then \({(\wit(\traceAssign_{\setTraces}), 0) \models \forall \pi\ \varphi}\).

Assume that \({(\traceAssign_{\setTraces}, 0) \models \forall \pi \ \varphi}\), then
\((\star)\ \text{for all } \trace \in \setTraces: (\traceAssign_{\setTraces}[\pi \mapsto \trace], 0) \models \varphi.\)
By Definition~\ref{def:hyperLTL:global:equiv}, \(\Var(\wit(\traceAssign_{\setTraces})) = \Var(\traceAssign_{\setTraces})\). Thus, \(\Var(\wit(\traceAssign_{\setTraces})[\pi \mapsto \trace']) =
\Var(\traceAssign_{\setTraces}) \cup \{\pi\} = \fr(\varphi)\). We can apply the (IH), because \({\setTraces \Equiv_{(|\Var(\varphi)|, \LTLC)} \setTraces'}\), \(\wit\) witnesses it,
and for all \(\trace \in \setTraces\) then \(\traceAssign_{\setTraces}[\pi \mapsto \trace]\) is an assignment over \(\setTraces\). So, 
\(\text{for all } \trace \in \setTraces: (\wit(\traceAssign_{\setTraces}[\pi \mapsto \trace]), 0) \models \varphi.\)

Assume towards a contradiction that \((\wit(\traceAssign_{\setTraces}), 0) \not \models \forall \pi\ \varphi\).
Then,
\(\text{there exists } \trace' \in \setTraces'\) s.t.\ \((\wit(\traceAssign_{\setTraces})[\pi \mapsto \trace'], 0) \not \models \varphi.\)
We can apply the (IH), because
\(\Var(\wit(\traceAssign_{\setTraces})[\pi \mapsto \trace']) =  \fr(\varphi)\),  \emph{(i')} with \(\wit\) being one of its witnesses,
and for all \(\trace' \in \setTraces'\) then \(\wit(\traceAssign_{\setTraces})[\pi \mapsto \trace']\) is an assignment over \(\setTraces'\). Hence there exists \(\trace' \in \setTraces'\) s.t.\
\({(\wit^{-1}(\wit(\traceAssign_{\setTraces})[\pi \mapsto \trace']), 0) \not \models \varphi}.\)
Then, by  Definition~\ref{def:hyperLTL:global:equiv}, there exists
\(\trace' \in \setTraces'\) s.t.\ 
\((\wit^{-1}(\wit(\traceAssign_{\setTraces}))[\pi \mapsto \wit^{-1}(\trace')], 0) \not \models \varphi.\)
As \(\wit\) is a bijective function, \((\wit^{-1}(\wit(\traceAssign_{\setTraces})) = \traceAssign_{\setTraces}\), and so there exists
\(\trace' \in \setTraces'\) s.t.\ \((\traceAssign_{\setTraces}[\pi \mapsto \wit^{-1}(\trace')], 0) \not\models \varphi.\)
And this is equivalent to, there exists \(\trace' \in \setTraces'\) s.t.\
\(\trace = \wit^{-1}(\trace') \tAnd (\traceAssign_{\setTraces}[\pi \mapsto\trace], 0) \not\models \varphi.\)
Given that \(\wit\) is a surjective function, then
\(\text{there exists } \trace \in \setTraces:  (\traceAssign_{\setTraces}[\pi \mapsto\trace], i) \not\models \varphi.\)
This contradicts \((\star)\).

We prove now that if \((\traceAssign_{\setTraces}, 0) \models \exists \pi \ \varphi\) then \({(\wit(\traceAssign_{\setTraces}), 0) \models \exists \pi\ \varphi}\).
Assume that \((\traceAssign_{\setTraces}, 0) \models \exists \pi \ \varphi\), then there exists \(\trace \in \setTraces\) s.t.\
\((\traceAssign_{\setTraces}[\pi \mapsto \trace], 0) \models \varphi.\)
By Definition~\ref{def:hyperLTL:global:equiv}, \(\Var(\wit(\traceAssign_{\setTraces})) = \Var(\traceAssign_{\setTraces})\), and thus \(\Var(\wit(\traceAssign_{\setTraces})[\pi \mapsto \trace']) =
\Var(\traceAssign_{\setTraces}) \cup \{\pi\} = \fr(\varphi)\). Then, we can apply the (IH), because \(\setTraces \Equiv_{(|\Var(\varphi)|, \LTLC)} \setTraces'\) with \(\wit\) being one of its witnesses, and  for all \(\trace \in \setTraces\) then \(\traceAssign_{\setTraces}[\pi \mapsto \trace]\) is an assignment over \(\setTraces\). So, there exists \(\trace \in \setTraces\) s.t.\
\((\wit(\traceAssign_{\setTraces}[\pi \mapsto \trace]), 0) \models \varphi.\)
By Definition~\ref{def:hyperLTL:global:equiv} and \(\wit\) being a total function, then
\(\text{there exists } \trace \in \setTraces, \trace'\! =\!\wit(\trace) \tAnd (\wit(\traceAssign_{\setTraces})[\pi \mapsto \trace'], 0)\! \models\! \varphi.\)
Hence
\(\text{there exists } \trace' \in \setTraces': (\wit(\traceAssign_{\setTraces})[\pi \mapsto \trace'], 0) \models \varphi\), and so
%Hence
\((\wit(\traceAssign_{\setTraces}), 0) \models \exists \pi\ \varphi\).
\end{proof}

\begin{remark}
The other direction of the implication in Theorem~\ref{thm:exp:global} does not hold.
Consider the two set of traces below with valuations over \((x)\): 
\(\setTraces\! =\! \{1010^\omega\}\) and \({\setTraces\! =\! \{10010^\omega, 100010^\omega\}}\).
The sets have different cardinally, so there is no \(k\) and \(\LTLC\) for each they are \((k, \LTLC)\)-equivalent. 
However they are indistinguishable for all HyperLTL formulas with one trace quantifier and only until modalities, because the traces in \(\setTraces'\) are stutter-equivalent to the trace in \(\setTraces\).
\end{remark}

%\subsection{LTL Trace Equivalences}
%\label{sub:LTL:equiv}

Next we introduce some notions of equivalence over traces that are used later in our results.
We start by defining \emph{Globally LTL}, \(\LTLBox\), a LTL class with all formulas that have Globally, \(\Globally\), as the topmost and unique modal operator.
Then, 
\({\LTLBox = \{ \Globally \psi \ |\ \psi \text{ is a propositional formula}\}}\).

\begin{definition}\label{def:globally:equiv}
Two traces \(\trace\) and \(\trace'\) are \(\Equiv_{\LTLBox}\) equivalent iff
\(\{\trace[i] \ | \ i \in \nat\} = \{\trace'[j] \ | \ j \in \nat\}\).
\end{definition}

\begin{proposition}\label{prop:ltlbox:equiv}
For all all two traces \(\trace\)
and~\(\trace'\),
\(\trace \Equiv_{\LTLBox} \trace'\) iff, for all \(\varphi \in \LTLBox\),
\(\trace \models \varphi\) iff \(\trace' \models \varphi\).
\end{proposition}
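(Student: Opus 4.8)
The plan is to prove both directions of the biconditional for Proposition~\ref{prop:ltlbox:equiv}, namely that $\trace \Equiv_{\LTLBox} \trace'$ (meaning $\{\trace[i] \mid i \in \nat\} = \{\trace'[j] \mid j \in \nat\}$, i.e. the two traces have the same \emph{set of occurring valuations}) holds exactly when $\trace$ and $\trace'$ agree on every formula of the form $\Globally \psi$ with $\psi$ propositional.

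For the forward direction, assume $\trace \Equiv_{\LTLBox} \trace'$ and fix an arbitrary $\varphi = \Globally \psi \in \LTLBox$. First I would observe that, since $\psi$ is a propositional formula, whether $\trace[k\ldots] \models \psi$ depends only on the valuation $\trace[k]$ — this is immediate from the LTL semantics restricted to boolean connectives and propositional atoms. Hence $\trace \models \Globally \psi$ iff every valuation in the set $\{\trace[i] \mid i \in \nat\}$ satisfies $\psi$ (as a propositional formula). The same characterization holds for $\trace'$ with the set $\{\trace'[j] \mid j \in \nat\}$. Since these two sets are equal by hypothesis, $\trace \models \Globally\psi$ iff $\trace' \models \Globally\psi$.

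For the backward direction, I would argue contrapositively: suppose $\{\trace[i] \mid i\in\nat\} \neq \{\trace'[j] \mid j\in\nat\}$, and without loss of generality take a valuation $\val$ occurring in $\trace$ but not in $\trace'$ (the symmetric case is identical). The key step is to exhibit a propositional formula $\psi_{\val}$ that is satisfied by a valuation $w \in \vals_{\Prop}$ exactly when $w \neq \val$; since $\Prop$ is finite this is just the negation of the conjunction $\bigwedge_{a \in \Prop, \val(a)=1} a \wedge \bigwedge_{a \in \Prop, \val(a)=0} \neg a$. Then every valuation of $\trace'$ satisfies $\psi_{\val}$, so $\trace' \models \Globally \psi_{\val}$, whereas the position of $\trace$ carrying valuation $\val$ fails $\psi_{\val}$, so $\trace \not\models \Globally \psi_{\val}$. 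This gives a distinguishing formula in $\LTLBox$, completing the contrapositive.

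I do not expect any genuine obstacle here; the proposition is essentially a sanity check that the combinatorial equivalence $\Equiv_{\LTLBox}$ from Definition~\ref{def:globally:equiv} is exactly the indistinguishability relation for the class $\LTLBox$ (as required to instantiate Definition~\ref{def:equiv:ltl}). The one point needing mild care is making explicit that $\Prop$ being finite is what licenses the construction of the characteristic formula $\psi_\val$ for a single valuation; without finiteness the backward direction could fail. A secondary minor point is handling finite traces under the excerpt's indexing conventions, but since $\Globally\psi \equiv \psi \Until \mathrm{false}$ unfolds to ``$\psi$ holds at every defined position,'' the argument goes through verbatim with $\{\trace[i] \mid i \in \nat\}$ read as the set of valuations at positions within $|\trace|$.
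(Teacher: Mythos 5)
Your proposal is correct and follows essentially the same route as the paper's proof: the forward direction reduces satisfaction of $\Globally\psi$ to a condition on the set of occurring valuations, and the backward direction distinguishes the traces via the characteristic propositional formula of a valuation present in one trace but not the other. Your version is slightly more explicit than the paper's (you spell out the characteristic formula and note that finiteness of $\Prop$ is what makes it available), but the argument is the same.
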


The class \(\LTLX^n\) is the class of LTL formulas with up to \(n\) nesting of the \(\Next\) operator. In~\cite{kuvcera2005stuttering}, the authors introduce the notion of a letter being \(n\)-redundant in a trace. A letter is \(n\)-redundant if it is repeated for at least \(n\) consecutive times. Then, two traces are \(n\)-stutter equivalent if they are equal up to the deletion of \(n\)-redundant letters. The following Proposition~\ref{prop:redundant} is a direct consequence of the results in  \cite{kuvcera2005stuttering}.

\begin{definition}[\cite{kuvcera2005stuttering}]
A valuation at a time point \(i\) in a trace \(\trace\) is \(n\)-redundant iff 
\(\trace[i] = \trace[i+j]\) for all \(1 \leq j \leq n+1\).
For all \(n \in \nat\) we define the relation \(\prec_{n}\) over finite or infinite traces, as follows: \(\trace \prec_n \trace'\) iff \(\trace\) can be obtained from \(\trace'\) by deleting a \(n\)-redundant valuation. The relation \(\Equiv_n\) is the least equivalence over the set of all finite or infinite traces containing \(\prec_n\). Then, the traces \(\trace\) and \(\trace'\) are \(n\)-stutter equivalent iff \(\trace \Equiv_n \trace'\).
\end{definition}
\begin{proposition}[\cite{kuvcera2005stuttering}]\label{prop:redundant}
For all formulas \(\varphi \in \LTLX^n\), if  \(\trace \Equiv_n \trace'\), then \(\trace \models \varphi\) iff \(\trace' \models \varphi\).
\end{proposition}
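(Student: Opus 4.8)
The plan is to reduce the statement to the case of a \emph{single deletion} and then run a stutter-bisimulation argument, organised as a double induction: an outer induction on the \(\Next\)-depth \(n\), and, for each \(n\), an inner induction on the structure of \(\varphi \in \LTLX^n\).

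Since \(\Equiv_n\) is by definition the least equivalence containing \(\prec_n\), any two \(n\)-stutter-equivalent traces are linked by a finite chain of \(\prec_n\)-steps (in either direction); and since ``\(\trace\) and \(\trace'\) satisfy the same formulas of \(\LTLX^n\)'' is itself an equivalence relation, it suffices to show: if \(\trace \prec_n \trace'\) and \(\varphi \in \LTLX^n\), then \(\trace \models \varphi\) iff \(\trace' \models \varphi\). Writing \(w = \trace'[p]\) for the deleted \(n\)-redundant valuation and peeling off the parts before and after the maximal block of \(w\)'s containing position \(p\) (a single deletion affects only this block, and \(\trace,\trace'\) agree elsewhere), this reduces to comparing \(\trace' = x\, w^{b}\, u\) with \(\trace = x\, w^{b-1}\, u\), where \(b \geq n+2\).

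For the inductive step I would build a strictly increasing, injective map \(h\) from the positions of \(\trace\) into those of \(\trace'\) whose image omits exactly one position inside the \(w\)-block (the \emph{surplus} copy), and such that every matched pair of suffixes \(\trace[k\ldots]\) and \(\trace'[h(k)\ldots]\) is again either equal or related by \(\prec_n\): take \(h\) to be the identity on \(x\), on \(u\), and on the first \(b-n-1\) positions of the block, and a one-step shift on the last \(n\) block positions of \(\trace\); then the short suffixes \(w^{n}u, \ldots, w^{1}u\) of \(\trace\) are matched with the identical suffixes of \(\trace'\), while the long suffixes \(w^{\geq n+1}u\) of \(\trace\) are matched with the still-longer \(w^{\geq n+2}u\) of \(\trace'\), which are \(\prec_n\)-related. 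The atomic and Boolean cases of the inner induction are immediate. For \(\Next\psi\) with \(\psi \in \LTLX^{n-1}\) I would note that \(\trace[1\ldots]\) and \(\trace'[1\ldots]\) are again equal or \(\prec_{n-1}\)-related (the block still has \(\geq n+1\) copies, enough to host an \((n-1)\)-redundant valuation) and invoke the outer induction hypothesis at level \(n-1\). For \(\psi_1 \Until \psi_2\) I would transport a witness position \(j\) of \(\trace\) to \(h(j)\) in \(\trace'\), applying the inner induction hypothesis to \(\psi_2\) at the matched pair \(\trace[j\ldots],\trace'[h(j)\ldots]\) and to \(\psi_1\) at all earlier matched pairs; the only genuinely delicate point is a surplus position \(q\) of \(\trace'\) with \(q < h(j)\), for which one checks that \(\trace'[q\ldots] = w^{n+1}u\) coincides with \(\trace[k\ldots]\) for some \(k < j\), so that \(\psi_1\) is inherited there as well. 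The converse direction is symmetric.

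I expect the \(\Until\) case to be the main obstacle, since it is the only one in which the temporal operator looks past the deleted block. Keeping the matching \(h\) coherent — strictly increasing, with every suffix of a matched position provably \(\prec_n\)-or-equal, and with the single surplus position of \(\trace'\) fully accounted for — is precisely what forces the slack ``\(b \geq n+2\), one copy more than depth \(n\) needs'', and it is where the argument of \cite{kuvcera2005stuttering} does its real work; the outer base case \(n=0\) recovers the classical stuttering theorem for \(\LTL\) without \(\Next\), and everything else is bookkeeping over \(h\).
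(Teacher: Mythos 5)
Your proof sketch is correct, but it is worth noting that the paper does not prove this proposition at all: it is imported verbatim from Ku\v{c}era and Strej\v{c}ek's work on the general stuttering principle, and the text only remarks that it is ``a direct consequence'' of their results. What you have written is, in essence, a correct reconstruction of the standard proof of the $n$-stuttering theorem. All the load-bearing pieces are in place and check out against the paper's definitions: the reduction to a single $\prec_n$-step is licensed because $\Equiv_n$ is the \emph{least} equivalence containing $\prec_n$ while ``agrees on all of $\LTLX^n$'' is itself an equivalence; the deleted $n$-redundant valuation forces a maximal block of length $b \geq n+2$ (the paper's definition requires $\trace[i]=\trace[i+j]$ for $1 \leq j \leq n+1$); your matching $h$ makes every matched pair of suffixes equal or again $\prec_n$-related; the $\Next$ case correctly drops to level $n-1$ because the surviving block still has $\geq n+1$ copies; and the surplus position in the $\Until$ case is correctly discharged by identifying $\trace'[q\ldots] = w^{n+1}u$ with an earlier suffix of $\trace$. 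The only places where you lean on ``symmetric'' and ``bookkeeping'' (the backward $\Until$ direction, where the witness may itself be the surplus position and must be pulled back to $k$, and the degenerate cases where the block abuts the start of the trace) are genuinely routine given your setup. What your approach buys over the paper's is self-containedness; what the paper's citation buys is access to the sharper $(m,n)$-parameterised closure results of the reference, which your single-parameter induction does not reproduce but which the paper does not need.
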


We introduce a notion of indistinguishable sets of trace for time-prefixed Hypertrace logic.
Consider a time-prefixed formula that quantifies over \(k\) time points. Then, two sets of traces are \(k\)-point equivalent if for each possible \(k\)-tuple of time points there is a bijective translation between the sets of traces that makes them indistinguishable in the times of that tuple.

\begin{definition}
Two sets of traces, \(\setTraces\) and \(\setTraces'\), are \emph{\(k\)-point equivalent}, denoted by \({\setTraces \Equiv^{\syncS}_k \setTraces'}\), iff
for all \(k\)-tuples of time positions, \((i_1, \DDD i_k) \in \nat^k\), there exists a bijective and total function \(\wit: \setTraces \rightarrow \setTraces'\) s.t.\ for all \(\trace \in \setTraces\) we have 
\(\trace[i_j] = \wit(\trace)[i_j]\) and for all \(\trace' \in \setTraces'\) we have 
\(\trace'[i_j] = \wit^{-1}(\trace)[i_j]\), with \(1 \leq j \leq k\).
\end{definition}

\begin{theorem}\label{thm:time:indist}
For all time-prefixed Hypertrace sentences \(\varphi \in \TimeTFOL\) and all sets of traces, \(\setTraces\) and \(\setTraces'\), that are \(|\timeVars(\varphi)|\)-point equivalent, \(\setTraces \Equiv^{\syncS}_{|\timeVars(\varphi)|} \setTraces'\), where  \(|\timeVars(\varphi)|\) is the number of time variables in \(\varphi\), then 
\(\setTraces \models \varphi\) iff \(\setTraces' \models \varphi\).
\end{theorem}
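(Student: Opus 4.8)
The plan is to prove Theorem~\ref{thm:time:indist} by structural induction on time-prefixed Hypertrace sentences, following the same template as the proof of Lemma~\ref{lemma:exp:open}, but with the roles of time and traces swapped. The key conceptual point is that in \(\TimeTFOL\) all time quantifiers sit \emph{outside} the trace quantifiers, so once we have committed to a concrete tuple of time witnesses \((i_1,\ldots,i_k)\), the remaining formula is a Boolean combination of trace-quantified statements about atoms \(a(\traceVar, i_j)\), i.e.\ it only ever inspects the columns \(\trace[i_1],\ldots,\trace[i_k]\) of each trace. This is exactly what \(k\)-point equivalence is designed to preserve: for that tuple there is a bijection \(\wit:\setTraces\to\setTraces'\) that matches each trace with one agreeing on all of those columns.

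Concretely, I would first strengthen the statement to an open-formula version analogous to Lemma~\ref{lemma:exp:open}: fix a tuple \((i_1,\ldots,i_k)\) of time positions, let \(\wit\) be the bijection witnessing \(k\)-point equivalence \emph{for that tuple}, and show by induction on the structure of the trace-only subformula \(\psi\) (the part after all time quantifiers have been stripped, with its free time variables instantiated to concrete positions from the tuple) that for every trace assignment \(\traceAssign_{\setTraces}\) over the free trace variables of \(\psi\), \(\overline{\setTraces} \models \psi[\traceAssign_{\setTraces}]\) iff \(\overline{\setTraces'} \models \psi[\wit(\traceAssign_{\setTraces})]\), and symmetrically with \(\wit^{-1}\). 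The base case is the atom \(a(\traceVar, i_j)\): it holds iff \(\traceAssign_{\setTraces}(\traceVar)[i_j](a)=1\), and since \(\wit\) preserves the \(i_j\)-th valuation of every trace, this is equivalent to \(\wit(\traceAssign_{\setTraces})(\traceVar)[i_j](a)=1\); the predicate \(\timeDef(\traceVar, i_j)\) is handled the same way since it only depends on whether \(i_j < |\traceVar|\), and preserving the valuation at \(i_j\) implies both traces are defined there (or one must adopt the convention that matching columns includes matching the defined/undefined status — worth a sentence noting this). Boolean connectives are immediate from the IH; the universal and existential trace quantifier cases are verbatim copies of the corresponding cases in the proof of Lemma~\ref{lemma:exp:open}, using that \(\wit\) is a bijection on \(\setTraces\) so that ranging \(\trace\) over \(\setTraces\) and ranging \(\wit(\trace)\) over \(\setTraces'\) are interchangeable, and that \(\wit(\traceAssign_{\setTraces})[\traceVar \mapsto \wit(\trace)] = \wit(\traceAssign_{\setTraces}[\traceVar\mapsto\trace])\).

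With this lemma in hand, the theorem follows by peeling off the time quantifiers from the outside. For a universal time quantifier \(\forall \timeVar_j\,\varphi'\), given \(\setTraces \Equiv^{\syncS}_{k} \setTraces'\) we must show \(\overline{\setTraces}\models \forall \timeVar_j\,\varphi'\) iff \(\overline{\setTraces'}\models \forall\timeVar_j\,\varphi'\); we instantiate \(\timeVar_j\) to each natural number \(m\), observe that after fixing \(\timeVar_j = m\) together with any instantiation of the remaining time variables we land in the scope of the open lemma with a fully concrete tuple, and conclude. The slightly delicate point, and the one I expect to need the most care, is bookkeeping the interaction between the \emph{outer} time quantifiers (which choose the tuple, and for which the witness bijection \(\wit\) may vary) and the \emph{inner} trace quantifiers (for which \(\wit\) is held fixed): since \(k\)-point equivalence only promises a bijection after \emph{all} \(k\) time coordinates are pinned down, the induction on the time-quantifier prefix has to thread the quantifier alternation so that no trace quantifier is ever processed before its governing time coordinates are concrete. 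This is ensured precisely by the syntactic restriction defining \(\TimeTFOL\) — all time constraints and quantifiers precede all trace quantifiers — so the argument decomposes cleanly; the only real work is stating the open lemma with exactly the right free-variable hypotheses so that both directions of the equivalence go through simultaneously, exactly as in Lemma~\ref{lemma:exp:open}.
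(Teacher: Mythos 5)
Your proposal is correct and follows essentially the same route as the paper: the paper likewise first proves an open-formula lemma in which the time assignment is fixed (so the witness bijection \(\wit_{\traceAssign_{\nat}}\) is the one associated with that concrete tuple of positions), handles the trace-quantifier cases exactly as in Lemma~\ref{lemma:exp:open}, reduces the base case to the definition of \(k\)-point equivalence, and dismisses the time-quantifier prefix as trivial because the time assignment is identical on both sides. Your extra remarks on \(\timeDef\) and on the quantifier ordering guaranteeing that the tuple is fully concrete before any trace quantifier is processed are sound and merely make explicit what the paper leaves implicit.
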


% !TeX root = main.tex
\subsection{Point Semantics}
\label{sub:step_semantics}

The point semantics interpretation of independence considers each time point independently. Recall from Definition \ref{def:independence} that \(\ind{x}{y}{\syncS}\) is defined as:
\[\forall i \forall \traceVar  \forall \traceVar'  \exists \traceVar_{\exists}\ 
\big(\timeDef(\traceVar, i)  \wedge \timeDef(\traceVar', i) \big) \rightarrow
\big(\timeDef(\traceVar_{\exists}, i) \wedge (x(\traceVar,i) \leftrightarrow x(\traceVar_{\exists},i)) \wedge 
(y(\traceVar',i) \leftrightarrow y(\traceVar_{\exists},i))\big).\]

Globally HyperLTL is the extension of Globally LTL with trace quantifiers. 
We start by proving that no Globally HyperLTL formula can express  \emph{one-state independence with point semantics}, 
\(\setsetTraces^1_{\syncS} = \generatedSet{\ind{x}{y}{\syncS}}\).
Note that \(\ind{x}{y}{\syncS}\) is a Globally Hypertrace formula.

First, we define two families of models parameterized by a natural number s.t.\ one of them satisfies the one-state independence with point semantics while the other does not satisfy it. The parameter in the models guarantees that given a HyperLTL with \(n\) quantifiers there are enough traces in the models to prevent HyperLTL from distinguishing them.
We exploit the fact that while evaluating a HyperLTL formula we can compare simultaneously at most the same number of traces as the number of quantifiers.
Then, we prove that no Globally HyperLTL formula can distinguish between the two types of models. 
To prove this result, we show that there exists a \((k, \LTLBox)\)-equivalence between the models, where \(\Equiv_{\LTLBox}\) is an equivalence over traces for Globally LTL formulas.

\begin{definition}\label{def:models:step:sync_action}
We define below set of sets of traces \(\setTraces_n^{\pointW}\) and
\({\setTraces'}_n^{\pointW}\), for \(n\in \nat\) and with valuations over \((x,y)\):
\begin{align*}
&E_n = \{(11)^{n+2}(00)^\omega\} \cup \bigcup\limits_{0\leq j < n}\{(00)^{j}\ 10\  (00)^\omega,
(00)^{j}\ 01\  (00)^\omega\};\\
&\setTraces^{\pointW}_{n}= E_n \cup \{(00)^{n}\ 10\ 10\  (00)^\omega, (00)^{n}\ 01\ 01\  (00)^\omega\}; \tAnd\\
&{\setTraces'}^{\pointW}_{n} =E_n \cup \{(00)^{n}\ 10\ 00\  (00)^\omega, (00)^{n}\ 01\ 00\  (00)^\omega\}.
\end{align*}
\end{definition}

\begin{example}
For \(n=1\), we get the following sets of traces:
\begin{align*}
\setTraces^{\pointW}_1 = \{& 11\ 11\ 11\  (00)^\omega, 
&\setTraces'^{\pointW}_1 = \{ & 11\ 11\ 11\  (00)^\omega,\ \\
& 10\ 00\ 00\  (00)^\omega, && 10\ 00\ 00\  (00)^\omega,\ \\
& 01\ 00\ 00\  (00)^\omega, && 01\ 00\  00\  (00)^\omega,\ \\
%&\qquad \quad\  00\ 10\  00\ 00\  (00)^\omega, & 00\ 10\ 00\ 00\  (00)^\omega,\ \\
%&\qquad \quad\  00\ 01\  00\ 00\  (00)^\omega\} & 00\ 01\ 00\ 00\  (00)^\omega,\ \\
& 00\ 10\  10\  (00)^\omega, && 00\ 10\ 00\  (00)^\omega,\ \\
& 00\ 01\ 01\  (00)^\omega\} && 00\ 01\ 00\   (00)^\omega\}
\end{align*}
The set \(\setTraces^{\pointW}_1\) satisfies the condition that \(x\) is independent of \(y\) because at all time points we have all possible combinations of observations for \(x\) and \(y\). However, \(\setTraces^{\pointW}_1\) does not satisfy the requirement, because at time 2 we are missing traces with valuations \(10\) and \(01\) in \((x,y)\). Globally HyperLTL formulas with only one trace quantifier cannot distinguish between these sets of traces. 
\end{example}

\begin{lemma}\label{lemma:step:settraces_and_prop}
\(\setTraces^{\pointW}_{n} \in  \setsetTraces^1_{\syncS}\) and 
\({\setTraces'}^{\pointW}_{n} \notin  \setsetTraces^1_{\syncS}\). 
\end{lemma}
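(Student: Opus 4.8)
The statement has two parts: first, that $\setTraces^{\pointW}_{n}$ satisfies one-state point independence $\ind{x}{y}{\syncS}$; second, that ${\setTraces'}^{\pointW}_{n}$ does not. Recall that membership in $\setsetTraces^1_{\syncS} = \generatedSet{\ind{x}{y}{\syncS}}$ means: for every time point $i$ and every pair of traces $\traceVar, \traceVar'$ (both defined at $i$, which is automatic here since all traces are infinite), there exists a trace $\traceVar_{\exists}$ with $x(\traceVar_{\exists}, i) = x(\traceVar, i)$ and $y(\traceVar_{\exists}, i) = y(\traceVar, i)$ \emph{for that same $i$} (here the value of $x$ comes from $\traceVar$ and the value of $y$ from $\traceVar'$). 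In other words, for each $i$, the set of valuations $\setTraces[i] = \{\trace[i] \mid \trace \in \setTraces\}$ must contain, for every $a \in \{0,1\}$ realized as an $x$-value at time $i$ and every $b \in \{0,1\}$ realized as a $y$-value at time $i$, a valuation with $x = a$ and $y = b$. Concretely: if both $x$-values $0,1$ occur at time $i$ and both $y$-values $0,1$ occur at time $i$, then $\setTraces[i] \supseteq \{00, 01, 10, 11\}$; and if only one $x$-value (or only one $y$-value) occurs, the condition is correspondingly weaker.

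\medskip

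\noindent For the positive part, I would compute $\setTraces^{\pointW}_{n}[i]$ at each time $i$ and check the closure condition. At even times $2j$ with $0 \le j < n$, the traces contributing nonzero values are $(11)^{n+2}(00)^\omega$ (giving $11$), $(00)^j\,10\,(00)^\omega$ (giving $10$), and $(00)^j\,01\,(00)^\omega$ (giving $01$), plus all other traces giving $00$; so $\setTraces^{\pointW}_{n}[2j] = \{00,01,10,11\}$, which is closed. At odd times, only $(11)^{n+2}(00)^\omega$ can be nonzero (for $2j+1 < 2n+4$), giving $\setTraces^{\pointW}_{n}[2j+1] \subseteq \{00, 11\}$ — only one $x$-value and one $y$-value per trace pairing, but actually $\{00,11\}$ means both values occur; one checks directly that the pair $(x\text{-}{\rm val}, y\text{-}{\rm val})$ demanded is always $(0,0)$ or $(1,1)$, both present. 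The crucial times are $2n$ and $2n+2$: at time $2n$, the extra traces $(00)^n\,10\,10\,(00)^\omega$ and $(00)^n\,01\,01\,(00)^\omega$ contribute $10$ and $01$, and $(11)^{n+2}(00)^\omega$ contributes $11$, so again $\setTraces^{\pointW}_{n}[2n] = \{00,01,10,11\}$; at time $2n+2$, the same two extra traces \emph{again} contribute $10$ and $01$ (this is the second block $10$ resp.\ $01$), and $(11)^{n+2}(00)^\omega$ still contributes $11$ (since $2n+2 < 2n+4$), giving $\setTraces^{\pointW}_{n}[2n+2] = \{00,01,10,11\}$. All later times give only $00$. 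So every slice is closed and $\setTraces^{\pointW}_{n} \in \setsetTraces^1_{\syncS}$.

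\medskip

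\noindent For the negative part, the point is exactly the time $2n+2$: in ${\setTraces'}^{\pointW}_{n}$ the two extra traces are $(00)^n\,10\,00\,(00)^\omega$ and $(00)^n\,01\,00\,(00)^\omega$, which at time $2n+2$ contribute only $00$, while $(11)^{n+2}(00)^\omega$ still contributes $11$. Hence ${\setTraces'}^{\pointW}_{n}[2n+2] = \{00, 11\}$. But $x$-value $1$ occurs (from the all-ones trace) and $y$-value $0$ occurs (from, e.g., the $\ldots\,10\,00\,\ldots$ trace — at time $2n+2$ it is $00$), so taking $\traceVar$ to be the all-ones trace and $\traceVar'$ to be any trace with $y = 0$ at time $2n+2$, the independence requirement demands a trace $\traceVar_{\exists}$ with $x(\traceVar_{\exists}, 2n+2) = 1$ and $y(\traceVar_{\exists}, 2n+2) = 0$, i.e.\ a valuation $10$ at time $2n+2$ — which is absent. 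Therefore ${\setTraces'}^{\pointW}_{n} \notmodelsHyper \ind{x}{y}{\syncS}$ in the sense of the generated set, i.e.\ ${\setTraces'}^{\pointW}_{n} \notin \setsetTraces^1_{\syncS}$.

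\medskip

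\noindent The main obstacle is bookkeeping rather than conceptual: one must carefully verify the slice $\setTraces^{\pointW}_{n}[i]$ at \emph{every} time $i$ — in particular checking that the odd-time slices $\{00,11\}$ genuinely satisfy the (weaker) closure condition, and that the "doubled" blocks $10\,10$ and $01\,01$ in the definition of $\setTraces^{\pointW}_{n}$ are precisely what rescues time $2n+2$, which is the single time point where ${\setTraces'}^{\pointW}_{n}$ fails. Everything else is a routine case split on parity and on whether $i$ lies before, at, or after the indices $2n$ and $2n+2$.
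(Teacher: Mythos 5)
Your overall strategy --- compute the per-time slices $\setTraces[i]$ and check that each one is closed under recombining an observed $x$-value with an observed $y$-value --- is exactly the paper's, and your negative part is right in substance. But the positive part contains a genuine error. You assert that a slice equal to $\{00,11\}$ satisfies the point-independence condition because ``the pair $(x\text{-val},y\text{-val})$ demanded is always $(0,0)$ or $(1,1)$''. It is not: in $\ind{x}{y}{\syncS}$ the traces $\traceVar$ and $\traceVar'$ range independently, so taking $\traceVar$ with valuation $11$ and $\traceVar'$ with valuation $00$ demands a witness with $x=1$ and $y=0$, i.e.\ the valuation $10$, which is absent from $\{00,11\}$. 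A slice equal to $\{00,11\}$ therefore \emph{violates} the condition --- indeed this is precisely the violation you correctly exhibit for ${\setTraces'}^{\pointW}_{n}$ at its one bad time point, so your positive and negative parts contradict each other: if your odd-time claim were right, the negative part would fail too.

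The spurious $\{00,11\}$ slices arise only because you have mis-indexed the traces. Each space-separated token such as $10$ is a single valuation over $(x,y)$ occupying a single time point (see the paper's $n=1$ example), so in $(00)^{j}\,10\,(00)^\omega$ the valuation $10$ sits at time $j$, not $2j$; $(11)^{n+2}$ occupies times $0,\dots,n+1$; and the doubled blocks of the extra traces sit at times $n$ and $n+1$. There are no ``odd times in between'': every slice of $\setTraces^{\pointW}_{n}$ is either $\{00,01,10,11\}$ (for $i\le n+1$) or $\{00\}$ (for $i\ge n+2$), both of which are genuinely closed, so the positive part goes through with no $\{00,11\}$ case to handle. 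The failure of ${\setTraces'}^{\pointW}_{n}$ is at time $n+1$ (not $2n+2$), where the slice is $\{00,11\}$. (Incidentally, the paper's own proof writes this slice as $\{10,00\}$; that is a typo for $\{11,00\}$, since $\{10,00\}$ would in fact satisfy the condition.)
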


\begin{lemma} \label{lemma:step:settraces:equiv} For all \(n\in \nat\),
\(\setTraces^{\pointW}_{n} \Equiv_{(n, \LTLBox)} {\setTraces'}^{\pointW}_{n}\).
\end{lemma}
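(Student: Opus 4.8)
The plan is to exhibit, for each $k$-tuple... wait, no — this is the trace-prefixed case, so I need a single bijection $\wit : \setTraces^{\pointW}_n \to {\setTraces'}^{\pointW}_n$ that works for *all* assignments of size $n$, and I must check the flattened assignments are $\Equiv_{\LTLBox}$-equivalent (i.e., have the same set of letters, per Definition~\ref{def:globally:equiv}).

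Let me think about the structure. Both sets contain $E_n$ (shared), plus two "special" traces. In $\setTraces^{\pointW}_n$ the specials are $(00)^n\,10\,10\,(00)^\omega$ and $(00)^n\,01\,01\,(00)^\omega$; in ${\setTraces'}^{\pointW}_n$ they are $(00)^n\,10\,00\,(00)^\omega$ and $(00)^n\,01\,00\,(00)^\omega$. So $\wit$ should be the identity on $E_n$ and map the two specials of the first set to the two specials of the second set (matching $10$-special to $10$-special, $01$-special to $01$-special). Note $|E_n| = 1 + 2n$ and there are $2$ specials each, so $|\setTraces^{\pointW}_n| = |{\setTraces'}^{\pointW}_n| = 2n+3$, and $\wit$ is a bijection.

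Now the key point: I must show that for any assignment $\traceAssign$ of $n$ trace variables into $\setTraces^{\pointW}_n$, the flattened trace $\flatT{\traceAssign}$ and $\flatT{\wit(\traceAssign)}$ have the same set of valuations appearing. Since $\wit$ is identity on $E_n$ and only differs on the specials at the single time point $n+1$ (where $10$-special has $10$ vs $00$, and $01$-special has $01$ vs $00$), the flattened traces can only differ at position $n+1$. The crucial combinatorial fact is that an assignment uses at most $n$ of the $2n+3$ traces, so at least... hmm, I need to argue that whatever valuation $\flatT{\wit(\traceAssign)}$ has at position $n+1$ also appears somewhere else in $\flatT{\wit(\traceAssign)}$, and vice versa — so deleting/changing position $n+1$ doesn't change the letter-set. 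This is where I'd use that with only $n$ variables, I cannot simultaneously witness all the "isolated pulse" traces in $E_n$ (there are $2n$ of them, with pulses at times $0,\dots,n-1$), so the flattened trace repeats valuations enough that position $n+1$ is never the unique occurrence of its letter. I'd do a careful case analysis on which of the specials appear in the range of $\traceAssign$. This case analysis — showing the letter at position $n+1$ always recurs elsewhere in the flattened word on both sides — is the main obstacle, and I expect it to hinge on a counting argument about how many distinct "pulse positions" an $n$-variable assignment can cover.

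Finally, the symmetric condition ($\flatT{\traceAssign'} \Equiv_{\LTLBox} \flatT{\wit^{-1}(\traceAssign')}$ for assignments over ${\setTraces'}^{\pointW}_n$) follows by the identical argument with the roles swapped, since $\wit^{-1}$ has the same structure (identity on $E_n$, swapping the two pairs of specials).
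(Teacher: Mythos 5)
Your proposal takes exactly the paper's route: the same witness $\wit$ (identity on $E_n$, matching the $10$-special to the $10$-special and the $01$-special to the $01$-special), the same observation that the two flattened words can differ only at position $n+1$, and the same pigeonhole on pulse positions. So the approach is correct and essentially identical to the paper's; the only thing you left undone is the case analysis you flag as ``the main obstacle,'' and that does close — in fact more easily than you anticipate. On the $\flatT{\traceAssign}$ side no counting is needed at all: every trace of $\setTraces^{\pointW}_n$ carries the same valuation at positions $n$ and $n+1$ (the all-ones trace reads $11$ at both, a pulse trace with pulse at $j<n$ reads $00$ at both, and each special repeats its pulse value), so $\flatT{\traceAssign}[n+1]=\flatT{\traceAssign}[n]$, and position $n$ is untouched by $\wit$. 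The counting is needed only on the image side: if no variable is assigned a special trace, $\wit$ acts as the identity on the range and there is nothing to show; otherwise at most $n-1$ of the $n$ variables land on pulse traces, so some pulse position $k<n$ is uncovered, and at that $k$ each assigned trace reads in $\flatT{\wit(\traceAssign)}$ exactly its value at $n+1$ ($00$ for the images of the specials, $11$ for the all-ones trace, $00$ for every pulse trace since none pulses at $k$); hence $\flatT{\wit(\traceAssign)}[n+1]=\flatT{\wit(\traceAssign)}[k]$. Both exceptional letters therefore already occur at positions where the two words agree, so the letter sets coincide, and the $\wit^{-1}$ direction is symmetric — which is precisely the paper's argument.
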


\begin{theorem}\label{thm:step}
For all Globally HyperLTL formulas \(\varphi \in 2^{\LTLBox}\),
\(\generatedSet{\varphi} \neq \setsetTraces^{\sync}_{\syncS}\).
\end{theorem}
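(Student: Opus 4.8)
The plan is to reduce Theorem~\ref{thm:step} to the three lemmas that precede it, namely Lemma~\ref{lemma:step:settraces_and_prop}, Lemma~\ref{lemma:step:settraces:equiv}, and Theorem~\ref{thm:exp:global}. First I would observe that it suffices to prove the stronger statement about \emph{one-state} independence with point semantics, \(\setsetTraces^1_{\syncS} = \generatedSet{\ind{x}{y}{\syncS}}\), and then transfer it to the two-state synchronous point variant \(\setsetTraces^{\sync}_{\syncS}\). The transfer works because the families \(\setTraces^{\pointW}_n\) and \({\setTraces'}^{\pointW}_n\) can be adapted (or directly reused, after adding a fresh state-change variable \(a\) that becomes true synchronously at some fixed time, e.g.\ at position \(0\) or never in the relevant window) so that in both families the segment before the action is empty or trivially satisfies \(\ind{x}{z}{\syncS}\), while the segment before the change carries exactly the one-state obligation \(\ind{x}{y}{\syncS}\); hence membership in \(\setsetTraces^{\sync}_{\syncS}\) reduces to membership in \(\setsetTraces^1_{\syncS}\). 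I would make this bridging construction explicit so that Lemma~\ref{lemma:step:settraces_and_prop} and Lemma~\ref{lemma:step:settraces:equiv} apply verbatim to the modified families.

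Next I would argue by contradiction. Suppose some Globally HyperLTL formula \(\varphi \in 2^{\LTLBox}\) satisfies \(\generatedSet{\varphi} = \setsetTraces^{\sync}_{\syncS}\). Let \(k = |\Var(\varphi)|\) be its number of trace quantifiers. Pick \(n = k\) (or any \(n \geq k\), using that \((k,\LTLBox)\)-equivalence for the chosen \(n\) still implies indistinguishability by formulas with at most \(n\) quantifiers). By Lemma~\ref{lemma:step:settraces:equiv}, \(\setTraces^{\pointW}_n \Equiv_{(n,\LTLBox)} {\setTraces'}^{\pointW}_n\), and since \(\LTLBox\) is the Globally-LTL class with \(\Equiv_{\LTLBox}\) an equivalence on traces for it (Proposition~\ref{prop:ltlbox:equiv}), Theorem~\ref{thm:exp:global} gives \(\setTraces^{\pointW}_n \models \varphi \iff {\setTraces'}^{\pointW}_n \models \varphi\). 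But by Lemma~\ref{lemma:step:settraces_and_prop}, \(\setTraces^{\pointW}_n \in \setsetTraces^1_{\syncS}\) and \({\setTraces'}^{\pointW}_n \notin \setsetTraces^1_{\syncS}\), which (after the bridging step) means \(\setTraces^{\pointW}_n \in \generatedSet{\varphi}\) and \({\setTraces'}^{\pointW}_n \notin \generatedSet{\varphi}\), i.e.\ \(\setTraces^{\pointW}_n \models \varphi\) and \({\setTraces'}^{\pointW}_n \not\models \varphi\) — a contradiction.

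One technical point I would be careful about is matching \(n\) to \(k\): Theorem~\ref{thm:exp:global} requires \((k,\LTLBox)\)-equivalence with exactly \(k = |\Var(\varphi)|\), whereas Lemma~\ref{lemma:step:settraces:equiv} gives \((n,\LTLBox)\)-equivalence. I would note that \((n,\LTLBox)\)-equivalence implies \((k,\LTLBox)\)-equivalence for every \(k \leq n\) (a witnessing bijection for \(n\)-sized assignments witnesses all smaller ones, since an assignment of size \(k\) extends to one of size \(n\) and the flattening of the extension restricts to the flattening of the original), so choosing \(n = k\) or any \(n \geq k\) is safe. The main obstacle, and the part deserving the most care in the write-up, is the bridging from one-state to two-state synchronous point independence: I must verify that the added state-change predicate \(a\) really is synchronous across all traces in both families (so the \(\exists i\, \forall \traceVar\ \text{min}(\traceVar,a,i)\) conjunct of \(\setsetTraces^{\sync}_{\syncS}\) holds for both), that the \(\DDD a\)-slice reproduces the one-state obligation on \(x,y\) while the \(a\DDD\)-slice trivially meets the \(x,z\) obligation, and that adding \(a\) does not break the \((n,\LTLBox)\)-equivalence — which holds because \(a\) is assigned identically and in the same set-of-valuations-per-position pattern in \(\setTraces^{\pointW}_n\) and \({\setTraces'}^{\pointW}_n\), so \(\Equiv_{\LTLBox}\) of the flattenings is preserved under the same witnessing bijection. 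Once this is in place, the contradiction argument is immediate.
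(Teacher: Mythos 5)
Your core argument coincides with the paper's: both use the families \(\setTraces^{\pointW}_n\), \({\setTraces'}^{\pointW}_n\), Lemma~\ref{lemma:step:settraces_and_prop}, Lemma~\ref{lemma:step:settraces:equiv}, and Theorem~\ref{thm:exp:global} to show that one-state point independence \(\setsetTraces^{1}_{\syncS}\) is not expressible in Globally HyperLTL, and your observation about instantiating \(n = |\Var(\varphi)|\) is exactly what the paper does implicitly. Where you genuinely diverge is the bridge from the one-state to the two-state synchronous property: the paper argues by a \emph{syntactic} substitution on the hypothetical formula (if \(\generatedSet{\varphi} = \setsetTraces^{\sync}_{\syncS}\) then \(\varphi[z \mapsto y]\) would generate \(\setsetTraces^{1}_{\syncS}\), contradiction), whereas you bridge \emph{semantically}, by enriching the counterexample models with \(a\) (and, necessarily, \(z\)) so that they land inside and outside \(\setsetTraces^{\sync}_{\syncS}\) directly. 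Your route costs you a re-verification of both lemmas for the enriched families, but this is routine since the added variables are uniform across corresponding traces and so preserve the \(\Equiv_{\LTLBox}\) (same-set-of-letters) relation between flattenings; in exchange you avoid the paper's substitution step, which is the more delicate of the two (the substituted formula characterizes sets whose \emph{slices} satisfy \(\ind{x}{y}{}\) and which have a synchronous action, which is not literally \(\setsetTraces^{1}_{\syncS}\); your model-side bridge sidesteps this entirely).

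One detail to fix in the write-up: your parenthetical instantiations of the bridge do not work as stated. If \(a\) holds ``never,'' the conjunct \(\exists i\,\forall\traceVar\ \text{min}(\traceVar,a,i)\) fails outright, so \emph{neither} family would be in \(\setsetTraces^{\sync}_{\syncS}\); if \(a\) holds at position \(0\), the before-slice is empty and the entire obligation falls on the \((x,z)\)-pair in the after-slice, so you would additionally have to rename \(y\) to \(z\) in the models. The clean choice is to set \(a\) to first hold synchronously at some position \(i_0 > n+1\) (inside the constant \((00)^\omega\) tail) and to add a constant \(z\): then the before-slice contains the entire interesting prefix and carries exactly the one-state \(\ind{x}{y}{\syncS}\) obligation (satisfied by \(\setTraces^{\pointW}_n[\DDD a]\), violated at position \(n+1\) by \({\setTraces'}^{\pointW}_n[\DDD a]\)), while the after-slice consists of identical constant suffixes and satisfies \(\ind{x}{z}{\syncS}\) trivially. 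With that correction your argument goes through.
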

\begin{proof}
By  Lemma \ref{lemma:step:settraces_and_prop}, 
Lemma \ref{lemma:step:settraces:equiv} and Theorem \ref{thm:exp:global}, it follows that 
for all HyperLTL formulas in the class extending globally LTL, \(\varphi \in 2^{\LTLBox}\), 
\(\generatedSet{\varphi_{\LTLBox}} \neq \setsetTraces^{1}_{\syncS}\).
Assume towards a contradiction that there exists a Globally HyperLTL formula \(\varphi\) s.t.\ 
\(\generatedSet{\varphi} = \setsetTraces^{\sync}_{\syncS}\).
Then, we define \(\varphi_y = \varphi[z \mapsto y]\) where \([z \mapsto y]\) substitutes all occurrence of \(z\) by \(y\).
Then, \(\generatedSet{\varphi_y} = \setsetTraces^{1}_{\syncS}\).
This is a contradiction, and so for all Globally HyperLTL formulas \(\varphi\), \(\generatedSet{\varphi} \neq \setsetTraces^{\sync}_{\syncS}\).
\end{proof}

We conjecture that this result extends to all HyperLTL formulas. 
Globally hypertrace formulas enforce a requirement over all time points that must be satisfied independently by them. Intuitively, such properties can be only expressed with HyperLTL formulas that are equivalent to a globally HyperLTL formula.

It is not surprising that time-prefixed hypertrace formulas can express two-state
independence under point semantics with synchronous action. We conjecture that this is the only variant it can express.

\begin{theorem}\label{thm:time:point:sync}
Consider the following time-prefixed hypertrace formula:
\[
\begin{split}
\varphi^{\sync}_{\text{time}} \overset{\text{def}}{=} & \ \exists j \forall i < j \forall k \leq j \forall \traceVar \forall \traceVar' \exists \traceVar_{\exists} \\
&\ \ \big(
\neg a(\traceVar, i) \wedge \neg a(\traceVar', i)  \wedge
(x(\traceVar,i) \leftrightarrow x(\traceVar_{\exists},i)) \wedge 
(y(\traceVar',i) \leftrightarrow y(\traceVar_{\exists},i))\big) \wedge \\
&\ \ \big(
a(\traceVar, j) \wedge a(\traceVar', j)  \wedge 
(x(\traceVar,k) \leftrightarrow x(\traceVar'_{\exists},k)) \wedge 
(z(\traceVar',k) \leftrightarrow z(\traceVar'_{\exists},k))\big)
\end{split}
\]
Then, \(\generatedSet{\varphi^{\sync}_{\text{time}}} =   \setsetTraces^{\sync}_{\syncS}\).
\end{theorem}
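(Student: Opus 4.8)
The plan is to prove the two inclusions $\generatedSet{\varphi^{\sync}_{\text{time}}} \subseteq \setsetTraces^{\sync}_{\syncS}$ and $\setsetTraces^{\sync}_{\syncS} \subseteq \generatedSet{\varphi^{\sync}_{\text{time}}}$ separately, in each case by unfolding the first-order semantics of $\varphi^{\sync}_{\text{time}}$ on the structure $\overline{\setTraces}$ and matching it, clause by clause, against Definitions~\ref{def:independence}, \ref{def:slicing} and~\ref{def:prop:two_state_indep}. Since $\setsetTraces^{\sync}_{\syncS} = \{\setTraces \mid \setTraces \in \setsetTraces^{\async}_{\syncS} \tAnd \setTraces \models \exists i\, \forall \traceVar\, \text{min}(\traceVar, a, i)\}$, it suffices to show that the outer $\exists j$ of $\varphi^{\sync}_{\text{time}}$ captures the synchronous-action clause, and that, once that clause holds, the matrix of $\varphi^{\sync}_{\text{time}}$ captures membership in $\setsetTraces^{\async}_{\syncS}$. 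I will carry this out in the setting of infinite traces, where $\timeDef$ is vacuous on $\overline{\setTraces}$ (hence absent from $\varphi^{\sync}_{\text{time}}$) and only becomes nontrivial on the ``before'' slice; the finite-trace case only adds a $\timeDef$-bookkeeping layer.

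\textbf{Step 1: the witness $j$ is the common transition point.} Instantiating $\traceVar' := \traceVar$ in the matrix of $\varphi^{\sync}_{\text{time}}$ extracts, for the chosen $j$, the conjuncts $\neg a(\traceVar, i)$ for every $i < j$ and $a(\traceVar, j)$, for every trace $\traceVar$; together these say exactly $\forall \traceVar\, \text{min}(\traceVar, a, j)$. Hence every model of $\varphi^{\sync}_{\text{time}}$ satisfies $\exists i\, \forall \traceVar\, \text{min}(\traceVar, a, i)$: the action is synchronous, $a$ occurs in every trace, and the transition point is this (unique) $j$. Conversely, if $\setTraces$ has a synchronous action at a point $j$, then $\neg a(\traceVar, i)$ (for $i < j$) and $a(\traceVar, j)$ hold in every trace, so $j$ is the right choice for the outer existential.

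\textbf{Step 2: the matrix is the conjunction of the two point-independences over the slices.} Assuming a synchronous action at $j$, the slices simplify to $\setTraces[\DDD a] = \{\trace[\DDD j] \mid \trace \in \setTraces\}$ and $\setTraces[a\DDD] = \{\trace[j\DDD] \mid \trace \in \setTraces\}$, and the index shifts $i \mapsto i$ (for $i < j$) and $i \mapsto j + i$ translate the atoms $\timeDef, x, y, z$ on $\overline{\setTraces[\DDD a]}$ and $\overline{\setTraces[a\DDD]}$ into the corresponding atoms on $\overline{\setTraces}$ restricted to the time ranges $\{0, \dots, j-1\}$ and $\{j, j+1, \dots\}$. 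Plugging this into Definition~\ref{def:independence} rewrites $\overline{\setTraces[\DDD a]} \models \ind{x}{y}{\syncS}$ as ``for all $i < j$ and all $\traceVar, \traceVar'$ there is a trace agreeing with $\traceVar$ on $x$ and with $\traceVar'$ on $y$ at time $i$'', and $\overline{\setTraces[a\DDD]} \models \ind{x}{z}{\syncS}$ as the analogous statement over the times of the second state, with $z$ in place of $y$. Reading off the two universally quantified time variables, pulling them to the front, and taking one existential witness per conjunct ($\traceVar_{\exists}$ for the ``before'' part and $\traceVar'_{\exists}$ for the ``after'' part, so that no joint witness is needed) reproduces the matrix of $\varphi^{\sync}_{\text{time}}$, the synchronous-action conjuncts being those already handled in Step~1. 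Combining Steps~1--2 in both directions gives $\setTraces \models \varphi^{\sync}_{\text{time}}$ iff $\setTraces \in \setsetTraces^{\sync}_{\syncS}$.

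The first-order unfolding and the quantifier reordering are routine; the delicate point, which I expect to be the main obstacle, is the interaction between the $\timeDef$ guards inside $\ind{\cdot}{\cdot}{\syncS}$ and the lengths of the traces appearing in the slices, together with the boundary behaviour of the $\exists j\, \forall i < j$ prefix. One has to check that the ``compare up to the common length'' convention of Definition~\ref{def:prop:two_state_indep} is mirrored faithfully by those guards after the index shift, so that restricting to infinite traces (where the two slices are length-homogeneous) is genuinely harmless for the statement; and one has to treat carefully the case in which the ``before'' slice is empty --- namely when the transition point is $0$, so that $\forall i < j$ ranges over nothing --- as well as the degenerate case $\setTraces = \emptyset$, checking that both sides of the equivalence agree on them.
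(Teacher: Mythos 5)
Your proposal follows essentially the same route as the paper's own proof: the paper likewise first observes that \(\varphi^{\sync}_{\text{time}}\) entails \(\exists j\,\forall i<j\,\forall\traceVar\ \neg a(\traceVar,i)\wedge a(\traceVar,j)\) (your Step~1), and then, for sets with synchronous action, splits the matrix into two conjuncts and identifies them with \(\setTraces[\DDD a]\models\ind{x}{y}{\syncS}\) and \(\setTraces[a\DDD]\models\ind{x}{z}{\syncS}\) via the same index shift (your Step~2). The boundary issues you flag (the \(\timeDef\) guards after slicing and the degenerate cases) are likewise left implicit in the paper's proof, so no further work is needed beyond what you outline.
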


% !TeX root = main.tex
\subsection{Segment Semantics}
\label{sub:segment_semantics}

The segment semantics of independence compares between whole observations of a state in a system.
Recall from Definition \ref{def:independence} that \(\ind{x}{y}{\syncO}\) is defined as:
\[\forall \traceVar  \forall \traceVar'  \exists \traceVar_{\exists} \forall i \ 
\big(\timeDef(\traceVar, i)  \wedge \timeDef(\traceVar', i) \big) \rightarrow
\big(\timeDef(\traceVar_{\exists}, i) \wedge (x(\traceVar,i) \leftrightarrow x(\traceVar_{\exists},i)) \wedge 
(y(\traceVar',i) \leftrightarrow y(\traceVar_{\exists},i))\big).\]

We prove that HyperLTL can express the two-state segments independence with synchronous action, while both asynchronous and hidden action are not expressible.

The intuitive HyperLTL formula for the two-state segments independence entails that the action is synchronous.  So, we already cannot expect to rely on the proposition \(a\) to slice our traces accurately, when the action is asynchronous. To prove that HyperLTL cannot express the property in this scenario, we exploit the fact that we need to compare arbitrarily distant time points from different observations. 

\begin{theorem}\label{thm:sync:sync:overall}
Consider the following HyperLTL formula: 
\[\varphi^{\sync}_{\syncO} \overset{\text{def}}{=}
\forall \traceVar \forall \traceVar' \exists \traceVar_{\exists} \exists \traceVar'_{\exists}\, 
(\neg a_{\traceVar} \wedge \neg a_{\traceVar'} \wedge x_{\traceVar} = x_{\traceVar{\exists}} \wedge y_{\traceVar'} = y_{\traceVar_{\exists}}\!) \Until
(a_{\traceVar} \wedge a_{\traceVar'} \wedge \Box (x_{\traceVar} = x_{\traceVar_{\exists}} \wedge z_{\traceVar'} = z_{\traceVar'_{\exists}}))\]
Then, \(\generatedSet{\varphi^{\sync}_{\syncO}} =   \setsetTraces^{\sync}_{\syncO}\).
\end{theorem}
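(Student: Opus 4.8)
The plan is to establish the two inclusions $\generatedSet{\varphi^{\sync}_{\syncO}} \subseteq \setsetTraces^{\sync}_{\syncO}$ and $\setsetTraces^{\sync}_{\syncO} \subseteq \generatedSet{\varphi^{\sync}_{\syncO}}$. Throughout I would work with sets of infinite traces, the setting in which HyperLTL is interpreted, so that the $\timeDef$-guards of Definition~\ref{def:independence} are automatically met on full traces, while the slice $\setTraces[\DDD a]$ still consists of finite words whose common length is the switch point. The empty set of traces lies in both sides trivially, so I may assume $\setTraces \neq \emptyset$.

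\textbf{From the formula to the property.} Assume $\setTraces \models \varphi^{\sync}_{\syncO}$. The first step is to recover the synchronicity clause $\setTraces \models \exists i\,\forall \traceVar\ \text{min}(\traceVar, a, i)$: instantiating both universal trace variables by a single $\trace \in \setTraces$, the formula guarded before the $\Until$-pivot forces $a$ to fail strictly before that pivot and the formula at the pivot forces $a$ to hold there, so the pivot is exactly the first $a$-position of $\trace$, and in particular $a$ occurs in every trace; instantiating the two universal variables by an arbitrary pair $\trace_1, \trace_2$ then forces their first $a$-positions to coincide, since the single $\Until$-pivot $j$ must be the first $a$-position of $\trace_1$ and of $\trace_2$ simultaneously. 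Call $i^{*}$ this common position; then $\setTraces[\DDD a] = \{\trace[\DDD i^{*}] \mid \trace \in \setTraces\}$ is a set of words of length $i^{*}$ and $\setTraces[a\DDD] = \{\trace[i^{*}\DDD] \mid \trace \in \setTraces\}$. For the prefix requirement, given $\trace_1, \trace_2 \in \setTraces$ I instantiate $\varphi^{\sync}_{\syncO}$ with $\traceVar = \trace_1$, $\traceVar' = \trace_2$; the $\Until$-pivot is $i^{*}$, so the part before the pivot yields, for all $t < i^{*}$, $x_{\trace_1}[t] = x_{\traceVar_{\exists}}[t]$ and $y_{\trace_2}[t] = y_{\traceVar_{\exists}}[t]$, i.e.\ $\traceVar_{\exists}[\DDD i^{*}] \in \setTraces[\DDD a]$ is a witness for $\ind{x}{y}{\syncO}$ applied to $\trace_1[\DDD i^{*}]$ and $\trace_2[\DDD i^{*}]$. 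Symmetrically the $\Box$ reached at the pivot yields $x_{\trace_1}[t] = x_{\traceVar'_{\exists}}[t]$ and $z_{\trace_2}[t] = z_{\traceVar'_{\exists}}[t]$ for all $t \geq i^{*}$, so $\traceVar'_{\exists}[i^{*}\DDD] \in \setTraces[a\DDD]$ witnesses $\ind{x}{z}{\syncO}$ on $\trace_1[i^{*}\DDD]$ and $\trace_2[i^{*}\DDD]$. Since every element of $\setTraces[\DDD a]$ (respectively $\setTraces[a\DDD]$) arises from some $\trace_1$ (respectively $\trace_2$) in this way, both segment independences hold, hence $\setTraces \in \setsetTraces^{\sync}_{\syncO}$.

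\textbf{From the property to the formula.} Conversely let $\setTraces \in \setsetTraces^{\sync}_{\syncO}$, and let $i^{*}$ be the common first $a$-position given by the synchronicity clause. Fix $\trace_1, \trace_2 \in \setTraces$. Applying $\setTraces[\DDD a] \models \ind{x}{y}{\syncO}$ to the prefixes $\trace_1[\DDD i^{*}]$ and $\trace_2[\DDD i^{*}]$ gives a witness in $\setTraces[\DDD a]$, hence of the form $\traceVar_{\exists}[\DDD i^{*}]$ for some $\traceVar_{\exists} \in \setTraces$ whose first $a$-position is $i^{*}$, agreeing with $\trace_1$ on $x$ and with $\trace_2$ on $y$ throughout $[0,i^{*})$. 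Applying $\setTraces[a\DDD] \models \ind{x}{z}{\syncO}$ to the suffixes $\trace_1[i^{*}\DDD]$ and $\trace_2[i^{*}\DDD]$ gives $\traceVar'_{\exists} \in \setTraces$, again with first $a$-position $i^{*}$, agreeing with $\trace_1$ on $x$ and with $\trace_2$ on $z$ from $i^{*}$ onward. One then verifies that the quantifier-free kernel of $\varphi^{\sync}_{\syncO}$ holds at position $0$ with $\Until$-pivot $i^{*}$: for every $t < i^{*}$ the conjuncts $\neg a_{\trace_1}$ and $\neg a_{\trace_2}$ hold (the first $a$-position is $i^{*} > t$) together with the $x$- and $y$-equalities by the choice of $\traceVar_{\exists}$, and at position $i^{*}$ both $a_{\trace_1}$ and $a_{\trace_2}$ hold while the $\Box$ holds from $i^{*}$ by the choice of $\traceVar'_{\exists}$. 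Since $\trace_1, \trace_2$ were arbitrary, $\setTraces \models \varphi^{\sync}_{\syncO}$.

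\textbf{Main obstacle.} The delicate point is aligning the $\Until$/$\Box$ structure of $\varphi^{\sync}_{\syncO}$ with the two-sided slicing operators $\cdot[\DDD a]$ and $\cdot[a\DDD]$: one must argue that the $\Until$-pivot common to all trace instantiations is precisely the synchronous switch point $i^{*}$, that this point cuts every trace into a prefix of uniform length $i^{*}$ (so that the $\timeDef$-guards of Definition~\ref{def:independence} degenerate to ``position $< i^{*}$'') and a suffix, and that the existential witnesses $\traceVar_{\exists}, \traceVar'_{\exists}$ supplied by $\varphi^{\sync}_{\syncO}$ restrict, under these slices, exactly to the witnesses demanded by $\ind{x}{y}{\syncO}$ and $\ind{x}{z}{\syncO}$; and, in the reverse direction, that a prefix or suffix witness lifts back to a genuine trace of $\setTraces$ usable inside $\varphi^{\sync}_{\syncO}$. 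Once the pivot is pinned down, the remainder is a routine unfolding of the HyperLTL semantics.
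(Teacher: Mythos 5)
Your proposal is correct and follows essentially the same route as the paper's proof: both identify the $\Until$-pivot with the common first $a$-position (yielding the synchronicity clause) and then match the existential witnesses $\traceVar_{\exists},\traceVar'_{\exists}$ with the witnesses required by $\ind{x}{y}{\syncO}$ on $\setTraces[\DDD a]$ and $\ind{x}{z}{\syncO}$ on $\setTraces[a\DDD]$, the paper doing so via a terse chain of claimed equivalences and you by spelling out the two inclusions. You also (reasonably, and in line with the paper's own appendix decomposition) read the subscript inside the $\Box$ as $\traceVar'_{\exists}$ rather than the literal $\traceVar_{\exists}$, which is evidently a typo in the statement.
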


We now examine the case of an asynchronous action.
Like in the previous section for point semantics,
we start by defining a family of models s.t.\
one of the families satisfies the two-state independence property while the other does not. The difficulty in expressing the asynchronous action is the arbitrary distance between time points we want to compare. Thus, we create the models to guarantee that there are not enough \emph{next} operators to encode this distance.
Then, the second family is the same as the first except for the position \(2n+1\) that is deleted. This position will coincide with a global (across all sets in the set of traces) \(n\)-stuttering in the first family. Thus, it is not surprising that instances of these families, for a given \(n\in \nat\), are \((k,\LTLX^n)\)-equivalent, for any number of trace quantifiers \(k\).

\begin{definition}\label{def:models:async_action}
The sets of sets of traces \(\setTraces_n^{\async}= \{t_1, t_2, t_3, t_4\}\) and
\({\setTraces'}_n^{\async} = \{t_1', t_2', t_3', t_4'\}\), for \(n\in \nat\), with valuations over \((a,x,y,z)\) are defined by letting
\begin{align*}
\trace_0 &\!=\!\! 1110\, (1000)^{n+4}\, (1001)^{n+4}\, 1111\, (1001)^{n+4}\, (1000)^{n+4},\\
\trace_1 &\!=\!\! 1111\, (1001)^{n+4}\, (1000)^{n+4}\, 1110\, (1000)^{n+4}\, (1001)^{n+4},\\
t_1 &\!=\! 0000\ \trace_1\ (1001)^{\omega},\  t_2 = 0010 \ \trace_1\ (1001)^{n+4}\ (1111)^{\omega},\\ 
t_3 &\!=\! (0000)^{n+4}\ \trace_0\  (1001)^{\omega},\ t_4 = (0010)^{n+4}\ \trace_0\  (1111)^{\omega},\\
t_i' &\!=\! t_i[0]t_i[1]\ldots t_i[2n+10] t_i[2n+12] \ldots \text{ for } 1\leq i \leq 4.
%\ \ 
%&t_2' &\!=\! t_2[0]t_2[1]\ldots t_2[2n+10] t_2[2n+12] \ldots\\
%t_3' &\!=\! t_3[0]t_3[1]\ldots t_3[2n+10] t_3[2n+12] \ldots\ \
%&t_4' &\!=\! t_4[0]t_4[1]\ldots t_4[2n+10] t_4[2n+12] \ldots\\
\end{align*}
\end{definition}

\begin{lemma}\label{lemma:nredundant:traces_aync}
For all assignments \(\traceAssign_{\setTraces_n^{\async}}\) over \(\setTraces_n^{\async}\), the valuation at \(2n+11\) is \(n\)-redundant in the trace \(\flatT{\traceAssign_{\setTraces_n^{\async}}}\).
\end{lemma}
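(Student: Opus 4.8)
The plan is to unfold the definitions and carry out an explicit arithmetic verification that the valuation occurring at position $2n+11$ in any flattened assignment over $\setTraces_n^{\async}$ is repeated for at least $n+1$ consecutive positions afterwards, which is exactly the requirement for $n$-redundancy in Definition~\ref{def:models:async_action}'s ambient notion (Def.\ before Prop.~\ref{prop:redundant}). Since the flattening $\flatT{\traceAssign_{\setTraces_n^{\async}}}$ is the pointwise composition of the traces assigned to the variables in $\Var(\traceAssign)$, and each such trace is one of $t_1,t_2,t_3,t_4$, it suffices to show that \emph{each} of $t_1,t_2,t_3,t_4$ has a constant segment on the index window $[2n+11,\ 2n+11+(n+1)]$, i.e.\ on positions $2n+11$ through $3n+13$; then the composition is automatically constant on that window as well, and in particular the valuation at $2n+11$ of the composed trace equals its valuations at $2n+12,\dots$.

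First I would locate position $2n+11$ inside the common middle block $\trace_1$ (for $t_1,t_2$) and $\trace_0$ (for $t_3,t_4$), accounting for the length-$4$ or length-$4(n+4)$ prefixes. Recall $\trace_1 = 1111\,(1001)^{n+4}\,(1000)^{n+4}\,1110\,\dots$; its prefix $1111\,(1001)^{n+4}$ has length $4 + 4(n+4) = 4n+20$, so after the $4$-letter prefix $0000$ (resp.\ $0010$) of $t_1$ (resp.\ $t_2$), position $2n+11$ of $t_i$ corresponds to position $2n+7$ inside $\trace_1$, which sits well inside the block $(1001)^{n+4}$ (that block spans $\trace_1$-positions $4$ through $4n+19$). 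The relevant window $[2n+11,3n+13]$ in $t_i$ maps to $\trace_1$-positions $[2n+7,3n+9]$, and since $3n+9 \le 4n+19$ for all $n\ge 0$, the whole window lies in the periodic block $(1001)^{n+4}$. For $t_3,t_4$ the prefix has length $4(n+4) = 4n+16$, so position $2n+11$ maps to $\trace_0$-position $2n+11 - (4n+16) <0$ when $n$ is small, so I would instead note that $2n+11$ still lies in the $(0000)^{n+4}$ prefix for $t_3,t_4$ precisely when $2n+11 < 4n+16$, i.e.\ always; hence for $t_3,t_4$ the window $[2n+11,3n+13]$ sits inside $(0000)^{n+4}$ provided $3n+13 < 4n+16$, i.e.\ for all $n\ge 0$. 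In both cases the letters involved are constant — $1001$ repeated, or $0000$ repeated — so within each $t_i$ the window is constant, but note the subtlety that for $(1001)^{n+4}$ to be \emph{constant} on a length-$(n+2)$ window we need that window to avoid crossing a period boundary modulo $4$; since the window has length $n+2$ which may exceed $4$, I should instead argue directly that all four traces restricted to $[2n+11,3n+13]$ agree with their own value at $2n+11$ — and this holds because within each fixed $t_i$ every position in this window carries a letter from a \emph{single} periodic block, and the composed flattened trace at position $p$ is constant iff each coordinate is; the coordinate-wise letters of $\flatT{\traceAssign}$ at consecutive positions in $(1001)^{n+4}$ are $1,0,0,1,1,0,0,1,\dots$ which is \emph{not} constant. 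So the honest argument must be that the four middle blocks are \emph{aligned} so that the \emph{combined} contribution is constant only at the specific spot.

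Reconsidering: the correct reading is that the models are engineered so that across \emph{all} traces simultaneously there is a single time point (namely $2n+11$, the deleted one in the primed family) whose valuation is $n$-redundant in every flattened assignment. The real content is a careful bookkeeping of where the distinguished blocks $\trace_0,\trace_1$ and the prefixes line up: the prefixes have lengths $4$ and $4(n+4)$, and $\trace_0,\trace_1$ are the \emph{same word up to a cyclic/structural shift}, designed so that after the offset, position $2n+11$ and the following $n+1$ positions fall inside a common stretch where every trace contributes the same letter. The key step — and the main obstacle — is thus the arithmetic alignment: verifying that for $t_1,t_2$ the index $2n+11$ lands $n+1$ positions before the end of the $(1001)^{n+4}$-block (so $n$ more repetitions of the local pattern follow before the block changes) \emph{and} simultaneously for $t_3,t_4$ the index $2n+11$ lands inside the $(0000)^{n+4}$ prefix with at least $n+1$ positions to spare, \emph{and} that the periodic patterns in the two cases are phase-compatible so the flattened word is genuinely constant there. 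I would therefore (i) compute, for each $i\in\{1,2,3,4\}$, the exact letter of $t_i$ at positions $2n+11, 2n+12, \dots, 3n+13$ as an explicit function of the residue mod $4$ of the local offset, (ii) check these twelve-plus sequences are each ultimately the constant letter across the window once the period phases are matched, and (iii) conclude that the composition $\flatT{\traceAssign_{\setTraces_n^{\async}}}$, being coordinatewise a choice among these, is constant on $[2n+11,3n+13]$, hence its valuation at $2n+11$ satisfies $\trace[2n+11] = \trace[2n+11+j]$ for $1 \le j \le n+1$, i.e.\ is $n$-redundant. The bulk of the work is purely the indexing; no conceptual difficulty remains once the offsets are written out, and I expect the cleanest presentation is a short lemma-internal table of offsets rather than inline prose.
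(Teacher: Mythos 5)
Your opening reduction is the right one and matches the substance of the paper's argument: since \(\flatT{\traceAssign}[i](a_\pi) = \traceAssign(\pi)[i](a)\), the flattened trace is constant on a window exactly when every assigned trace is, so it suffices to check that each of \(t_1,\dots,t_4\) is constant on positions \(2n+11\) through \(2n+11+(n+1)=3n+12\). But the execution breaks down because you have misparsed the alphabet of Definition~\ref{def:models:async_action}. The valuations there are over the four-tuple \((a,x,y,z)\), so a group such as \(1001\) is a \emph{single} valuation (one position of the trace), and \((1001)^{n+4}\) is a segment of \(n+4\) \emph{identical} valuations --- a constant block of length \(n+4\), not a bit-string of length \(4(n+4)\) with period \(4\). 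All of your offset arithmetic (the length \(4+4(n+4)=4n+20\) for the prefix of \(\trace_1\), the ``\(4\)-letter prefix \(0000\)'', residues mod \(4\), and the worry that \((1001)^{n+4}\) reads as \(1,0,0,1,1,0,0,1,\dots\) and is ``not constant'') is an artifact of this misreading. Indeed, under your reading the lemma would be \emph{false} --- which is exactly the contradiction you run into mid-proof and never resolve: your ``reconsidered'' plan (i)--(iii) still speaks of phase-compatibility mod \(4\) and ends by asserting, rather than establishing, the constancy of the window. As written, the proposal does not prove the statement.

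Under the correct reading the bookkeeping is short and there is no alignment obstacle. Writing \(\trace_1\)'s blocks by position (\(0\): \(1111\); \(1..n{+}4\): \(1001\); \(n{+}5..2n{+}8\): \(1000\); \(2n{+}9\): \(1110\); \(2n{+}10..3n{+}13\): \(1000\); \(3n{+}14..4n{+}17\): \(1001\)), the one-letter prefixes of \(t_1,t_2\) give \(t_i[2n+11+j]=\trace_1[2n+10+j]\) for \(0\le j\le n+1\), which stays inside the fourth block; the \((n{+}4)\)-letter prefixes of \(t_3,t_4\) give \(t_i[2n+11+j]=\trace_0[n+7+j]\), which stays inside \(\trace_0\)'s block of identical valuations spanning positions \(n{+}5..2n{+}8\). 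So each \(t_i\) is literally constant on the window, and your reduction then closes the proof. (The paper packages the same observation as an induction on the size of the assignment, with the per-trace constancy as the base case; once the alphabet is read correctly, your ``each coordinate is constant'' route is equivalent and arguably cleaner.)
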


It is clear, that all sets of traces that are models under the segments semantics are models under the point semantics, as well. Then, 
\(\setsetTraces^{\async}_{\syncO} \subseteq  \setsetTraces^{\async}_{\syncS}\).

\begin{lemma}\label{lemma:async:sync}
\(\setTraces_n^{\async} \in \setsetTraces^{\async}_{\syncO}\), \({\setTraces'}_n^{\async} \not\in \setsetTraces^{\async}_{\syncS}\) and 
\({\setTraces'}^{\async}_n|_{a} \not\in \setsetTraces^{\novisible}_{\syncS}\).
\end{lemma}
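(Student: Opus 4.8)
The plan is to verify the three claims separately, each by unwinding the relevant definitions on the concrete families from Definition~\ref{def:models:async_action}.

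\textbf{Step 1: \(\setTraces_n^{\async} \in \setsetTraces^{\async}_{\syncO}\).} First I would identify the unique position at which \(a\) first holds in each of the four traces \(t_1,\dots,t_4\): in \(t_1,t_2\) the prefix \(0000\) resp.\ \(0010\) has \(a=0\) and then \(\trace_1\) starts with \(1\ldots\), so \(\mathrm{min}(t_i,a,\cdot)\) picks the position right after the length-4 prefix; in \(t_3,t_4\) it picks the position after the length-\(4(n+4)\) prefix of \(0000\)'s resp.\ \(0010\)'s. From this I compute \(\setTraces_n^{\async}[\DDD a]\) and \(\setTraces_n^{\async}[a\DDD]\) explicitly as four-element trace sets. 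For the \(\DDD a\) part I must check \(\ind{x}{y}{\syncO}\): for every pair \(\traceVar,\traceVar'\) I exhibit a witness \(\traceVar_\exists\) whose \(x\)-projection (up to the common length) equals that of \(\traceVar\) and whose \(y\)-projection equals that of \(\traceVar'\). Since in the \(\DDD a\)-slices the value of \(x\) is constant and \(y\) takes only the values forced by the prefixes \(0000,0010\), the four prefixes already realize all needed \((x,y)\)-combinations; here the ``common length'' convention is what makes the short prefixes compatible with the long ones. Symmetrically for the \(a\DDD\) part I check \(\ind{x}{z}{\syncO}\) using the structure of \(\trace_0,\trace_1\) and the periodic suffixes \((1001)^\omega,(1111)^\omega\), again producing explicit witnesses. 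The bookkeeping here is the bulk of the work but is entirely routine once the slicing points are pinned down.

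\textbf{Step 2: \({\setTraces'}_n^{\async}\notin\setsetTraces^{\async}_{\syncS}\).} By the observation in the excerpt that \(\setsetTraces^{\async}_{\syncO}\subseteq\setsetTraces^{\async}_{\syncS}\) we cannot use a segment-vs-point gap; instead I argue directly that the \emph{point} version already fails. Deleting position \(2n+11\) from each \(t_i\) shifts everything after it by one, which desynchronizes the blocks \((1000)^{n+4}\) and \((1001)^{n+4}\) inside the copies of \(\trace_0,\trace_1\) relative to each other across the four traces; equivalently, the position of the isolated block \(1111\) (resp.\ \(1110\)) in \(\trace_0\) no longer lines up with that in \(\trace_1\). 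I would pinpoint one concrete time index \(i\) (lying in the region affected by the deletion, after the first occurrence of \(a\) in all traces, so that it is tested against \(\ind{x}{z}{\syncS}\)) at which the set of \((x,z)\)-pairs appearing among the traces of \({\setTraces'}_n^{\async}[a\DDD]\) is missing a combination that point independence would require — e.g.\ some trace has \(x=1,z=1\) while no trace has \(x=1,z=0\) at that index, or vice versa. That single missing witness refutes \(\ind{x}{z}{\syncS}\), hence \({\setTraces'}_n^{\async}\notin\setsetTraces^{\async}_{\syncS}\). The main obstacle in the whole lemma is here: one must track the off-by-one shift carefully through the concatenated blocks to name the bad index, and get the parity of \(2n+11\) right so that it falls after \(a\) has fired everywhere.

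\textbf{Step 3: \({\setTraces'}_n^{\async}|_a\notin\setsetTraces^{\novisible}_{\syncS}\).} The hidden-action hyperproperty quantifies existentially over a fresh interpretation of \(a\): \({\setTraces'}_n^{\async}|_a\in\setsetTraces^{\novisible}_{\syncS}\) would mean there is \emph{some} way to reinstate an \(a\)-column on the traces of \({\setTraces'}_n^{\async}\) (with \(a\) occurring at least once in each) making both slice-independences hold under point semantics. I would rule this out by a pigeonhole-style argument on the structure of the traces: whatever position each trace picks for its first \(a\), the resulting four suffix-slices \(\setTraces[a\DDD]\) must still satisfy \(\ind{x}{z}{\syncS}\), and I show that the \((x,z)\)-content of the (shifted) traces in the common tail region is poor enough that no alignment of first-\(a\) positions can repair the missing combination identified in Step~2 — intuitively, the deletion destroyed a coincidence that no relabeling of \(a\) can recreate, because the \((x,z)\)-values themselves, not just their alignment with \(a\), are now incompatible. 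Concretely I would note that to fix the \(y\)-side the prefixes must keep enough of the \(0000/0010\) structure, which bounds how late the first \(a\) can be, and within that bound the tail still exhibits the defect. This step reuses the index and the missing-pair analysis of Step~2, so once Step~2 is done it is short.
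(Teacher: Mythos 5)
Your plan for the first two claims matches the paper's proof in both structure and substance. For \(\setTraces_n^{\async} \in \setsetTraces^{\async}_{\syncO}\) the paper likewise computes the two slices explicitly (the prefix slice is \(\{0000, 0010, (0000)^{n+4}, (0010)^{n+4}\}\), where \(x\) is identically \(0\)) and exhibits witnesses directly. For \({\setTraces'}_n^{\async} \notin \setsetTraces^{\async}_{\syncS}\) the paper does exactly what you describe: it names one concrete slice position (around \(2n+9\) in the \(a\)-suffix) at which \(t'_1,t'_2\) realize one \((x,z)\) pair, \(t'_3,t'_4\) realize another, and the mixed combination demanded by \(\ind{x}{z}{\syncS}\) for the pair \((t'_3,t'_1)\) is absent. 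Your diagnosis that the off-by-one bookkeeping through the concatenated blocks is the delicate part of that step is accurate.

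The gap is in your Step 3, where you claim the hidden-action case ``reuses the index and the missing-pair analysis of Step 2, so once Step 2 is done it is short.'' It is not, and a proof executed as you describe would be incomplete. Once the first occurrence of \(a\) may be placed independently in each trace, the four suffix slices realign, and the single bad index from Step 2 only refutes the one alignment that reproduces the original \(a\)-column. The paper's argument for \({\setTraces'}^{\async}_n|_{a} \notin \setsetTraces^{\novisible}_{\syncS}\) is the longest and most delicate part of the whole lemma: it introduces a function \(g\) recording each trace's chosen first-\(a\) position, derives the upper bound \(g(t'_1), g(t'_3) \leq 5n+23\) from the \(y\)-independence of the prefixes (this much you anticipate), but then must force \(g(t'_1)=g(t'_2)=1\), separately exclude \(g(t'_3)=g(t'_4)=1\) (which requires a fresh missing-pair argument at a \emph{different} position, after showing that only the shift by \(n+7\) could even align the \((x,z)\)-content of the first \(n+4\) slice positions), and then run through the remaining ranges \(g(t'_3)<n+4\), \(=n+4\), \(n+4<g(t'_3)<2n+9\), \(=2n+9\), and beyond, each with its own witness for a missing \((x,z)\) combination. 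The critical observation is that the deleted valuation sits at different offsets from the candidate slice points in \(t'_1,t'_2\) versus \(t'_3,t'_4\), so ``the tail still exhibits the defect'' is precisely the statement that needs an alignment-by-alignment verification; it does not follow from the Step 2 computation. You should either carry out this case analysis or restructure the argument so that a single invariant (e.g., a pair of positions whose \((x,z)\)-content is provably unrepairable under \emph{any} admissible shift) does the work for all cases at once.
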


We remark that the set of traces \(\setTraces_n^{\async}\) satisfies the two-state independence even when the segment interpretation of independence compares only pairs of traces of the same length.

\begin{lemma}\label{lemma:models:hyperltl:indist}
For all \(n\in \nat\), \(k \in \nat\) and HyperLTL formulas \(\varphi \in \LTLX^n\), 
\(\setTraces_n^{\async} \Equiv_{(k,\LTLX^n)} {\setTraces'}_n^{\async}\) and 
\(\setTraces_n^{\async}|_a \Equiv_{(k,\LTLX^n)} {\setTraces'}_n^{\async}|_a\).
\end{lemma}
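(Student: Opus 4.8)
\textbf{Proof plan for Lemma~\ref{lemma:models:hyperltl:indist}.}
The plan is to exhibit, for each $n \in \nat$, an explicit bijection $\wit: \setTraces_n^{\async} \rightarrow {\setTraces'}_n^{\async}$ witnessing the $(k,\LTLX^n)$-equivalence, and to check that it works for \emph{every} $k$ and \emph{every} assignment. The natural candidate is the index-preserving map $\wit(t_i) = t_i'$, since by Definition~\ref{def:models:async_action} each $t_i'$ is obtained from $t_i$ by deleting exactly one position, namely $2n+11$. The key observation is that this single deleted position is \emph{uniformly} the $(n+1)$-st copy of a valuation that is repeated at least $n+2$ times across the whole family: by inspection of the block structure ($\trace_0,\trace_1$ contain runs of length $n+4$, and the prefixes/suffixes $(0000)^{n+4}$, $(1001)^\omega$, $(1111)^\omega$ are long), position $2n+11$ lies strictly inside such a run in every $t_i$. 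Hence for any assignment $\traceAssign_{\setTraces_n^{\async}}$ assigning traces $t_{i_1},\dots,t_{i_k}$ to the $k$ variables, the flattened trace $\flatT{\traceAssign_{\setTraces_n^{\async}}}$ still has its valuation at position $2n+11$ equal to the valuations at positions $2n+11 \pm j$ for $1 \le j \le n+1$, because this holds coordinate-by-coordinate. This is precisely Lemma~\ref{lemma:nredundant:traces_aync}: the valuation at $2n+11$ is $n$-redundant in $\flatT{\traceAssign_{\setTraces_n^{\async}}}$.

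From there the argument is: deleting an $n$-redundant valuation gives an $n$-stutter equivalent trace, i.e.\ $\flatT{\traceAssign_{\setTraces_n^{\async}}} \prec_n \flatT{\wit(\traceAssign_{\setTraces_n^{\async}})}$, hence $\flatT{\traceAssign_{\setTraces_n^{\async}}} \Equiv_n \flatT{\wit(\traceAssign_{\setTraces_n^{\async}})}$. The only subtlety is to confirm that flattening and deletion commute in the required way: if $\wit(\traceAssign)$ assigns $t_{i_1}',\dots,t_{i_k}'$, then $\flatT{\wit(\traceAssign)}$ is obtained from $\flatT{\traceAssign}$ by deleting position $2n+11$, since deletion happens at the same index $2n+11$ in every component trace. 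With the $n$-redundancy of that position in hand (Lemma~\ref{lemma:nredundant:traces_aync}), this shows $\flatT{\traceAssign_{\setTraces_n^{\async}}} \Equiv_n \flatT{\wit(\traceAssign_{\setTraces_n^{\async}})}$, and symmetrically for $\wit^{-1}$ applied to assignments over ${\setTraces'}_n^{\async}$. By Definition~\ref{def:hyperLTL:global:equiv} and Proposition~\ref{prop:redundant} (which certifies that $\Equiv_n$ is an equivalence on traces for $\LTLX^n$ in the sense of Definition~\ref{def:equiv:ltl}), this gives $\setTraces_n^{\async} \Equiv_{(k,\LTLX^n)} {\setTraces'}_n^{\async}$.

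For the second claim, $\setTraces_n^{\async}|_a \Equiv_{(k,\LTLX^n)} {\setTraces'}_n^{\async}|_a$, I would use the same bijection restricted to the $a$-free traces and note that removing the $a$ coordinate can only merge previously distinct valuations, never break redundancy: if position $2n+11$ is $n$-redundant in $\flatT{\traceAssign}$ over the full alphabet, it is still $n$-redundant after projecting away all $a_\pi$ coordinates, since each coordinate equality is preserved under projection. So Lemma~\ref{lemma:nredundant:traces_aync} transfers verbatim to the projected family, and the rest of the argument is identical.

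The main obstacle I anticipate is purely bookkeeping: verifying by a careful index count that position $2n+11$ genuinely falls inside a run of length $\ge n+2$ in \emph{each} of $t_1,t_2,t_3,t_4$ (the offsets differ, since $t_1,t_2$ start with a length-$4$ prefix while $t_3,t_4$ start with $(0000)^{n+4}$ or $(0010)^{n+4}$), and hence that the deleted index is consistent across the family. This is exactly what Lemma~\ref{lemma:nredundant:traces_aync} packages, so modulo that lemma the present proof is short; the delicate part has been isolated there, and here I would simply invoke it together with Proposition~\ref{prop:redundant} and Definition~\ref{def:hyperLTL:global:equiv}.
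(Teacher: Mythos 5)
Your proposal is correct and follows essentially the same route as the paper's proof: the index-preserving bijection $\wit(t_i)=t_i'$, the observation that flattening and deletion of position $2n+11$ commute, the appeal to Lemma~\ref{lemma:nredundant:traces_aync} for $n$-redundancy, Proposition~\ref{prop:redundant} for the $\Equiv_n$ equivalence, and the remark that projecting away $a$ preserves redundancy. The only nitpick is the orientation of $\prec_n$ (the shorter trace is obtained from the longer one, so $\flatT{\wit(\traceAssign)} \prec_n \flatT{\traceAssign}$), which is immaterial since $\Equiv_n$ is symmetric.
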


\begin{proof}
Consider arbitrary \(n\in \nat\) and \(k \in \nat\). 
We define the witness function \(\wit: \setTraces_n^{\async} \rightarrow {\setTraces'}_n^{\async}\) as \(\wit(t_i) = t_i'\), with \(1 \leq i \leq 4\). Clearly, it is both bijective and total.
Let \(\traceAssign_{\setTraces_n^{\async}}\) be an arbitrary assignment over \(\setTraces_n^{\async}\) s.t.\ 
\(|\traceAssign_{\setTraces_n^{\async}}| = k\). We prove in Lemma \ref{lemma:nredundant:traces_aync} that 
the letter at \(2n+11\) in \(\flatT{\traceAssign_{\setTraces_n^{\async}}}\) is \(n\)-redundant. By definition of \({\setTraces'}_n^{\async}\) , 
\(\flatT{\wit(\traceAssign_{\setTraces_n^{\async}})}\) is the same as \(\flatT{\traceAssign_{\setTraces_n^{\async}}}\) except for the valuation at \(2n+11\) that is deleted. Then, 
\(\flatT{\traceAssign_{\setTraces_n^{\async}}} \Equiv_{\LTLX^n} \flatT{\wit(\traceAssign_{\setTraces_n^{\async}})}\).
%We start by proving that for all \(k \in \nat\), 
%\(\setTraces_n \Equiv_{(k,\LTLX^n)} \setTraces_n'\).
We prove analogously that for all assignments over \({\setTraces'}_n^{\async}\), \(\traceAssign_{{\setTraces'}_n^{\async}}\), with size \(k\),
\(\flatT{\traceAssign_{{\setTraces'}_n^{\async}}} \Equiv_{\LTLX^n} \flatT{\wit^{-1}(\traceAssign_{{\setTraces'}_n^{\async}})}\).
Hence \(\setTraces_n^{\async} \Equiv_{(k,\LTLX^n)} {\setTraces'}_n^{\async}\).
We use the same witness function to prove that 
\(\setTraces_n^{\async}|_a \Equiv_{(k,\LTLX^n)} {\setTraces'}_n^{\async}|_a\).
Note, as \(\setTraces_n^{\async}|_a\) is the same as \(\setTraces_n^{\async}\) except for the valuations of \(a\) that are removed,  then Lemma \ref{lemma:nredundant:traces_aync} holds for \(\setTraces_n^{\async}|_a\), as well.
\end{proof}

\begin{theorem}\label{thm:async:sync}
For all HyperLTL sentences \(\varphi\): 
%\begin{center}
\(\generatedSet{\varphi} \neq \setsetTraces^{\async}_{\syncS}\), 
\(\generatedSet{\varphi} \neq \setsetTraces^{\async}_{\syncO}\), 
\(\generatedSet{\varphi} \neq \setsetTraces^{\novisible}_{\syncS}\) and
\(\generatedSet{\varphi} \neq \setsetTraces^{\novisible}_{\syncO}\).
%\end{center}
\end{theorem}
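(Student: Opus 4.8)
The plan is to reduce all four inexpressiveness claims to the machinery already built for the asynchronous point-semantics case. Concretely, I would combine Lemma~\ref{lemma:async:sync}, Lemma~\ref{lemma:models:hyperltl:indist}, and Theorem~\ref{thm:exp:global}. Fix an arbitrary HyperLTL sentence $\varphi$; since $\varphi$ has finitely many temporal operators, let $n$ be a bound on its $\Next$-nesting depth, so that $\varphi \in \HyperC{\LTLX^n}$ in the sense of Definition~\ref{def:hyperltl:classes}, and let $k = |\Var(\varphi)|$. Recall that $\Equiv_n$ ($n$-stutter equivalence) is an equivalence on traces for formulas in $\LTLX^n$ by Proposition~\ref{prop:redundant}. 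By Lemma~\ref{lemma:models:hyperltl:indist} we have $\setTraces_n^{\async} \Equiv_{(k,\LTLX^n)} {\setTraces'}_n^{\async}$, so Theorem~\ref{thm:exp:global} gives $\setTraces_n^{\async} \models \varphi$ iff ${\setTraces'}_n^{\async} \models \varphi$. But Lemma~\ref{lemma:async:sync} shows $\setTraces_n^{\async}$ is in $\setsetTraces^{\async}_{\syncO} \subseteq \setsetTraces^{\async}_{\syncS}$ while ${\setTraces'}_n^{\async}$ is in neither. Hence $\varphi$ cannot define $\setsetTraces^{\async}_{\syncS}$ nor $\setsetTraces^{\async}_{\syncO}$: if it did, the indistinguishable pair would witness a contradiction. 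Since $\varphi$ was arbitrary, $\generatedSet{\varphi} \neq \setsetTraces^{\async}_{\syncS}$ and $\generatedSet{\varphi} \neq \setsetTraces^{\async}_{\syncO}$ for every HyperLTL sentence.

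For the hidden-action cases, I would run the identical argument with the sliced family $\setTraces_n^{\async}|_a$ versus ${\setTraces'}_n^{\async}|_a$. Lemma~\ref{lemma:models:hyperltl:indist} already states $\setTraces_n^{\async}|_a \Equiv_{(k,\LTLX^n)} {\setTraces'}_n^{\async}|_a$, and Lemma~\ref{lemma:async:sync} states ${\setTraces'}^{\async}_n|_a \notin \setsetTraces^{\novisible}_{\syncS}$. What I still need is $\setTraces_n^{\async}|_a \in \setsetTraces^{\novisible}_{\syncO} \subseteq \setsetTraces^{\novisible}_{\syncS}$; this should follow because the hidden-action hyperproperty existentially reintroduces an $a$-labelling (Definition~\ref{def:prop:two_state_indep}), and the original labelling of $a$ in $\setTraces_n^{\async}$ is exactly such a witness making the slices satisfy $\ind{x}{y}{\syncO}$ before and $\ind{x}{z}{\syncO}$ after. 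With that in hand, Theorem~\ref{thm:exp:global} again yields $\setTraces_n^{\async}|_a \models \varphi$ iff ${\setTraces'}_n^{\async}|_a \models \varphi$, contradicting $\generatedSet{\varphi} = \setsetTraces^{\novisible}_{\syncS}$ or $= \setsetTraces^{\novisible}_{\syncO}$.

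The main obstacle I anticipate is purely bookkeeping rather than conceptual: making sure the single family $(\setTraces_n^{\async}, {\setTraces'}_n^{\async})$ simultaneously discriminates all four target hyperproperties, i.e.\ that $\setTraces_n^{\async}$ lands in the ``positive'' side of segment semantics (the strongest of the four, which by the stated inclusions $\setsetTraces^{\async}_{\syncO} \subseteq \setsetTraces^{\async}_{\syncS}$ and $\setsetTraces^{\novisible}_{\syncO} \subseteq \setsetTraces^{\novisible}_{\syncS}$ then carries the point-semantics and non-sliced variants for free) while ${\setTraces'}_n^{\async}$ (resp.\ its slice) lands outside even the weakest, point-semantics variant. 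This is precisely the content of Lemma~\ref{lemma:async:sync}, so modulo invoking that lemma the theorem is immediate; the only care needed is to confirm $n$ can be chosen uniformly from $\varphi$ (it can, as $\Next$-depth is finite) and that $\Equiv_n$ genuinely fits Definition~\ref{def:equiv:ltl} for the class $\LTLX^n$, which is Proposition~\ref{prop:redundant}.
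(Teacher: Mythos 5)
Your proposal is correct and follows essentially the same route as the paper's proof: fix $\varphi$, extract the $\Next$-depth $n$ and quantifier count $k$, combine Lemma~\ref{lemma:async:sync}, Lemma~\ref{lemma:models:hyperltl:indist} and Theorem~\ref{thm:exp:global} to make $\setTraces_n^{\async}$ and ${\setTraces'}_n^{\async}$ (resp.\ their $a$-projections) an indistinguishable pair separated by all four hyperproperties, using the inclusions $\setsetTraces^{\async}_{\syncO} \subseteq \setsetTraces^{\async}_{\syncS}$ and $\setsetTraces^{\novisible}_{\syncO} \subseteq \setsetTraces^{\novisible}_{\syncS}$ exactly as the paper does. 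The one step you flag as still needed, namely $\setTraces_n^{\async}|_a \in \setsetTraces^{\novisible}_{\syncO}$, is justified in the paper by the same observation you give: the original $a$-labelling of $\setTraces_n^{\async}$ is the existential witness required by the hidden-action definition.
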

\begin{proof}
From \(\setsetTraces^{\async}_{\syncO} \subseteq  \setsetTraces^{\async}_{\syncS}\) and Lemma~\ref{lemma:async:sync}, it follows that:
\begin{itemize}
\item \(\setTraces_n^{\async} \in \setsetTraces^{\async}_{\syncO}\) and
 \({\setTraces'}_n^{\async} \not\in \setsetTraces^{\async}_{\syncO}\); and
\item \(\setTraces_n^{\async} \in \setsetTraces^{\async}_{\syncS}\) and \({\setTraces'}_n^{\async} \not\in \setsetTraces^{\async}_{\syncS}\).
\end{itemize}

Let \(\varphi\) be a closed HyperLTL formulas and let \(n\) be the number of its nested next operators.
Then, \(\varphi \in 2^{\LTLX^n} \) and there exists \(k \in \nat\) equal to the number of variables in \(\varphi\).
So, it follows from Lemma~\ref{lemma:models:hyperltl:indist} and 
 Theorem~\ref{thm:exp:global} that \(\setTraces_n^{\async} \in \generatedSet{\varphi}\)
 iff \({\setTraces'}_n^{\async} \in \generatedSet{\varphi}\).
Hence  for all HyperLTL sentences \(\generatedSet{\varphi} \neq \setsetTraces^{\async}_{\syncS}\) and
\({\generatedSet{\varphi} \neq \setsetTraces^{\async}_{\syncO}}\).

As  \(\setTraces_n^{\async}|_{a}\) is the same as \(\setTraces_n^{\async}\), except for the valuations of \(a\) that were removed, then
 \(\setTraces_n^{\async}|_{a} \in \setsetTraces^{\novisible}_{\syncO}\) and
\(\setTraces_n^{\async}|_{a} \in \setsetTraces^{\novisible}_{\syncS}\).
By Lemma~\ref{lemma:async:sync}, 
 \({\setTraces'}^{\async}_n|_{a} \not\in \setsetTraces^{\novisible}_{\syncS}\) and so it follows that \({\setTraces'}^{\async}_n|_{a} \not\in \setsetTraces^{\novisible}_{\syncO}\).
 As in the previous case, from Lemma~\ref{lemma:models:hyperltl:indist} and 
 Theorem~\ref{thm:exp:global}, it follows  that 
 for all HyperLTL formulas \(\generatedSet{\varphi} \neq \setsetTraces^{\novisible}_{\syncS}\) and
\(\generatedSet{\varphi} \neq \setsetTraces^{\novisible}_{\syncO}\).
\end{proof}

% !TeX root = main.tex

\section{Related Work}
\label{sec:related_work}

Linear-time hyperlogics support the comparison between traces from a given set.
Trace properties, often specified in LTL~\cite{Pnueli77}, 
are not expressive enough to specify such relations \cite{mclean1996general, ClarksonS10}. 
The seminal work of Clarkson and Schneider~\cite{ClarksonS10} introduces the concept of  \emph{hyperproperties} as
sets of trace properties.

Different extensions to LTL have been proposed to reason about security properties
%There are different extensions to LTL devised to reason about security-related properties 
that often require comparing multiple executions of a system.  
Well-known examples are the epistemic temporal logic (ETL)~\cite{fagin1995reasoning}, which extends LTL with the \emph{epistemic modal operator for knowledge}; and SecLTL \cite{dimitrova2012model}, which introduces the \emph{hide} modality.
As an attempt to define a unifying logic for hyperproperties, Clarkson et al. introduce HyperLTL \cite{ClarksonFKMRS14}, which extends LTL with explicit quantification over traces.

The hide operator in SecLTL
considers all alternative outcomes from the current time. For this reason, in \cite{ClarksonFKMRS14} the authors argue that there is a SecLTL formula that can distinguish between some systems with different computations paths but the same set of traces.
In the same paper, they prove that HyperLTL subsumes ETL. 
Their proof relies on the possibility to quantify over propositional variables that are not part of the system that generates a given set of traces. Later they updated the definition of HyperLTL to not allow such quantification. 
This extension to HyperLTL, with quantification over propositional variables, is introduced in \cite{coenen2019hierarchy}  as HyperQPTL 
 and proven to be strictly more expressive than HyperLTL. 

Bozzelli et al. prove, in \cite{bozzelli2015unifying},  that CTL* extended with trace quantifiers (HyperCTL*) and with the knowledge operator
(KCTL*) have incomparable expressive power. These results extend to HyperLTL and ETL, as well, as they are both subsumed by the respective CTL* extension. 
They start by proving that no ETL formula can specify that in a given set of traces two traces only differ at a time point, which can be specified in HyperLTL. Their result explores the fact that
trace quantification in ETL is implicit, as the only way to compare different traces is with the knowledge operator. 
Later, they prove that HyperLTL cannot express 
bounded termination.
%that eventually, after the first step, there exists a time point such that a given propositional variable is false at that time in all traces of the trace set evaluated. 
This result relies on the fact that, for all HyperLTL formulas, time quantifiers are always dependent on the trace quantifiers. The latter property can be specified in ETL.

In contrast to the extensions to LTL discussed above, in \cite{krebs2018team},  Krebs et al. propose to reinterpret LTL under \emph{team semantics}. Team semantics works with sets of assignments, referred to as \emph{teams}. The authors introduce synchronous and asynchronous semantics. Similar to how we specify two-state independence, their semantics differ on how they slice the set of traces while interpreting the time operators. Synchronous semantics requires the time to be global, while in the asynchronous case time is local to each trace. 
They show that HyperLTL and LTL under team semantics and synchronous entailment have incomparable expressive power.

Previous negative expressivity results about HyperLTL in the literature refer to the property used in the proof by Bozelli et al. in \cite{bozzelli2015unifying}. Their proof defines an equivalence relation for a specific family of models to show that no HyperCTL* can distinguish them. 
To the best of our knowledge, only 
Finkbeiner and Rabe \cite{finkbeiner2014linear} identify an equivalence relation over sets of traces that are not distinguishable by HyperLTL formulas.  
Similar to LTL, HyperLTL cannot distinguish between systems that generate the same set of traces.

%\textbf{First-order logic for hyperproperties.}
In \cite{finkbeiner2017first}, the authors propose to extend \(FO[<]\) to hyperproperties by adding the equal-level predicate \(E\) and denote this extension as \(FO[<,E]\). 
In this approach, time positions are labeled by traces and the predicate \(E\) is intended to relate the same time positions occurring in different traces. They prove in the same paper that \(FO[<,E]\) is more expressive than HyperLTL. They then define HyperFO by distinguishing quantifiers over initial positions (equivalent to trace quantifiers in HyperLTL) from time quantifiers (ordinary temporal operators in HyperLTL). Finally, they prove that HyperFO and HyperLTL are expressively equivalent.

At the time of the submission we became aware of two accepted (not yet published) papers~\cite{lics2021,cav2021} that address the problem of expressing asynchronous variants of information-flow security properties. In both submissions the authors introduce extensions of HyperLTL to address different approaches to deal with asynchronicity.
These papers confirm that the need for a framework enabling a systematic investigation of information-flow properties under different assumptions is of timely importance. 
While these works focus on the asynchronicity of system events
and on the decidability of the corresponding model-checking problems,
we instead consider the (a-)synchronicity (and observability) of specification events under two information-flow semantics and investigate the corresponding expressiveness problems.
For example, in the two-state independence property, 
the state transition does not (necessarily) refer to a system transition but \emph{specifies} a change in the dependency graph between variables.

% !TeX root = main.tex
\section{Conclusion}
\label{sec:concl}

In this paper, we studied the formal specification of sequential 
information-flow hyperproperties, 
especially the paradigmatic hyperproperty of two-state independence. 
We formalized several flavours 
of sequential information-flow using Hypertrace Logic, a first-order 
logic with trace and time quantifiers. 
We introduced a new proof technique for reasoning about the  
expressiveness of linear-time specification formalisms for hyperproperties 
such as HyperLTL. 
In particular, we showed that several natural flavours of sequential 
information flow cannot be expressed 
in HyperLTL due to the fixed order of its quantifiers.

The results in this paper indicate the need to study more asynchronous 
classes of hyperlogics. 
These findings seem to be corroborated by very recent works \cite{lics2021,cav2021} on asynchronous and context HyperLTL.
We plan to study also the expressiveness of these formalisms with respect to sequential information flow.

\bibliography{main}

\clearpage
\appendix
% !TeX root = main.tex

\section{Globally Hypertrace Logic}

\subsection{Theorem \ref{thm:inv:global_FOL}}

\emph{Let \(\Prop\) be a finite set of propositional variables and \(\setsetTraces \subseteq 2^{\vals^\omega_{\Prop}}\) be a hyperproperty.
If there exists a globally hypertrace formula \({\varphi \in \GFOL}\) that generates the same set of sets of traces as the hyperproperty,
\(\setsetTraces = \generatedSet{\varphi}\), then
there exists a set of sets of valuations \(\textbf{M} \subseteq 2^{\vals_{\Prop}}\) that generates the point-wise interpretation of the hyperproperty, 
\(\{ \setTraces[0] \setTraces[1] \ldots \ | \ \setTraces \in \setsetTraces  \} = \textbf{M}^{\omega}\), where \(\setTraces[i] = \{ \trace[i]\ |\ \trace \in \setTraces \}\).}

\begin{proof}
Consider an arbitrary  finite set of propositional variables \(\Prop\) and \(\setsetTraces \subseteq 2^{\vals^\omega_{\Prop}}\).
%\(\Rightarrow:\)
Assume that there exists \(\varphi \in \GFOL\) s.t.\
\((\star)\ \setsetTraces = \{\setTraces \ |\ \setTraces \models \varphi\}\).
By definition of \(\GFOL\), \(\varphi = \forall i\ \psi(i)\).
%It follows from \((\star)\) and satisfaction for FOL formulas that it must be the case \(\Prop(\varphi) \subseteq \Prop\), where \(\Prop(\varphi)\) are the propositional variables that occur in \(\varphi\). So,  from now on we assume wlog that \(\Prop = \Prop(\varphi)\).
Let \(\textbf{M}_{\varphi} = \{\setTraces[n] \ | \  \setTraces \models \varphi \tAnd n \in \nat  \}\). 
Clearly, \(\textbf{M}_{\varphi} \subseteq  2^{\vals_{\Prop}}\).
As \(\Prop\) is finite, all elements of \(\textbf{M}_{\varphi}\) are finite. Additionally, there is only a finite number of valuations over \(\Prop\), so \(\textbf{M}_{\varphi}\) is a finite set, too.

First, we prove \(\{ \setTraces{[0]}\setTraces{[1]} \ldots\ |\  \setTraces \in \setsetTraces  \} \subseteq \textbf{M}_{\varphi}^{\omega}\).
Consider an arbitrary \(\setTraces \in \setsetTraces\). 
By our assumption \((\star)\), \(\setTraces \models \varphi\).
Then, by definition of \(\textbf{M}_{\varphi}\), 
\(\{\setTraces[0], \setTraces[1], \ldots \} \subseteq \textbf{M}_{\varphi}\).
Hence \(\setTraces[0] \setTraces[1] \ldots \in \textbf{M}_{\varphi}^{\omega}\).

Now, we prove \(\textbf{M}_{\varphi}^{\omega} \subseteq \{ \setTraces{[0]}\setTraces{[1]} \ldots\ |\  \setTraces \in \setsetTraces  \}\).
Consider arbitrary \(M = M_0 M_1 \ldots\) s.t.\  \(M \in \textbf{M}_{\varphi}^\omega\).
Then, \(\{M_0, M_1, \ldots\} \subseteq \textbf{M}_{\varphi}\) and, by definition of \(\textbf{M}_{\varphi}\), for all \(j\in \nat\), \((\star\star)\) there exists \(\setTraces\models\varphi\) and \(j' \in \nat\) s.t.\ \(M_j = \setTraces[j']\).
Next we define a set of traces  s.t.\ \(\setTraces_M{[0]}\setTraces_M{[1]} \ldots = M\) and \(\setTraces_{M} \in \setsetTraces\).
Wlog, as \(\textbf{M}_{\varphi}\) is finite, 
\(\textbf{M}_{\varphi} = \{M_0, \ldots, M_n \}\).
For all  \(M_i = \{ m_0, \ldots, m_j\} \in \textbf{M}\), we use \(M_i[l] = m_l\) 
with \(0 \leq l \leq j\).
The set of traces \(\setTraces_M\) is defined below:
\[
 \setTraces_M = \{ \trace_k \, | \, 0 \leq k < \|\textbf{M}_{\varphi}\| \tAnd j \in \nat: \trace_k[j] = M_j[k \bmod \|\textbf{M}_{\varphi}\|] \}.
\]

By our assumption \((\star)\), \(\setTraces_{M} \in \setsetTraces\) iff
\(\setTraces_{M} \models \forall i\ \psi(i)\).
By definition of satisfaction for FOL, 
\(\setTraces_{M} \models \varphi\) iff
for all \(j \in \nat\), \({\setTraces_M[j]\models \psi(i)}\).
Consider an arbitrary \(j \in \nat\). 
Given that there exists \(k = \|\textbf{M}\|\) traces, for all \(M \in \textbf{M}_{\varphi}\) we have \(k \leq |M|\). Then, by \(\trace_k[j] = M_j[k \bmod \|\textbf{M}_{\varphi}\|]\) used in the definition of \(\setTraces_{M}\), it follows that \(\setTraces_{M}[j] = M_j\) (i.e.\ all valuations in \(M_j\) occur at least once at time \(j\) in  \(\setTraces_{M}\) and there is nothing else there).
So, by \((\star\star)\), there exists \(\setTraces\) and \(j' \in \nat\) s.t.\ \({\setTraces[j'] \models \psi(i)}\) and  \(\setTraces[j'] = M_j = \setTraces_M[j]\). 
Hence \(\setTraces_{M} \in \setsetTraces\).
% \(\Leftarrow:\) 
% Assume that there exists \(\textbf{M} \subseteq 2^{\vals_{\Prop}}\) s.t.\
% \(\{ \setTraces[0] \setTraces[1] \ldots \ | \ \setTraces \in \setsetTraces  \} = \textbf{M}^{\omega}\), where \(\setTraces[i] = \{ \trace[i]\ |\ \trace \in \setTraces \}\).
% By lemma \ref{lemma:GTFOL:formula}, 
% \(\textbf{M}^{\omega} = \{\setTraces[0] \setTraces[1] \ldots \ | \ \setTraces \models \varphi_{\SetM}\}\). 
% Thus, for all sets of traces \(\setTraces\), \(\setTraces \in \setsetTraces\) iff
% \(\setTraces \models \varphi_{\SetM}\).
\end{proof}

\section{Expressiveness}

\subsection{Lemma \ref{lemma:exp:open}}

\emph{Let \(\LTLC\) be a class of LTL formulas. For all HyperLTL formulas in the HyperLTL extension of  \(\LTLC\), \(\varphi \in \HyperLTLC\), and all sets of traces \(\setTraces\) and \(\setTraces\) that are \((|\Var(\varphi)|, \LTLC)\)-equivalent, \({\setTraces \Equiv_{(|\Var(\varphi)|, \LTLC)} \setTraces'}\), then for all functions \(\wit: \setTraces \rightarrow \setTraces'\) witnessing the equivalence and all assignments \(\traceAssign_{\setTraces}\) and 
\(\traceAssign_{\setTraces'}\) over \(\setTraces\) and \(\setTraces'\), respectively, only with assignments to the set of free variables in \(\varphi\),
\(\Var(\traceAssign_{\setTraces}) = \Var(\traceAssign_{\setTraces'}) = \fr(\varphi)\):
\((\traceAssign_{\setTraces}, 0) \models \varphi \tIff (\wit(\traceAssign_{\setTraces}), 0) \models \varphi; \tAnd
(\traceAssign_{\setTraces'}, 0) \models \varphi \tIff (\wit^{-1}(\traceAssign_{\setTraces'}), 0) \models \varphi.\)}

\begin{proof}
We prove this statement by structural induction on HyperLTL formulas on a class \(\HyperLTLC\).
The class \(\LTLC\) affects only the quantifier-free part of the formula.

\begin{description}

\item[Quantifier-free \(\varphi \in \HyperLTLC\):]
Then, 
\(\text{free}(\varphi) = \mathcal{V}(\varphi)\). Additionally, \(\varphi\) can be interpreted as an LTL formula over the 
set of propositional variables \(\Prop_{\Var(\varphi)}\) with 
\(\varphi \in \LTLC\).
Consider arbitrary set of traces s.t.\
\(\setTraces \Equiv_{(|\mathcal{V}(\varphi)|, \LTLC)} \setTraces'\) with \(\wit\) being a function that witnesses it.
Now, consider an arbitrary \(\traceAssign_{\setTraces}\) and  \(\traceAssign_{\setTraces'}\) over \(\setTraces'\)
and \(\setTraces\), respectively, s.t.\ \(\mathcal{V}(\traceAssign_{\setTraces}) = \mathcal{V}(\traceAssign_{\setTraces'}) = \text{free}(\varphi)\).
By definition of \(\Equiv_{(|\mathcal{V}(\varphi)|, \LTLC)}\), 
\(\flatT{\traceAssign_{\setTraces}} \Equiv_{\LTLC} \flatT{\wit(\traceAssign_{\setTraces})}\)
and, by definition of \(\Equiv_{\LTLC}\):
\((\star)\ \flatT{\traceAssign_{\setTraces}} \models \varphi \tIff \flatT{\wit(\traceAssign_{\setTraces})} \models \varphi.\)
By Proposition~\ref{prop:zipping}, 
\((\traceAssign_{\setTraces},0) \models \varphi\) iff \({\flatT{\traceAssign_{\setTraces}}[0 \ldots]\models \varphi}\); and 
\((\wit(\traceAssign_{\setTraces}), 0)\models \varphi\) iff \(\flatT{\wit(\traceAssign_{\setTraces})}[0 \ldots] \models \varphi\).
Thus, \({(\traceAssign_{\setTraces}, 0) \models \varphi}\) iff \((\wit(\traceAssign_{\setTraces}), 0) \models \varphi\).

Analogously,
\((\traceAssign_{\setTraces'},0) \models \varphi\) iff \((\flatT{\wit^{-1}(\traceAssign_{\setTraces'})},0) \models \varphi\).

\item[Induction case \(\forall \pi\ \varphi\):]
Assume by induction hypothesis (IH) that the statement holds for arbitrary \(\varphi \in \HyperLTLC\). 
Assume that \emph{(i)}
\(\setTraces \Equiv_{(|\Var(\forall \pi\ \varphi)|, \LTLC)} \setTraces'\).
Note that, wlog we can assume that quantifiers bind a variable already occurring in \(\varphi\), i.e.\
 \(|\Var(\forall \pi\ \varphi)| = |\Var(\varphi)|\). Then, \emph{(i')} \({\setTraces \Equiv_{(|\Var(\varphi)|, \LTLC)} \setTraces'}\), and it has the same witnesses as assumption~\emph{(i)}. 

Let \(\wit: \setTraces \rightarrow \setTraces'\) be a function that witnesses \emph{(i)}.
Now, consider arbitrary \(\traceAssign_{\setTraces}\) and \(\traceAssign_{\setTraces'}\), over \(\setTraces\) and \(\setTraces'\), s.t.\ \(\Var(\traceAssign_{\setTraces}) = \Var(\traceAssign_{\setTraces'}) =\fr(\forall \pi\ \varphi) = \fr(\varphi) \setminus\{\pi\}\).

We prove next that:  \((\traceAssign_{\setTraces}, 0) \models \forall \pi \ \varphi\) iff \({(\wit(\traceAssign_{\setTraces}), 0) \models \forall \pi\ \varphi}\).
We start with the \(\Rightarrow\)-direction of the statement.
Assume that \({(\traceAssign_{\setTraces}, 0) \models \forall \pi \ \varphi}\), then by HyperLTL satisfaction:
\((\star)\ \text{for all } \trace \in \setTraces: (\traceAssign_{\setTraces}[\pi \mapsto \trace], 0) \models \varphi.\)
By Definition~\ref{def:hyperLTL:global:equiv}, \(\Var(\wit(\traceAssign_{\setTraces})) = \Var(\traceAssign_{\setTraces})\). Thus, \(\Var(\wit(\traceAssign_{\setTraces})[\pi \mapsto \trace']) =
\Var(\traceAssign_{\setTraces}) \cup \{\pi\} = \fr(\varphi)\). We can apply the (IH), because \({\setTraces \Equiv_{(|\Var(\varphi)|, \LTLC)} \setTraces'}\), \(\wit\) witnesses it,
and for all \(\trace \in \setTraces\) then \(\traceAssign_{\setTraces}[\pi \mapsto \trace]\) is an assignment over \(\setTraces\). So, it follows:
\(\text{for all } \trace \in \setTraces: (\wit(\traceAssign_{\setTraces}[\pi \mapsto \trace]), 0) \models \varphi.\)
%\[\text{for all } \trace \in \setTraces: \wit(\traceAssign)[\pi \mapsto \wit(\trace)]), i) \models \varphi.\]

Assume towards a contradiction that \((\wit(\traceAssign_{\setTraces}), 0) \not \models \forall \pi\ \varphi\).
Then, by definition of HyperLTL satisfaction:
\(\text{there exists } \trace' \in \setTraces': (\wit(\traceAssign_{\setTraces})[\pi \mapsto \trace'], 0) \not \models \varphi.\)
We can apply the (IH), because
\(\Var(\wit(\traceAssign_{\setTraces})[\pi \mapsto \trace']) =  \fr(\varphi)\),  \emph{(i')} with \(\wit\) being one of its witnesses,
and for all \(\trace' \in \setTraces'\) the \(\wit(\traceAssign_{\setTraces})[\pi \mapsto \trace']\) is an assignment over \(\setTraces'\). So, it follows:
\[\text{there exists } \trace' \in \setTraces': (\wit^{-1}(\wit(\traceAssign_{\setTraces})[\pi \mapsto \trace']), 0) \not \models \varphi.\]
And by Definition~\ref{def:hyperLTL:global:equiv}:
\(\text{there exists } \trace' \in \setTraces': (\wit^{-1}(\wit(\traceAssign_{\setTraces}))[\pi \mapsto \wit^{-1}(\trace')], 0) \not \models \varphi.\)
As \(\wit\) is a bijective function, \((\wit^{-1}(\wit(\traceAssign_{\setTraces})) = \traceAssign_{\setTraces}\), and so:
\(\text{there exists } \trace' \in \setTraces': (\traceAssign_{\setTraces}[\pi \mapsto \wit^{-1}(\trace')], 0) \not\models \varphi.\)
And this is equivalent to:
\[\text{there exists } \trace' \in \setTraces': \trace = \wit^{-1}(\trace') \tAnd (\traceAssign_{\setTraces}[\pi \mapsto\trace], 0) \not\models \varphi.\]
Given that \(\wit\) is a surjective function, then:
\(\text{there exists } \trace \in \setTraces:  (\traceAssign_{\setTraces}[\pi \mapsto\trace], i) \not\models \varphi.\)
This contradicts \((\star)\). So, the \(\Rightarrow\)-direction holds.

We now prove the \(\Leftarrow\)-direction by contra-position.

Assume that \({(\traceAssign_{\setTraces}, 0) \not\models \forall \pi \ \varphi}\), then by HyperLTL satisfaction:
\(\text{there exists } \trace \in \setTraces: (\traceAssign_{\setTraces}[\pi \mapsto \trace], 0) \not\models \varphi.\)
We can apply now the (IH), because \(\Var(\wit(\traceAssign_{\setTraces})[\pi \mapsto \trace']) = \fr(\varphi)\), \emph{(i')} with \(\wit\)  being one of its witnesses,
and for all \(\trace \in \setTraces\) the \(\traceAssign_{\setTraces}[\pi \mapsto \trace]\) is an assignment over \(\setTraces\). Then, it follows:
\(\text{there exists } \trace \in \setTraces: (\wit(\traceAssign_{\setTraces}[\pi \mapsto \trace]), 0) \not\models \varphi.\)
By Definition~\ref{def:hyperLTL:global:equiv}:
\(\text{there exists } \trace \in \setTraces: \trace' = \wit(\trace) \tAnd (\wit(\traceAssign_{\setTraces})[\pi \mapsto \trace']), 0) \not\models \varphi.\)
By \(\wit\) being surjective, it follows:
\(\text{there exists } \trace' \in \setTraces': (\wit(\traceAssign_{\setTraces})[\pi \mapsto \trace']), 0) \not\models \varphi.\)
And by Definition of HyperLTL satisfaction:
\((\wit(\traceAssign_{\setTraces}), 0) \not\models \forall \pi \ \varphi.\)
Thus, the \(\Leftarrow\)-direction of the statement holds, as well.\\
Hence \((\traceAssign_{\setTraces}, 0) \models \forall \pi \ \varphi\) iff \({(\wit(\traceAssign_{\setTraces}), 0) \models \forall \pi\ \varphi}\).

%Assume that \({(\wit(\traceAssign_{\setTraces}), 0) \models \forall \pi \ \varphi}\), then by HyperLTL satisfaction:
%\[(\star)\ \text{for all } \trace' \in \setTraces': (\wit(\traceAssign_{\setTraces})[\pi \mapsto \trace'], 0) \models \varphi.\]

Now, we prove \((\traceAssign_{\setTraces'}, 0)\!\models\!\forall \pi \ \varphi\) iff \({(\wit^{-1}(\traceAssign_{\setTraces'}), 0)\!\models\! \forall \pi\ \varphi}\).
We start with the \(\Rightarrow\)-direction of the statement.

Like in the previous case, 
we assume that \((\traceAssign_{\setTraces'}, 0) \models \forall \pi \ \varphi\), and then we assume towards a contradiction that \((\wit^{-1}(\traceAssign_{\setTraces'}), 0) \not\models \forall \pi\ \varphi\). The proof is analogous to the previous case up to the point that we infer:
\[\text{there exists } \trace \in \setTraces: \trace' = \wit(\trace) \tAnd (\traceAssign_{\setTraces'}[\pi \mapsto\trace'], i) \not\models \varphi.\]
Then, by \(\wit\) being total we get:
\[\text{there exists } \trace' \in \setTraces': (\traceAssign_{\setTraces'}[\pi \mapsto\trace'], i) \not\models \varphi.\]
And this contradicts the assumption that \((\traceAssign_{\setTraces'}, i) \models \forall \pi \ \varphi\).

The \(\Leftarrow\)- direction is analogous to the previous case, as well, up to the point that we infer:
\[\text{there exists } \trace' \in \setTraces': \trace = \wit(\trace') \tAnd (\wit^{-1}(\traceAssign_{\setTraces})[\pi \mapsto \trace]), 0) \not\models \varphi.\]
Then, by \(\wit\) being total we infer:
\[\text{there exists } \trace \in \setTraces:  (\wit^{-1}(\traceAssign_{\setTraces})[\pi \mapsto \trace]), 0) \not\models \varphi.\]
So, \(\wit^{-1}(\traceAssign_{\setTraces}), 0) \not\models \forall \pi \ \varphi\). Hence the \(\Leftarrow\)- direction holds.

\item[Induction case \(\exists \pi\ \varphi\): ] Assume by induction hypothesis (IH) that the statement holds for arbitrary \(\varphi \in \HyperLTLC\).

We assume that \emph{(i)}
\(\setTraces \Equiv_{(|\Var(\exists \pi\ \varphi)|, \LTLC)} \setTraces'\).
Note that, wlog we can assume that quantifiers bind a variable already occurring in \(\varphi\), i.e.\
\(|\Var(\exists \pi\ \varphi)| = |\Var(\varphi)|\). So, \emph{(i')} \(\setTraces \Equiv_{(|\Var(\varphi)|, \LTLC)} \setTraces'\), and it has the same witnesses as assumption~\emph{(i)}. 
 
Let \(\wit: \setTraces \rightarrow \setTraces'\) be a function that witnesses \emph{(i)}.
Now, consider arbitrary \(\traceAssign_{\setTraces}\) and \(\traceAssign_{\setTraces'}\), over \(\setTraces\) and \(\setTraces'\), s.t.\ \(\Var(\traceAssign_{\setTraces}) = \Var(\traceAssign_{\setTraces'}) =\fr(\exists \pi\ \varphi) = \fr(\varphi) \setminus\{\pi\}\).

\begin{description}
\item[We prove: \((\traceAssign_{\setTraces}, 0) \models \exists \pi \ \varphi\) iff \((\wit(\traceAssign_{\setTraces}), 0) \models \exists \pi\ \varphi\).] \hfill

We start with the \(\Rightarrow\)-direction of the statement.

Assume that \((\traceAssign_{\setTraces}, 0) \models \exists \pi \ \varphi\), then by HyperLTL satisfaction:
\[\text{there exists } \trace \in \setTraces: (\traceAssign_{\setTraces}[\pi \mapsto \trace], 0) \models \varphi.\]

By Definition~\ref{def:hyperLTL:global:equiv}, \(\Var(\wit(\traceAssign_{\setTraces})) = \Var(\traceAssign_{\setTraces})\), and thus \(\Var(\wit(\traceAssign_{\setTraces})[\pi \mapsto \trace']) =
\Var(\traceAssign_{\setTraces}) \cup \{\pi\} = \fr(\varphi)\). Then, we can apply the (IH), because \(\setTraces \Equiv_{(|\Var(\varphi)|, \LTLC)} \setTraces'\) with \(\wit\) being one of its witnesses, and  for all \(\trace \in \setTraces\) the \(\traceAssign_{\setTraces}[\pi \mapsto \trace]\) is an assignment over \(\setTraces\). So, we get:
\[\text{there exists } \trace \in \setTraces: (\wit(\traceAssign_{\setTraces}[\pi \mapsto \trace]), 0) \models \varphi.\]
By Definition~\ref{def:hyperLTL:global:equiv},
%\[\text{there exists } \trace \in \setTraces': (\wit(\traceAssign)[\pi \mapsto \wit(\trace)]), i) \models \varphi.\]
%\[\text{there exists } \trace \in \setTraces': \trace = \wit(\trace) \tAnd  (\wit(\traceAssign)[\pi \mapsto \trace]), i) \models \varphi.\]
\[\text{there exists } \trace \in \setTraces, \trace'\! =\!\wit(\trace) \tAnd (\wit(\traceAssign_{\setTraces})[\pi \mapsto \trace'], 0)\! \models\! \varphi.\]
Then, by \(\wit\) being a total function:
\[\text{there exists } \trace' \in \setTraces': (\wit(\traceAssign_{\setTraces})[\pi \mapsto \trace'], 0) \models \varphi.\]
Hence by HyperLTL satisfaction definition:
\((\wit(\traceAssign_{\setTraces}), 0) \models \exists \pi\ \varphi\).

We now prove the \(\Leftarrow\)-direction of the statement.

Assume that \((\wit(\traceAssign_{\setTraces}), 0) \models \exists \pi \ \varphi\), then by HyperLTL satisfaction:
\[\text{there exists } \trace \in \setTraces: (\wit(\traceAssign_{\setTraces}[\pi \mapsto \trace]), 0) \models \varphi.\]

By Definition~\ref{def:hyperLTL:global:equiv}, \(\Var(\traceAssign_{\setTraces}) = \Var(\wit(\traceAssign_{\setTraces}))\), and thus \(\Var(\traceAssign_{\setTraces}[\pi \mapsto \trace']) =
\Var(\wit(\traceAssign_{\setTraces})) \cup \{\pi\} = \fr(\varphi)\). Then, we can apply the (IH), because \(\setTraces \Equiv_{(|\Var(\varphi)|, \LTLC)} \setTraces'\) with \(\wit\) being one of its witnesses, and  for all \(\trace \in \setTraces\) the \(\wit(\traceAssign_{\setTraces}[\pi \mapsto \trace])\) is an assignment over \(\setTraces'\). So, we get:
\[\text{there exists } \trace \in \setTraces: (\traceAssign_{\setTraces}[\pi \mapsto \trace], 0) \models \varphi.\]
And by HyperLTL satisfaction definition, \((\traceAssign_{\setTraces}, 0) \models \exists \pi \ \varphi\).

\item[We prove: \((\traceAssign_{\setTraces'}, 0) \models \exists \pi \ \varphi\) iff \((\wit^{-1}(\traceAssign_{\setTraces'}), 0) \models \exists \pi\ \varphi\).] \hfill

We start with the \(\Rightarrow\)-direction of the statement. It is analogous to the previous case, up to the point that we infer:
\(\text{there exists } \trace' \in \setTraces': \trace = \wit^{-1}(\trace') \tAnd (\wit^{-1}(\traceAssign_{\setTraces'})[\pi \mapsto\trace], 0) \models \varphi.\)
Then, by \(\wit\) being surjective we get:
\[\text{there exists } \trace \in \setTraces: (\wit^{-1}(\traceAssign_{\setTraces'})[\pi \mapsto\trace], 0) \models \varphi.\]
Hence by HyperLTL satisfaction definition:
\((\wit^{-1}(\traceAssign_{\setTraces'}), 0) \models \exists \pi\ \varphi\).

The \(\Leftarrow\)-direction is analogous to the previous case. \qedhere
\end{description}
\end{description}
\end{proof}

\subsection{Proposition \ref{prop:ltlbox:equiv}}
\emph{For all all two traces \(\trace\)
and~\(\trace'\),
\(\trace \Equiv_{\LTLBox} \trace'\) iff, for all \(\varphi \in \LTLBox\),
\(\trace \models \varphi\) iff \(\trace' \models \varphi\).}

\begin{proof}
Consider arbitrary traces \(\trace\) and \(\trace'\).

\(\Rightarrow:\)
Assume that \(\trace \Equiv_{\LTLBox} \trace'\). 
Consider an arbitrary \(\varphi \in \LTLBox\), so
\(\varphi = \Globally \psi\) where \(\psi\) is a propositional formula.
Then, 
\(\trace \not\models \varphi\) iff there exists  \(i \in \nat\) s.t.\ 
\(\trace[i] \not\models \psi\).
And this is equivalent to, there exists \(\val \in \{\trace[i] \ | \ i \in \nat\}\) s.t.\
\(\val \not\models \psi\).
By \(\trace \Equiv_{\LTLBox} \trace'\), \(\val \in \{\trace[i] \ | \ i \in \nat\}\) iff \(\val \in \{\trace'[j] \ | \ j \in \nat\}\).
Then from an analogous reasoning, the former is equivalent to \(\trace' \not\models \varphi\).

\(\Leftarrow:\) Assume that for all \(\varphi \in \LTLBox\) we have 
\((\star)\) \(\trace \models \varphi\) iff \(\trace' \models \varphi\).
Now, assume towards a contradiction that \(\trace \not\Equiv_{\LTLBox} \trace'\).
Consider first the case that there exists \(i\in \nat\) s.t.\ \(\trace[i] \notin \{\trace'[j] \ | \ j \in \nat\}\). This contradicts \((\star)\), because it entails that there exists \(\varphi \in \LTLBox\) with \(\varphi =\Globally \psi\) s.t.\ 
\(\trace[i] \models \psi\) and for all \(j\in \nat\) \(\trace[j] \not\models \psi\).
The case that there exists \(j\in \nat\) s.t.\ \(\trace'[j] \notin \{\trace'[i] \ | \ i \in \nat\}\) is analogous.
\end{proof}

\subsection{Theorem \ref{thm:time:indist}}
\emph{For all time-prefixed Hypertrace sentences \(\varphi \in \TimeTFOL\) and all sets of traces, \(\setTraces\) and \(\setTraces'\), that are \(|\timeVars(\varphi)|\)-point equivalent, \(\setTraces \Equiv^{\syncS}_{|\timeVars(\varphi)|} \setTraces'\), where  \(|\timeVars(\varphi)|\) is the number of time variables in \(\varphi\), then 
\(\setTraces \models \varphi\) iff \(\setTraces' \models \varphi\).}
\begin{proof}
We evaluate time-prefixed formulas as in FOL with sorts. We denote by \((\traceAssign_{\nat}, \traceAssign_{\setTraces})\) a pair with the assignments for variables over the sort time and the sort of traces, respectively.
Wlog, we can assume that the variables can be identified by the position they are quantified. Then, given a \(k\)-point equivalence and an assignment over time, \(\traceAssign_{\nat}\) that has assignments for the variables \(\timeVars(\traceAssign_{\nat}) = \{\traceVar_1, \ldots \traceVar_n\}\) with \(n \leq k\), then there exists a witness function for the tuple \((\traceVar_1, \ldots, \traceVar_n)\) which we denote by \(\wit_{\traceAssign_{\nat}}\).
We prove the theorem by proving the following lemma first:

\emph{For all time-prefixed Hypertrace formulas \(\varphi \in \TimeTFOL\) and all sets of traces, \(\setTraces\) and \(\setTraces'\), that are \(|\timeVars(\varphi)|\)-point equivalent, \(\setTraces \Equiv^{\syncS}_{|\timeVars(\varphi)|} \setTraces'\), where  \(|\timeVars(\varphi)|\) is the number of time variables in \(\varphi\), then for all time assignments \(\traceAssign_{\nat}\), and for all trace assignments \(\traceAssign_{\setTraces}\) and \(\traceAssign_{\setTraces'}\), which are
over \(\fr(\varphi)\), we have:
\((\traceAssign_{\nat}, \traceAssign_{\setTraces}) \models \varphi\) iff \((\traceAssign_{\nat}, \wit_{\traceAssign_{\nat}}(\traceAssign_{\setTraces})) \models \varphi\); and \((\traceAssign_{\nat}, \traceAssign_{\setTraces'}) \models \varphi\) iff \((\traceAssign_{\nat}, \wit_{\traceAssign_{\nat}}^{-1}(\traceAssign_{\setTraces})) \models \varphi\);}

We prove this lemma by structural induction on time-prefixed formulas.
The induction step for the time prefix part is trivial, as the assignment over time variables in both sides of the implication is the same. For the trace quantifier part, the proof in analogous to Lemma \ref{lemma:exp:open}.
The only difference is the base case, that follows from the definition of \(k\)-point equivalence.
\end{proof}

\subsection{Point Semantics}

\subsubsection{Lemma \ref{lemma:step:settraces_and_prop}}
\(\setTraces^{\pointW}_{n} \in  \setsetTraces^1_{\syncS}\) and 
\({\setTraces'}^{\pointW}_{n} \notin  \setsetTraces^1_{\syncS}\). 

\begin{proof}
\(\setTraces \in \setsetTraces^1_{\syncS}\) iff
for all \(i \in \nat\):
\[\setTraces[i]\! \in\! \{ \M\, |\, \M \models  (x(\traceVar,i) \leftrightarrow x(\traceVar_{\exists},i)) \wedge 
(y(\traceVar',i) \leftrightarrow y(\traceVar_{\exists},i)).\}\]
Consider arbitrary \(n\in \nat\). We represent the valuations in \((x,y)\).

For all \(n+1 < i\), 
\(\setTraces^{\pointW}_{n}[i] = \setTraces'^{\pointW}_{n}[i] = \{00\}\). 

For all \( i < n+1\), 
\(\setTraces^{\pointW}_{n}[i] = \setTraces'^{\pointW}_{n}[i] = \{ 00,01,10,11\}\). 

As \(\setTraces^{\pointW}_{n}[n+1] = \{ 00,01,10,11\}\),
then \(\setTraces^{\pointW}_{n}\in \setsetTraces^1_{\syncS}\).
And, as \(\setTraces'^{\pointW}_{n}[n+1] =\{ 10,00\}\), then 
\(\setTraces^{\pointW}_{n}\notin \setsetTraces^1_{\syncS}\).
\end{proof}

\subsubsection{Theorem \ref{thm:time:point:sync}}

\emph{Consider the following time-prefixed hypertrace formula:
\[
\begin{split}
\varphi^{\sync}_{\text{time}} \overset{\text{def}}{=} & \ \exists j \forall i < j \forall k \leq j \forall \traceVar \forall \traceVar' \exists \traceVar_{\exists} \\
&\ \ \big(
\neg a(\traceVar, i) \wedge \neg a(\traceVar', i)  \wedge
(x(\traceVar,i) \leftrightarrow x(\traceVar_{\exists},i)) \wedge 
(y(\traceVar',i) \leftrightarrow y(\traceVar_{\exists},i))\big) \wedge \\
&\ \ \big(
a(\traceVar, j) \wedge a(\traceVar', j)  \wedge 
(x(\traceVar,k) \leftrightarrow x(\traceVar'_{\exists},k)) \wedge 
(z(\traceVar',k) \leftrightarrow z(\traceVar'_{\exists},k))\big)
\end{split}
\]
Then, \(\generatedSet{\varphi^{\sync}_{\text{time}}} = \setsetTraces^{\sync}_{\syncS}\).}

\begin{proof}
Note that
\(\setTraces \models \varphi^{\sync}_{\text{time}}\) iff
\(\setTraces \models  \exists j \forall i < j  \forall \traceVar \,
\neg a(\traceVar, i) \wedge a(\traceVar, j).\) Thus, it only includes sets of traces with synchronous action.

Additionally, for all \(\setTraces\) that have synchronous action:
\begin{align*}
\setTraces \models \varphi^{\sync}_{\text{time}} \tIff &
\setTraces \models \exists j \forall i < j \forall \traceVar \forall \traceVar' \exists \traceVar_{\exists} \\
& \qquad \big(
\neg a(\traceVar, i) \wedge \neg a(\traceVar', i)  \wedge
(x(\traceVar,i) \leftrightarrow x(\traceVar_{\exists},i)) \wedge 
(y(\traceVar',i) \leftrightarrow y(\traceVar_{\exists},i))\big) \wedge 
a(\traceVar, j) \\
& \tAnd\\
&\setTraces \models \exists j \forall k \leq j \forall \traceVar \forall \traceVar' \exists \traceVar_{\exists} \\
&\qquad  
a(\traceVar, j) \wedge a(\traceVar', j)  \wedge 
(x(\traceVar,k) \leftrightarrow x(\traceVar'_{\exists},k)) \wedge 
(z(\traceVar',k) \leftrightarrow z(\traceVar'_{\exists},k))\big)\\
  \tIff & \setTraces[\DDD a]\models \forall i 
 \forall \traceVar  \forall \traceVar'  \exists \traceVar_{\exists}
\big(\timeDef(\traceVar, i)  \wedge \timeDef(\traceVar', i) \big) \rightarrow\\
&\qquad \big(\timeDef(\traceVar_{\exists}, i) \wedge (x(\traceVar,i) \leftrightarrow x(\traceVar_{\exists},i)) \wedge 
(y(\traceVar',i) \leftrightarrow y(\traceVar_{\exists},i))\\
& \tAnd\\
& \setTraces[a\DDD]\models \forall k 
 \forall \traceVar  \forall \traceVar'  \exists \traceVar_{\exists}
\big(\timeDef(\traceVar, k)  \wedge \timeDef(\traceVar', k) \big) \rightarrow\\
&\qquad \big(\timeDef(\traceVar_{\exists}, k) \wedge (x(\traceVar,k) \leftrightarrow x(\traceVar_{\exists},k)) \wedge 
(z(\traceVar',k) \leftrightarrow z(\traceVar_{\exists},k)).
\end{align*}
\end{proof}

\subsubsection{Lemma \ref{lemma:step:settraces:equiv}} 
\emph{For all \(n\in \nat\),
\(\setTraces^{\pointW}_{n} \Equiv_{(n, \LTLBox)} {\setTraces'}^{\pointW}_{n}\).}

\begin{proof}
Consider arbitrary \(n \in \nat\).
We define the witness function \(\wit_n: \setTraces^{\pointW}_{n} \rightarrow \setTraces'^{\pointW}_{n}\) below:
\begin{equation*}
\wit_n(\trace) = 
\begin{cases}
(00)^{n}\ 10\ 00^\omega & \tIf \trace = (00)^{n}\ 10\ 10\ 00^\omega\\
(00)^{n}\ 01 \ 00^\omega
& \tIf \trace = (00)^{n}\ 01\ 01\ 00^\omega\\
\trace & \text{ otherwise.}
\end{cases}
\end{equation*}
Clearly, this function is both bijective and total.

By definition of \(\wit_n\) and \({\setTraces}^{\pointW}_n\), then for all assignments of size \(n\) over it, \(\traceAssign_{{\setTraces}^{\pointW}_n}\):
\begin{enumerate}[(a)]
\item for all \(i \neq n+1\), 
\(\flatT{\traceAssign_{{\setTraces}^{\pointW}_n}}[i] = \flatT{\wit_n(\traceAssign_{{\setTraces}^{\pointW}_n})}[i]\); and

\item for all \(\traceVar \in \Var\), if  
\(\traceAssign_{{\setTraces}^{\pointW}_n}(\traceVar) \notin \{(00)^{n}\ 10\ 10\ 00^\omega,\) \((00)^{n}\ 01\ 01\ 00^\omega\}\), then 
\(\traceAssign_{{\setTraces}^{\pointW}_n}(\traceVar) = \wit_n(\traceAssign_{{\setTraces}^{\pointW}_n}(\traceVar))\).
\end{enumerate}
Analogously for all assignments over \({\setTraces'}^{\pointW}_n\) of size \(n\) and \(\wit^{-1}\).

It follows from the definition of \({\setTraces}^{\pointW}_n\), that for all assignments of size \(m < n\) over \({\setTraces}^{\pointW}_n\), \(\traceAssign_{{\setTraces}^{\pointW}_n}^{m}\), there exists \(0 \leq k < n\) s.t.\ 
for all \(\traceVar \in \Var\), \(\traceAssign_{{\setTraces}^{\pointW}_n}^{m}(\traceVar) \notin \{(00)^k\ 10 \ 00^\omega, (00)^k\ 01\ 00^\omega\}\). Then, \((\dagger)\)  \(\traceAssign_{{\setTraces}^{\pointW}_n}^{m}[k] \in \{\{11\}, \{00\}, \{11, 00\}\}\), because the only way to get valuations \(10\) and \(10\) at time \(k\) is with the missing traces. 

Consider arbitrary \(n\in \nat\) and  assignment over \({\setTraces}^{\pointW}_n\) of size \(n\), \(\traceAssign_{{\setTraces}^{\pointW}_n}\).
If for all \(\traceVar \in \Var\), \(\traceAssign_{{\setTraces}^{\pointW}_n}(\traceVar) \notin \{(00)^{n}\ 10\ 10\ 00^\omega,\) \((00)^{n}\ 01\ 01\ 00^\omega\}\), then by (b), for all \(i\in \nat\), \(\flatT{\wit_n(\traceAssign_{{\setTraces}^{\pointW}_n})}[i] = \flatT{\traceAssign_{{\setTraces}^{\pointW}_n}}[i]\).
Now we assume that there exists 
 \(Y = \{\traceVar_0, \ldots, \traceVar_l\}\), with \(0 \leq l < n\) s.t.\ \(\traceAssign_{{\setTraces}^{\pointW}_n}(\traceVar) \in \{(00)^{n}\ 10\ 10\ 00^\omega,\) \((00)^{n}\ 01\ 01\ 00^\omega\}\), with \(\traceVar \in Y\), and
for all \(\traceVar \notin Y\),  \(\traceAssign_{{\setTraces}^{\pointW}_n}(\traceVar) \notin \{(00)^{n}\ 10\ 10\ 00^\omega,\) \((00)^{n}\ 01\ 01\ 00^\omega\}\).
We can prove from \((\dagger)\) that  there exists \(k\) s.t.\
\(\flatT{\traceAssign_{{\setTraces}^{\pointW}_n}|_Y}[n+1] = \flatT{\traceAssign_{{\setTraces}^{\pointW}_n}|_Y}[k]\), where \(\traceAssign_{{\setTraces}^{\pointW}_n}|_Y\) is  \(\traceAssign_{{\setTraces}^{\pointW}_n}\) without the assignments to the variables in \(Y\). Moreover, it follows as well, that for all \(\traceVar \in Y\), 
\(\flatT{\traceAssign_{{\setTraces}^{\pointW}_n}}[k] = 00\), and so
there exists \(k\) s.t.\ 
\(\flatT{\wit_n(\traceAssign_{{\setTraces}^{\pointW}_n})}[n+1] = \flatT{\traceAssign_{{\setTraces}^{\pointW}_n}}[k]\).
Thus, \((\star)\) for all assignments over \({\setTraces}^{\pointW}_n\) of size \(n\), \(\traceAssign_{{\setTraces}^{\pointW}_n}\) and all \(i\in \nat\) there exists \(j\in \nat\) s.t.\ \(\flatT{\wit_n(\traceAssign_{{\setTraces}^{\pointW}_n})}[i] = \flatT{\traceAssign_{{\setTraces}^{\pointW}_n}}[j]\).

By \(\flatT{\traceAssign_{{\setTraces}^{\pointW}_n}}[n+1] = \flatT{\wit_n(\traceAssign_{{\setTraces}^{\pointW}_n})}[n]\) and (a), then \((\star\star)\) for all assignments of size \(n\) over \({\setTraces}^{\pointW}_n\), \(\traceAssign_{{\setTraces}^{\pointW}_n}\) and for all \(i\in \nat\) there exists \(j\in \nat\) s.t.\ 
\(\flatT{\traceAssign_{{\setTraces}^{\pointW}_n}}[i] = \flatT{\wit_n(\traceAssign_{{\setTraces}^{\pointW}_n})}[j]\).

By \((\star)\)  and \((\star\star)\), for all assignments of size \(n\) over \({\setTraces}^{\pointW}_n\), \(\traceAssign_{{\setTraces}^{\pointW}_n}\), 
we have 
\(\flatT{\traceAssign_{{\setTraces}^{\pointW}_n}}[i] \Equiv_{\LTLBox} \flatT{\wit_n(\traceAssign_{{\setTraces}^{\pointW}_n})}[j]\).

We prove analogously that 
for all assignments of size \(n\) over \({\setTraces'}^{\pointW}_n\), \(\traceAssign_{{\setTraces'}^{\pointW}_n}\), 
we have 
\(\flatT{\traceAssign_{{\setTraces'}^{\pointW}_n}}[i] \Equiv_{\LTLBox} \flatT{\wit^{-1}_n(\traceAssign_{{\setTraces'}^{\pointW}_n})}[j]\).
\end{proof}

\subsubsection{Lemma \ref{lemma:step:settraces:equiv}}
\emph{For all \(n\in \nat\),
\(\setTraces^{\pointW}_{n} \Equiv_{(n, \LTLBox)} {\setTraces'}^{\pointW}_{n}\).}

\begin{proof}
Consider arbitrary \(n \in \nat\).
We define the witness function \(\wit_n: \setTraces^{\pointW}_{n} \rightarrow \setTraces'^{\pointW}_{n}\) below:
\begin{equation*}
\wit_n(\trace) = 
\begin{cases}
(00)^{n}\ 10\  (00)^\omega & \tIf \trace = (00)^{n}\ 10\ 10\  (00)^\omega\\
(00)^{n}\ 01 \  (00)^\omega
& \tIf \trace = (00)^{n}\ 01\ 01\  (00)^\omega\\
\trace & \text{ otherwise.}
\end{cases}
\end{equation*}
Clearly, this function is both bijective and total.

We need to prove that for all assignments over \(\setTraces^{\pointW}_{n}\) and \({\setTraces'}^{\pointW}_{n}\)  of size \(k\), \(\traceAssign_{\setTraces^{\pointW}_{n}}\) and \(\traceAssign_{{\setTraces'}^{\pointW}_{n}}\), we have:
\({\flatT{\traceAssign_{\setTraces^{\pointW}_{n}}} \Equiv_{\LTLBox} \flatT{\wit(\traceAssign_{\setTraces^{\pointW}_{n}})}}\) and
\(\flatT{\traceAssign_{{\setTraces'}^{\pointW}_{n}}} \Equiv_{\LTLBox} \flatT{\wit^{-1}(\traceAssign_{{\setTraces'}^{\pointW}_{n}})}\).
Recall that \({\flatT{\traceAssign_{\setTraces^{\pointW}_{n}}} \Equiv_{\LTLBox} \flatT{\wit(\traceAssign_{\setTraces^{\pointW}_{n}})}}\) iff
\(\{\flatT{\traceAssign_{\setTraces^{\pointW}_{n}}}[i] \ | \ i \in \nat\} = \{\flatT{\wit(\traceAssign_{\setTraces^{\pointW}_{n}})}[j] \ | \ j \in \nat\}\).

%By definition of \(\wit_n\) and \({\setTraces}^{\pointW}_n\), then for all assignments of size \(n\) over it, \(\traceAssign_{{\setTraces}^{\pointW}_n}\):
%\begin{enumerate}[label=(\alph*)]
%\item for all \(i \neq n+1\), 
%\(\flatT{\traceAssign_{{\setTraces}^{\pointW}_n}}[i] = \flatT{\wit_n(\traceAssign_{{\setTraces}^{\pointW}_n})}[i]\); and

%\item for all \(\traceVar \in \Var\), if  
%\(\traceAssign_{{\setTraces}^{\pointW}_n}(\traceVar) \notin \{(00)^{n}\ 10\ 10\  (00)^\omega,\) \((00)^{n}\ 01\ 01\  (00)^\omega\}\), then 
%\(\traceAssign_{{\setTraces}^{\pointW}_n}(\traceVar) = \wit_n(\traceAssign_{{\setTraces}^{\pointW}_n}(\traceVar))\).
%\end{enumerate}
%Analogously for all assignments over \({\setTraces'}^{\pointW}_n\) of size \(n\) and \(\wit^{-1}\).

The interesting case is the time \(n+1\) for assignments to the traces that are different in \({\setTraces}^{\pointW}_n\) and \({\setTraces'}^{\pointW}_n\). 
We show below that for all assignment over \({\setTraces}^{\pointW}_n\) of size \(n\), there exists a time \(k\) that has the same valuations in the flattened assignment at time \(n+1\).

Consider an
arbitrary \(n\in \nat\) and  assignment over \({\setTraces}^{\pointW}_n\) of size \(n\), \(\traceAssign_{{\setTraces}^{\pointW}_n}\).
Assume there exists
 \(Y = \{\traceVar_0, \ldots, \traceVar_l\}\), with \(0 \leq l < n\) s.t.\ \(\traceAssign_{{\setTraces}^{\pointW}_n}(\traceVar) \in \{(00)^{n}\ 10\ 10\  (00)^\omega,\) \((00)^{n}\ 01\ 01\  (00)^\omega\}\), with \(\traceVar \in Y\), and
for all \(\traceVar \notin Y\),  \(\traceAssign_{{\setTraces}^{\pointW}_n}(\traceVar) \notin \{(00)^{n}\ 10\ 10\  (00)^\omega,\) \((00)^{n}\ 01\ 01\  (00)^\omega\}\).

For all assignments of size \(m < n\) over \({\setTraces}^{\pointW}_n\), \(\traceAssign_{{\setTraces}^{\pointW}_n}^{m}\), there exists \(0 \leq k < n\) s.t.\ 
for all \(\traceVar \in \Var\), \(\traceAssign_{{\setTraces}^{\pointW}_n}^{m}(\traceVar) \notin \{(00)^k\ 10 \ (00)^\omega, (00)^k\ 01\ (00)^\omega\}\). Note that we have \(n\) possible combinations for such pairings. Then, \((\bullet)\)  \(\traceAssign_{{\setTraces}^{\pointW}_n}^{m}[k] \in \{\{11\}, \{00\}, \{11, 00\}\}\), because the only way to get valuations \(10\) and \(10\) at time \(k\) is with the missing traces. 
Then,  there exists \(k\) s.t.\
\(\flatT{\traceAssign_{{\setTraces}^{\pointW}_n}|_Y}[n+1] = \flatT{\traceAssign_{{\setTraces}^{\pointW}_n}|_Y}[k]\), where \(\traceAssign_{{\setTraces}^{\pointW}_n}|_Y\) is  \(\traceAssign_{{\setTraces}^{\pointW}_n}\) without the assignments to the variables in \(Y\). Moreover, it follows as well, that for all \(\traceVar \in Y\), 
\(\flatT{\traceAssign_{{\setTraces}^{\pointW}_n}}[k] = 00\). So, 
there exists \(k\) s.t.\ 
\(\flatT{\wit_n(\traceAssign_{{\setTraces}^{\pointW}_n})}[n+1] = \flatT{\traceAssign_{{\setTraces}^{\pointW}_n}}[k]\).
Thus, \((\star)\) for all assignments over \({\setTraces}^{\pointW}_n\) of size \(n\), \(\traceAssign_{{\setTraces}^{\pointW}_n}\) and all \(i\in \nat\) there exists \(j\in \nat\) s.t.\ \(\flatT{\wit_n(\traceAssign_{{\setTraces}^{\pointW}_n})}[i] = \flatT{\traceAssign_{{\setTraces}^{\pointW}_n}}[j]\).
\end{proof}

\subsection{Segment Semantics}

\subsubsection{Theorem \ref{thm:sync:sync:overall}}

\emph{Consider the following HyperLTL formula: 
\[\varphi^{\sync}_{\syncO} \overset{\text{def}}{=}
\forall \traceVar \forall \traceVar' \exists \traceVar_{\exists} \exists \traceVar'_{\exists}\, 
(\neg a_{\traceVar} \wedge \neg a_{\traceVar'} \wedge x_{\traceVar} = x_{\traceVar{\exists}} \wedge y_{\traceVar'} = y_{\traceVar_{\exists}}\!) \Until
(a_{\traceVar} \wedge a_{\traceVar'} \wedge \Box (x_{\traceVar} = x_{\traceVar_{\exists}} \wedge z_{\traceVar'} = z_{\traceVar'_{\exists}}))\]
Then, \(\generatedSet{\varphi^{\sync}_{\syncO}} =   \setsetTraces^{\sync}_{\syncO}\).}

\begin{proof}
Note that \(x(\traceVar,i) \leftrightarrow x(\traceVar_{\exists},i)\)
in Hypertrace Logic corresponds to \(x_{\traceVar} = x_{\traceVar_{\exists}}\) in HyperLTL.
By Definition~\ref{def:prop:two_state_indep}, \(\setTraces \in \setsetTraces^{\sync}_{\syncO}\) iff:
%\begin{enumerate*}[label=(\roman*)]
%\item 
(i) \({\setTraces \models \exists i\ \forall \traceVar\ \text{min}(\trace,a,i)}\);
%\item 
(ii) \(\setTraces[\ldots a] \models \ind{x}{y}{\syncO}\); and  
%\item 
(iii) \(\setTraces[a \ldots] \models \ind{x}{z}{\syncO}\).
%\end{enumerate*}
Then, by HyperLTL satisfaction, for all set of traces~\(\setTraces\):
\begin{align*}
&\setTraces\! \models \varphi^{\sync}_{\syncO} \tIff \\
&\setTraces\! \models\!\! \forall \traceVar \forall \traceVar'  
(\neg a_{\traceVar} \wedge \neg a_{\traceVar'} ) \Until(a_{\traceVar} \wedge a_{\traceVar'}),\\
&\setTraces\! \models \!\! \forall \traceVar \forall \traceVar' \exists \traceVar_{\exists}\, 
(\neg a_{\traceVar} \wedge \neg a_{\traceVar'} \wedge x_{\traceVar} = x_{\traceVar_{\exists}} \wedge y_{\traceVar'} = y_{\traceVar_{\exists}}) \Until
(a_{\traceVar} \wedge a_{\traceVar'}), \tAnd\\
&\setTraces\! \models\!\! \forall \traceVar \forall \traceVar' \exists \traceVar'_{\exists}\, 
(\neg a_{\traceVar} \wedge \neg a_{\traceVar'}) \Until\
(a_{\traceVar} \wedge a_{\traceVar'} \wedge \Box (x_{\traceVar} = x_{\traceVar_{\exists}} \wedge z_{\traceVar'} = z_{\traceVar'_{\exists}})).
\end{align*}
\noindent We can prove, by satisfaction for HyperLTL and Hypertrace Logic formulas, that:
\begin{align*}
&\setTraces\! \models_{H} \forall \traceVar \forall \traceVar'  
(\neg a_{\traceVar} \wedge \neg a_{\traceVar'} ) \Until(a_{\traceVar} \wedge a_{\traceVar'}) \tIff \setTraces\! \models \exists i\, \forall \traceVar\
a(\traceVar,i)  \wedge
\forall 0 \leq j < i\ 
\neg a(\traceVar,j)
\end{align*}
Hence  \(\generatedSet{\varphi^{\sync}_{\syncO}} =   \setsetTraces^{\sync}_{\syncO}\).
\end{proof}

\subsubsection{Lemma \ref{lemma:async:sync}}

\(\setTraces_n^{\async} \in \setsetTraces^{\async}_{\syncO}\),  \({\setTraces'}_n^{\async} \not\in \setsetTraces^{\async}_{\syncS}\) and
\({\setTraces'}_n^{\async}|_{a} \not\in \setsetTraces^{\novisible}_{\syncS}\).
\begin{proof}
We start by proving \(\setTraces_n^{\async} \in \setsetTraces^{\async}_{\syncO}\).

First, we prove that \(\setTraces_n^{\async}[\ldots a] \models \ind{x}{y}{\syncO}\).
By definition of slicing of sets of traces:
\begin{align*}
\setTraces_n^{\async}[\ldots a] = \{ 0000, 0010, (0000)^{n+4}, (0010)^{n+4}\}.
\end{align*}
Then, by Definition~\ref{def:independence}, 
 \(\setTraces_n^{\async}[\ldots a] \models \ind{x}{y}{\syncO}\) holds because we can choose 
 \(\pi_\exists = \pi'\).
%\begin{align*}
%&\forall \trace \in \{ 0000, 0010, (0000)^{n+4}, (0010)^{n+4}\} \ \forall \trace' \in \{ 0000, 0010, (0000)^{n+4}, (0010)^{n+4}\}\\
%& \exists \trace_{\exists} \in \{ 0000, 0010, (0000)^{n+4}, (0010)^{n+4}\}:  \tIf |\trace| = |\trace'|,
%\tthen \trace(x) = \trace_{\exists}(x) \tAnd \trace'(y) = \trace_{\exists}(y).
%\end{align*}
%This is equivalent to:
%\begin{align*}
%&\forall \trace \in \{ 0000, 0010\} \ \forall \trace' \in \{ 0000, 0010, \}\\
%& \exists \trace_{\exists} \in \{ 0000, 0010, (0000)^{n+4}, (0010)^{n+4}\}: \trace(x) = \trace_{\exists}(x) \tAnd \trace'(y) = \trace_{\exists}(y);\\\tAnd\\
%&\forall \trace \in \{(0000)^{n+4}, (0010)^{n+4}\} \ \forall \trace' \in \{ (0000)^{n+4}, (0010)^{n+4}\}\\
%& \exists \trace_{\exists} \in \{ 0000, 0010, (0000)^{n+4}, (0010)^{n+4}\}: \trace(x) = \trace_{\exists}(x) \tAnd \trace'(y) = \trace_{\exists}(y).
%\end{align*}
%Let  \(\pi_\exists = \pi'\), then the property is satisfied.

Now, we prove that \(\setTraces_n^{\async}[a \ldots] \models \ind{x}{z}{\syncO}\).
By definition of slicing of sets of traces, 
\(\setTraces_n^{\async}[a \ldots] = \{ \trace_0 (1000)^{\omega}, \trace_1 (1000)^{\omega}, \trace_0 (1110)^{\omega}, \trace_1\ (1000)^{n+4}\ (1110)^{\omega}\}\), where \(\trace_0\) and \(\trace_1\) are as in Definition~\ref{def:models:async_action}.
Then, as in the previous case, we can choose  \(\pi_\exists = \pi'\) to show that
 \(\setTraces_n^{\async}[\ldots a] \models \ind{x}{z}{\syncO}\) holds.
%\begin{align*}
%&\forall \trace \in \{\trace_0, \trace_1\} \ \forall \trace' \in \{\trace_0, \trace_1\}\
%\exists \trace_{\exists} \in \{\trace_0, \trace_1\}:  \\
%& \qquad \tIf |\trace| = |\trace'|,
%\tthen \trace(x) = \trace_{\exists}(x) \tAnd \trace'(y) = \trace_{\exists}(y).
%\end{align*}

We prove now that \({\setTraces'}_n^{\async} \not\in \setsetTraces^{\async}_{\syncS}\).

We show that
 \({\setTraces'}_n^{\async}[\ldots a] \not\models \ind{x}{z}{\syncS}\).
By Definition~\ref{def:models:async_action} and definition of slicing:
\begin{align*}
{t'}_1[a\ldots] &= t_1[1] t_1[2] \ldots t_1[2n+10] t_1[2n+12] \ldots \\
                &= \tau_1[0] \tau_1[1] \ldots \tau_1[2n+9] \tau_1[2n+11] \ldots \\
 				&= {t'}_2[a\ldots]\\
{t'}_3[a\ldots] &= t_3[n+4] t_3[n+5] \ldots t_3[2n+10] t_3[2n+12] \ldots\\
				&= \tau_0[0] \tau_1[1] \ldots \tau_0[n+6] \tau_0[n+8] \ldots \\
 				&= {t'}_4[a\ldots]
\end{align*}
Note that, \(2n+10-(n+4) = n+6\).

Note that \((\star)\) \(({t'}_1[a\ldots])[2n+9] = ({t'}_2[a\ldots])[2n+9] = 1000\)
and \(({t'}_3[a\ldots])[2n+9] = ({t'}_4[a\ldots])[2n+9] = 1111\).
If we chose \(i=2n+9\), \(\pi = {t'}_3\) and \(\pi' = {t'}_1\), then
there should exist a trace \(t_{\exists} \in {\setTraces'}_n^{\async}\) s.t.\ 
\(({t}_{\exists}[a\ldots])[2n+9](x) = ({t'}_3[a\ldots])[2n+9](x) = 1\) and
\(({t}_{\exists}[a\ldots])[2n+9](z) = ({t'}_1[a\ldots])[2n+9](z) = 0\).
However, by \((\star)\) we know that there is not such trace in \({\setTraces'}_n^{\async}\).
Hence \({\setTraces'}_n^{\async} \not\in \setsetTraces^{\async}_{\syncS}\).

The set of set of traces \({\setTraces'}_n^{\async}|_{a}\) is the set \({\setTraces'}_n^{\async}\) where all valuations of \(a\) are removed. 
We need to prove that there is no extension of \({\setTraces'}_n^{\async}|_{a}\) with (possibly new) valuations in \(a\) that makes it an element of \(\setsetTraces^{\novisible}_{\syncS}\).
We will abstract the extension of \({\setTraces'}_n^{\async}|_{a}\) by defining a function
\({g: {\setTraces'}_n^{\async}|_{a} \rightarrow \nat}\) that given a set of traces in \({\setTraces'}_n^{\async}|_{a}\) returns the index where \(a\) first holds. We then redefine the slicing operator to slice w.r.t.\ this function, as follows:
\(\setTraces[\ldots g] = \{ \trace[\ldots g(\trace)]\ |\ \trace \in \setTraces \}\).

We refer to the elements of \({\setTraces'}_n^{\async}|_{a}\) by the same names as in the definition of \({\setTraces'}_n^{\async}\).
By construction of \({\setTraces'}_n^{\async}|_{a}\), the function \(g\) needs to guarantee the following conditions for \(\setTraces[\ldots g] \models \ind{x}{y}{\pointW}\) to hold:
\(g(t'_1) \leq 5n+23\) and  \(g(t'_3) \leq 5n+23\), because \(t'_1[5n+23] = 000 = t'_3[5n+23]\) and \(t'_2[4n+23] = 111 = t'_4[5n+23]\). So, we are missing valuations \(10\) and \(01\) in \((x,y)\), to prove the independence of \(y\) w.r.t.\ \(x\).

If \(g(t'_1) = g(t'_2) = g(t'_3) =g(t'_4) =1\), then 
\(\setTraces[g\ldots] \not\models \ind{x}{z}{\pointW}\), because 
\({\setTraces'}_n^{\async}|_{a}[1] = \{000,010,111\}\) and so we are missing the valuation \(01\) in \((x,z)\).

We proceed by case analysis.

\textbf{Case \(g(t'_3) = g(t'_4) = 1\):}
We show below the first \(n+4\) steps of the slice of \(t'_3\) and \(t'_4\):
\begin{align*}
&\trace_3'[1 \ldots n+4] =  
(000)^{n+3}\ 110\\
&\trace_4'[1 \ldots n+4] =  
(010)^{n+3}\ 110
\end{align*}
To find a compatible slicing of \(t'_1\) and \(t'_2\) we need it to satisfy the following:
\begin{itemize}
    \item for the first \(n+3\) we can only have the valuation \(00\) in \((x,z)\), as there is no time point where we can get at the same time \(10\) and \(11\);
    \item at the \(n+4\) we cannot have \(01\) as it is not possible with only one trace left cover all the valuations missing (\(00\) and \(11\)).
\end{itemize}

Then, the time \(n+7\) is the only slicing of \(t'_1\) and \(t'_2\) that satisfies this conditions and guarantees that
\(x\) is independent of \(z\) for the first \(n+4\) elements of the slicing suffix, as we show below:
\begin{align*}
&\trace_1'[n+7 \ldots 2n+10] =  
(000)^{n+3}\ 110\\
&\trace_2'[n+7 \ldots 2n+10] =  
(000)^{n+3}\ 110\\
&\trace_3'[1 \ldots n+4] =  
(000)^{n+3}\ 110\\
&\trace_4'[1 \ldots n+4] =  
(010)^{n+3}\ 110
\end{align*}

However, if \(g(t'_1) = g(t'_2) = n+7\), then
\(\trace'_1[4n+20] = \trace'_2[4n+20] = 000\) while
\(\trace'_3[3n+13] = \trace'_4[3n+13] = 111\).
So, we are missing valuations \(01\) and \(10\) in \((x,z)\).
Hence, for \(g(t'_3) = g(t'_4) = n+7\), 
\(\setTraces[g\ldots] \not\models \ind{x}{z}{\pointW}\).
So, \(g(t'_3) = g(t'_4) > 1\).

\textbf{Case \(g(t'_1) = g(t'_2) > 1\):}
As \(g(t'_3) = g(t'_4) > 1\), then the prefix of a slicing with \(g(t'_1) = g(t'_2) > 1\) does not satisfy \(\ind{x}{y}{\pointW}\).
Note that \(t'_1[1] = t'_2[1] = 111\) while \(t'_3[1] = 000\) and  \(t'_4[1] = 010\), so we are missing the valuation \(10\) in \((x,y)\).
Hence \(g(t'_1) = g(t'_2) = 1\).

\textbf{Case \(1< g(t'_3) \leq 5n+23\) and \(1< g(t'_4) \leq 5n+23\):}
If \(g(t'_3) < n+4\) and \(g(t'_4)<n+4\), then we will be missing the assignment \(01\) in \((x,z)\). If \(g(t'_3) = n+4 = g(t'_4)\), then we know from the case with visible action that the property does not hold.
If \(n+4 < g(t'_3) < 2n+9\) and \(n+4 < g(t'_4) <2n+9\), then we will be missing the assignment \(01\) in \((x,z)\). 
If either \(g(t'_4)=2n+9\), then \(g(t'_4)[2n+9...(2n+9)+3n+12]=111\) while
\(g(t'_1)[1...3n+13] = 000\) and we will be missing the assignment \(10\) on \((x,z)\). Then, \(g(t'_3)\neq2n+9\) because the suffix of the trace starts with \(000\), so there will be not enough traces to cover for observation \(111\). The same reasoning holds for the next 3 positions. 
The next \(2n+9\) positions cover the deleted letter from \(t_1'\) and \(t_2'\), while the deleted letter from \(t_3'\) and \(t_4'\) happens in a earlier part of the trace. So, the position \(2n+11\) of  \(t_1'\) and \(t_2'\), with assignment \(110\), will miss the assignment \(11\) on \((x,z)\). Note that at that point in the slice of \(t_3'\) and \(t_4'\) \(z\) is constantly \(1\).
\end{proof}

\subsubsection{Lemma \ref{lemma:nredundant:traces_aync}. } 
\emph{For all assignments \(\traceAssign\) over \(\setTraces_n^{\async}\), the valuation at \(2n+11\) is \(n\)-redundant in the trace \(\flatT{\traceAssign}\).}

\begin{proof}
We prove this by induction on the size of trace assignments \(\traceAssign\) over \(\setTraces_n^{\async}\).
\begin{description}
\item[Base case \(|\traceAssign| = 1\):] %\(\setTraces_n \Equiv_{(1,\LTLX^n)} \setTraces_n'\):]
Wlog, let \(\Var(\traceAssign) = \{\pi\}\) for some \(\pi \in \Var\).

If \(\traceAssign(\pi) \in \{t_1, t_2\}\), then at \(2n+11\) we have the block \((1001)^{n+4}\). 
Hence \(t_1[2n+11] = t_1[2n+11+j]\) for all \(1 \leq j \leq n+1\).
%The trace \(t_1'\) is the same as \(t_1\) except for \(t_1[2n+10]\) that is deleted.
%Then, \(t_1 \Equiv_{\LTLX^n} t_1'\).

If \(\traceAssign(\pi) \in \{t_3, t_4\}\), then at \(2n+10\) we have the block \((1000)^{n+4}\). 
Hence \(t_2[2n+11] = t_1[2n+11+j]\) for all \(1 \leq j \leq n+1\).
%The trace \(t_2'\) is the same as \(t_2\) except for \(t_1[2n+10]\) that is deleted.
%Then, \(t_2 \Equiv_{\LTLX^n} t_2'\).
%
%
%Hence for all assignments \(\traceAssign\)  over \(\setTraces\) of size \(1\), 
%\(\flatT{\traceAssign} \Equiv_{(1,\LTLX^n)} \flatT{\wit(\traceAssign)}\);
%and for all assignments \(\traceAssign'\) over \(\setTraces'\) of size \(1\), 
%\(\flatT{\traceAssign'} \Equiv_{(1,\LTLX^n)} \flatT{\wit^{-1}(\traceAssign')}\).

\item[Inductive case:] Assume as induction hypothesis (IH) that the statement holds for all assignments of size \(k\).

Consider an arbitrary assignment \(\traceAssign_{k+1}\) with size \(k+1\)
Then, there exists an assignment \(\traceAssign_{k}\) with size \(k\)
s.t.\ 
\(\traceAssign_{k+1} = \traceAssign_{k}[\pi \mapsto \trace]\) and \(\traceAssign_{k}(\pi)\) is undefined, for some
\(\pi \in \Var\) and \(\trace \in \setTraces_n^{\async}\).
By (IH), the valuation at position \(2n+11\) in \(\flatT{\traceAssign_{k}}\) is \(n\)-redundant. As argued in the base case, the letter at position \(2n+11\) for all 
\(\trace \in \setTraces_n^{\async}\) is \(n\)-redundant, as well.

As \(\traceAssign_{k}(\pi)\) is undefined, then \(\flatT{\traceAssign_{n+1}[\pi \mapsto \trace]} = \flatT{\traceAssign_{n}}\otimes \flatT{\emptyAssign[\pi \mapsto \trace]}\) where \(\otimes\) is the composition of traces.
Then, by the \(2n+1\) letter being \(n\)-redundant in both \(\flatT{\traceAssign_{n}}\) and \(\trace\), it follows that
the letter at \(2n+11\) in \(\flatT{\traceAssign_{n+1}[\pi \mapsto \trace]}\) is \(n\)-redundant,as well.
\qedhere
\end{description}
\end{proof}

\end{document}